\newcommand{\refcheckize}[1]{%
  \expandafter\let\csname @@\string#1\endcsname#1%
  \expandafter\DeclareRobustCommand\csname relax\string#1\endcsname[1]{%
    \csname @@\string#1\endcsname{##1}\wrtusdrf{##1}}%
  \expandafter\let\expandafter#1\csname relax\string#1\endcsname
}
\newcommand{\CTM}{\approx_{CT}}
\newcommand{\nCTM}{\not\approx_{CT}}
\newcommand{\tauCTM}{\overset{\tau}{\approx}_{CT}}
\newcommand{\misCTM}{\overset{\textsc{MIS}}{\approx}_{CT}}
\newcommand{\insCTM}{\overset{\textsc{INS}}{\approx}_{CT}}
\newcommand{\delCTM}{\overset{\textsc{DEL}}{\approx}_{CT}}
\newcommand{\PDD}{\protect\overrightarrow{PD}}
\newcommand{\PDG}{\protect\overleftarrow{PD}} 
\newcommand{\SN}{SN}
\newcommand{\SNRepres}{{\em Skipped-number} representation }
\newcommand{\LN}{LN} 
\newcommand{\LND}{\protect\overrightarrow{LN}} 
\newcommand{\LNG}{\protect\overleftarrow{LN}}  
\newcommand{\aD}{\protect\overrightarrow{a_x}}
\newcommand{\bD}{\protect\overrightarrow{b_x}}
\newcommand{\cD}{\protect\overrightarrow{a_y}}
\newcommand{\dD}{\protect\overrightarrow{b_y}}
\newcommand{\aG}{\protect\overleftarrow{a_x}}
\newcommand{\bG}{\protect\overleftarrow{b_x}}
\newcommand{\cG}{\protect\overleftarrow{a_y}}
\newcommand{\dG}{\protect\overleftarrow{b_y}}
\renewcommand{\ng}{\operatorname{NG}} 
\newcommand{\ngi}{\operatorname{ng}}  
\newcommand{\SGraph}{\mathcal{G}}     
\newcommand{\rb}{\operatorname{rb}}  
\newcommand{\rbs}{\operatorname{rbs}}  
\newcommand{\lb}{\operatorname{lb}}  
\newcommand{\RMP}{\operatorname{RB}} 
\newcommand{\LMP}{\operatorname{LB}} 
\newcommand{\leftsb}{\operatorname{left}}   
\newcommand{\rightsb}{\operatorname{right}} 
\newcommand{\REFD}[1]{\protect\operatorname{ref}_{#1}} 
\newcommand{\notREF}{RD}
\newcommand{\lcp}{\operatorname{lcp}} 
\newcommand{\lcs}{\operatorname{lcs}} 
\newcommand{\dIsEmpty}{\algoName{isEmpty}}
\newcommand{\dBack}{\algoName{back}}
\newcommand{\dPopBack}{\algoName{popBack}}
\newcommand{\dFront}{\algoName{front}}
\newcommand{\dPopFront}{\algoName{popFront}}
\newcommand{\dPushFront}{\algoName{pushFront}}
\newcommand{\algoName}[1]{\textsc{#1}}
\newcommand{\eqTest}{\algoName{equivalenceTest}} 
\newcommandx{\RQunsure}[2][1=]{\todo[linecolor=red,backgroundcolor=red!25,bordercolor=red,#1]{#2}}
\newcommandx{\RQchange}[2][1=]{\todo[linecolor=blue,backgroundcolor=blue!25,bordercolor=blue,#1]{#2}}
\newcommandx{\RQinfo}[2][1=]{\todo[linecolor=OliveGreen,backgroundcolor=OliveGreen!25,bordercolor=OliveGreen,#1]{#2}}
\newcommandx{\RQimprovement}[2][1=]{\todo[linecolor=Plum,backgroundcolor=Plum!25,bordercolor=Plum,#1]{#2}}
\newdefinition{definition}{Definition}
\newdefinition{example}{Example}
\newtheorem{thm}{Theorem}
\newtheorem{lemma}[thm]{Lemma}
\newtheorem{namedlemma}[thm]{Lemma}
\newtheorem{remark}[thm]{Remark}
\newtheorem{coro}[thm]{Corollary}
\crefname{lemma}{lemma}{lemmas}
\Crefname{lemma}{Lemma}{Lemmas}
\crefname{thm}{theorem}{theorems}
\Crefname{thm}{Theorem}{Theorems}
\begin{document}

\title{Approximate Cartesian Tree Matching with One Difference\tnoteref{t1}}
\tnotetext[t1]{A previous version of this paper was presented at the 30th edition of the Symposium on String Processing and Information Retrieval (SPIRE 2023).}

\author[1,3]{Bastien Auvray\texorpdfstring{\corref{cor1}}}
\ead{bastien.auvray@univ-rouen.fr}

\author[2,3]{Julien David}
\ead{julien.david@unicaen.fr}

\author[4]{Samah Ghazawi}
\ead{samahi@braude.ac.il}

\author[1,3]{Richard Groult}
\ead{richard.groult@univ-rouen.fr}

\author[5,6]{Gad M. Landau}
\ead{landau@univ.haifa.ac.il}

\author[1,3]{Thierry Lecroq}
\ead{thierry.lecroq@univ-rouen.fr}

\cortext[cor1]{Corresponding author}

\affiliation[1]{organization={Univ Rouen Normandie, INSA Rouen Normandie, Université Le Havre Normandie, Normandie Univ, LITIS UR 4108},
city={Rouen},
postcode={76000},
country={France}}

\affiliation[2]{organization={Normandie University, UNICAEN, ENSICAEN, CNRS, GREYC},
city={Caen},
country={France}}

\affiliation[3]{organization={CNRS NormaSTIC FR 3638},
city={Caen, Le Havre, Rouen},
country={France}}

\affiliation[4]{organization={Department of Software Engineering, Braude, College of Engineering}, city={Karmiel}, country={Israel}}

\affiliation[5]{organization={Department of Computer Science, University of Haifa}, city={Haifa}, country={Israel}}

\affiliation[6]{organization={Department of Computer Science and Engineering, NYU Tandon}, city={New York}, country={USA}}

\begin{abstract}
Cartesian tree pattern matching consists of finding all the factors of a text that have
 the same Cartesian tree than a given pattern.
There already exist theoretical and practical solutions for the exact case.
In this paper, we propose the first algorithms for solving approximate Cartesian tree
 pattern matching with one difference given a pattern of length $m$
 and a text of length $n$.
We present a generic algorithm that find all the factors of the text that have the same Cartesian tree of
 the pattern with one difference, using different notions of differences.
 We show that this algorithm has a $\mathcal{O}(nm)$ worst-case complexity and that, 
 for several random models, the algorithm has a linear average-case complexity.
We also present an automaton based algorithm, adapting~\cite{PALP19}, that can be generalized
to deal with more than one difference.
\end{abstract}

\begin{keyword}
Cartesian tree matching \sep Approximate pattern matching \sep Swap \sep Transposition
\sep Insertion \sep Deletion \sep Mismatch
\end{keyword}

\maketitle
\newpage

\tableofcontents


\section{Introduction}

In general terms, the pattern matching problem consists of finding one or all
 the occurrences of a pattern $p$ of length $m$ in a text $t$ of length $n$.
When both the pattern and the text are strings the problem has been extensively
 studied and has received a huge number of solutions~\cite{FL2013}.
Searching time series or list of values for patterns representing specific
 fluctuations of the values requires a redefinition of the notion of pattern.
The question is to deal with the recognition of peaks, breakdowns, or more features.
For those specific needs one can use the notion of Cartesian tree. 

Cartesian trees have been introduced by Vuillemin in 1980~\cite{vuillemin80}.
They are mainly associated to strings of numbers and are structured as heaps
 from which original strings can be recovered by symmetrical traversals of the trees.
It has been shown that they are connected to Lyndon trees~\cite{CROCHEMORE20201},
 to Range Minimum Queries~\cite{demaine09}
 or to parallel suffix tree construction~\cite{SB14}.
Recently, Park \textit{et al.}~\cite{PALP19} introduced a new metric of generalized matching,
 called Cartesian tree matching.
It is the problem of finding every factor of a text $t$ which has the same
 Cartesian tree as that of a given pattern $p$.
Cartesian tree matching can be applied, for instance, to finding patterns in time series
 such as share prices in stock markets or gene sample time data.
 
Park \textit{et al.} introduced the parent-distance representation which is a
 linear form of the Cartesian tree and that has a one-to-one mapping with Cartesian trees.
They gave linear-time solutions for single and multiple pattern Cartesian tree matching,
 utilizing this parent-distance representation and existing classical string algorithms,
 i.e., Knuth-Morris-Pratt~\cite{KMP77} and Aho-Corasick~\cite{AC75} algorithms.
More efficient solutions for practical cases were given in~\cite{SGRFLP21}.
Recently, new results on Cartesian pattern matching appeared~\cite{PARK2020,FLPS22}.

Indexing structures in the Cartesian tree pattern matching 
 framework are presented in~\cite{NFNI2021,KimC21,osterkamp2024}.
Methods for computing regularities are given in~\cite{KikuchiHYS20}
 and methods for computing palindromic structures are presented
 in~\cite{Funakoshi_et_al}.

All these previous works on Cartesian tree matching are concerned with
 finding exact occurrences
 of patterns consisting of contiguous symbols.
An algorithm for episode matching
 (finding all minimal length factors of $t$ that contain $p$ as a subsequence) in Cartesian
 tree framework is presented in~\cite{OizumiKMIA22}.

Very recently, dynamic programming approaches for approximate Cartesian
 tree pattern matching with edit distance has been
 considered in~\cite{KimH24} and longest common Cartesian tree
 subsequences are computed in~\cite{TsujimotoSMNI24}.
 
Efficient algorithms for determining if two equal-length indeterminate
 strings match in the Cartesian tree framework are given in~\cite{gawrychowski_et_al:LIPIcs.CPM.2020.14}.

In real life applications data are often noisy and it is thus important to
 find factors of the text that are similar, to some extent, to the pattern.
In this paper, we present a set of results in this setting by considering
 approximate Cartesian tree pattern matching with one difference, be it either a transposition (aka swap) of one symbol with the adjacent symbol, a mismatch, an insertion of one symbol or a deletion of one symbol. In a previous version of this paper~\cite{10.1007/978-3-031-43980-3_5}, we gave preliminary results for approximate Cartesian tree matching with one swap.
Swap pattern matching has received a lot of attention in classical sequences
 since the first paper in 1997~\cite{AmirALLL97}
 (see~\cite{FaroP18} and references therein).
Swaps are common in real life data and it seems natural to consider them in the
 Cartesian pattern matching framework.
We are able to design two algorithms for solving the Cartesian tree pattern matching
 with at most one swap.
The first one runs in time $\Theta(mn)$ and uses a characterization of a linear representation
 of Cartesian trees while the second one runs in $\mathcal{O}((m^2 + n)\log{m} )$ and uses an automaton that recognizes
 all the linear representations of Cartesian trees of sequences that match the pattern after one swap, and we show that the size of the automaton is bounded by $3(m-2)+1$. We also present methods to considerate approximate Cartesian tree matching with one mismatch, one insertion or one deletion.

The remaining of the article is organized as follows: 
\Cref{sec:pre}
 presents the basic notions and notations used in the paper. It also presents 
 two linear representations of Cartesian trees respectively called the parent-distance and the Skipped-Number representations. 
 \Cref{sec:approx} describes different notions
 of approximate Cartesian tree matching up to one error, based on: swaps, insertions, mismatches and deletions. We show that there exists a generic algorithm to solve those problems and study its time complexity. The generic algorithm uses a function that
 tests whether two Cartesian trees are equivalent, called \algoName{\eqTest}, that depends on both
 the chosen linear representation and the chosen notion of approximate matching.
 The following Sections therefore contain studies of the consequences of a ``mistake''
 on the linear representations and describe the \algoName{\eqTest} function.
Given a sequence $x$, 
 in \Cref{sec:charac} we give a characterization of the parent-distance representations
 of Cartesian trees that correspond to sequences $x$ after one swap.
 \Cref{sec:skipped} does the same for the Skipped-Number representation.
 \Cref{sec:mismatch,sec:insert,sec:delete} respectively inspect the effect
 of a mismatch, an insertion and a deletion on both linear representations.
\Cref{sec:graph} refocuses on the notion of swap. 
A graph is introduced, where vertices are Cartesian trees
 and there is an edge between two vertices if both Cartesian trees can be obtained from the other using one swap. Another algorithm to solve approximate pattern matching is obtained from it.
In \Cref{sec:expe} we give experimental results.
\Cref{sec:persp} contains our perspectives.

\section{Preliminaries \label{sec:pre}}
\subsection{Basic notations}

We consider sequences of integers with a total order denoted by $<$. 
For a given sequence $x$, $\vert x\vert$ denotes the length of $x$.
A sequence $v$ is factor of a sequence $x$ if $x=uvw$ for any sequences $u$ and $w$.
A sequence $u$ is a prefix (resp. suffix) of a sequence $x$ if $x=uv$ (resp. $x=vu$).
For a sequence $x$ of length $m$,
 $x[j]$ is the $j$-th element of $x$ and $x[j\ldots k]$ represents the factor of $x$ starting at the $j$-th element and ending at the $k$-th element, for $1\le j\le k\le m$.
For simplicity, we assume all elements in a given sequence to be distinct and numbered from 1 to $\vert x\vert$, unless otherwise stated.

\subsection{Cartesian tree matching}

\begin{definition}[Cartesian tree $C(x)$, $C_h(x)$, $\mathcal{C}$, $\mathcal{C}_m$] \label{def:ct}
    Given a sequence $x$ of length $m$, its \emph{Cartesian tree}, denoted by $C(x)$, is the binary tree recursively defined as follows:
    \begin{itemize}
        \item if $x$ is empty, then $C(x)$ is the empty tree;
        \item if $x[1\ldots m]$ is not empty and $x[g]$ is the smallest value of $x$, $C(x)$ is the binary tree with a node labeled by $g$ (called node $g$ for short) as its root, the Cartesian tree of $x[1\ldots g-1]$ as the left subtree and the Cartesian tree of $x[g+1\ldots m]$ as the right subtree.
    \end{itemize}
    
    Let $C_j(x)$ denote the subtree rooted at node $h$ of $C(x)$ the Cartesian tree of a given sequence $x$, where $1\le h \le m$.
    
    Let $\mathcal{C}_m$ be the set of Cartesian trees with $m$ nodes, which is equal to the set of binary trees with $m$ nodes. 
    Also $\mathcal{C} = \bigcup_{m \ge 0} \mathcal{C}_m$ denotes the set of all Cartesian trees. 
\end{definition}

See \Cref{fig:parent} for an example.

\begin{figure}[!ht]
\begin{center}
\includegraphics[scale=0.8]{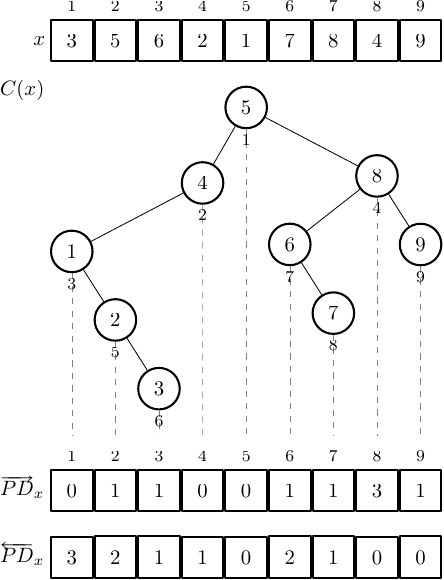}
\caption{
A sequence $x = (3, 5, 6, 2, 1, 7, 8, 4, 9) $, its Cartesian tree $C(x)$ and its corresponding parent-distance table $\PDD_x$ and $\PDG_x$. 
We have $\rb(C(x)) = (5,8,9)$.
\label{fig:parent}
}
\end{center}
\end{figure}

\begin{definition}[left and right subtrees and branches]
Let $T$ be a non-empty binary tree.

We denote by $\leftsb(T)$ and $\rightsb(T)$  its \emph{left and right subtrees}.
We denote by $\lb(T)$ and $\rb(T)$ the list of nodes on the \emph{left and right branches} of $T$
 respectively. 

Let $\LMP: \mathcal{C} \mapsto \mathbb{N}$ be the length of the left-branch (or leftmost path) of $T$:
$\LMP(T) = \vert \lb(T) \vert $.
The equivalent $\RMP$ function is the length of the right-branch (or rightmost path): 
$\RMP(T) = \vert \rb(T) \vert $.
\end{definition}

The Cartesian tree of a sequence can be built online in linear time and space~\cite{CTtime}.
Informally,
 let $x$ be a sequence such that $C(x[1\ldots h-1])$ is already known. 
 In order to build $C(x[1\ldots h])$, one only needs to find the nodes $j_1<\cdots<j_k$ in $\rb(C(x[1\ldots h-1]))$ such that $x[j_1] > x[h]$ (see \Cref{fig:insere_sommet}).
If $j_1$ is the root of $C(x[1\ldots h-1])$ then $h$ becomes the root of $C(x[1\ldots h])$
 otherwise let $j_0$ be the parent of $j_1$, then node $h$ will be the root of the right subtree of $j_0$.
In both cases, $j_1$ will be the root of the left subtree of node $h$.
Then $\rb(C(x[1\ldots h])=\rb(C(x[1\ldots h-1]))\setminus(j_1,\ldots,j_k) \cup (h)$.
All these operations can be easily done by implementing $\rb$ with a stack.
Each element of the stack consists of  a pair $(val,pos)$ where $val$ is a symbol of $x$ and $pos$ is the
 associated position.
The amortized cost
 of such an operation can be shown to be constant.
 
\begin{figure}[!ht]
    \begin{center}
        \includegraphics{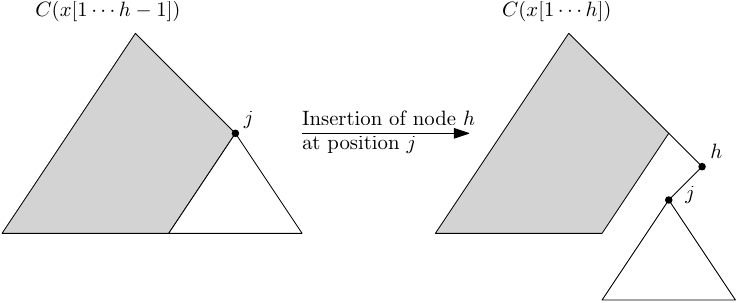}
        \caption{The construction of $C(x[1\ldots h])$ from $C(x[1\ldots h-1])$. Node $h$ is inserted at the end of the rightmost path of $C(x[1\ldots h-1])$.
        \label{fig:insere_sommet}}
    \end{center}
\end{figure}

We will denote by $x \CTM y$ if sequences $x$ and $y$ share the same Cartesian tree. 
For example, $(3, 5, 6, 2, 1, 7, 8, 4, 9) \CTM (3, 4, 8, 2, 1, 7, 9, 5, 6)$.

The Cartesian tree matching (CTM) problem consists of finding all factors of a text which share the same Cartesian tree as a pattern. 
Formally, Park \textit{et al.}~\cite{PALP19} define it as follows:

\begin{definition}[Cartesian tree matching (CTM)] \label{ctm}
Given two sequences $p[1\ldots m]$ and $t[1\ldots n]$, find every $1 \le j \le n - m + 1$ such that $t[j\ldots j+m-1] \CTM p[1\ldots m]$.
\end{definition}

\subsection{Linear representations \label{sec:lineardef}}

We will now present two linear representations of Cartesian trees
 that are used for solving exact and approximate Cartesian tree matching problems.

\subsubsection{Parent-distance and reverse parent-distance}

In order to solve CTM without building every possible Cartesian tree, an efficient representation of these trees was introduced by Park \textit{et al.}~\cite{PALP19}, the parent-distance representation (see example \figurename~\ref{fig:parent}):

\begin{definition}[Parent-distance representation $\PDD_x$] \label{def:pd}
Given a sequence $x[1\ldots m]$, the \emph{parent-distance representation} of $x$ is an integer sequence $\PDD_x[1\ldots m]$, which is defined as follows:
\begin{align*}
\PDD_x[h] = 
\begin{cases*}
h - \max_{1\leq j<h}\{j\ |\ x[j] < x[h]\}& \mbox{if such $j$ exists}\cr
0 & \mbox{otherwise.}
\end{cases*}
\end{align*}
\end{definition}

Since the parent-distance representation has a one-to-one mapping with Cartesian trees, it can replace them without loss of information.

Next, in order to fully characterize the approximate Cartesian tree matching problems that will be defined later in \Cref{sec:approx}, we introduce the notion of reverse parent-distance of a sequence that we compute as if read from right to left (see example \Cref{fig:parent}).

\begin{definition}[Reverse parent-distance representation $\PDG_x$] \label{rev-pd}
Given a sequence\\ $x[1\ldots m]$, the reverse parent-distance representation of $x$ is an integer sequence\\ $\PDG_x[1\ldots m]$, which is defined as follows:
$$\PDG_x[h] = \begin{cases}
\min_{h < j \leq m}\{j\ |\ x[h] > x[j]\} - h & \mbox{if such $j$ exists}\\
0 & \mbox{otherwise.}
\end{cases}
$$
\end{definition}

\subsubsection{Skipped-number representation}

The parent-distance table is not the only linear representation
 of Cartesian trees.
We will now present another linear representation based on the number
 of nodes skipped by a new node when building online the Cartesian tree
 of a sequence. Informally, for each node $h$, we store the nodes $(j_1 < \cdots < j_k)$ deleted from the right path when computing $C_h(x)$ from $C_{h-1}(x)$. We say that node $h$ skipped nodes $(j_1 < \cdots < j_k)$ and that these nodes are skipped by node $h$.  

\begin{definition}[Skipped-nodes representation $\rbs_x$] \label{def:skippednodes}
Given a sequence $x[1\ldots m]$, the {\em Skipped-nodes representation} of $x$ is a sequence of sets $\rbs_x[1\ldots m]$
such that $\rbs_x[h]$ is the right-branch of the left-subtree of the subtree rooted at node $h$ of the Cartesian tree of $x$, that is
$\rbs_x[h] = \rb( \leftsb(C_h(x))). $
\end{definition}

\begin{definition}[Skipped-number representation $\SN_x$] \label{def:SN}
Given a sequence $x[1\ldots m]$, the {\em Skipped-number representation} of $x$ is an integer sequence $\SN_x[1\ldots m]$
such that $\SN_x[h]$ is the length of the right-branch of the left-subtree of the subtree rooted at node $h$ of the Cartesian tree of $x$, that is
$\SN_x[h] = \RMP( \leftsb(C_h(x)) ) = |\rbs_x[h]|.$
\end{definition}

\begin{figure}[!ht]
    \centering
    \includegraphics[scale=0.40]{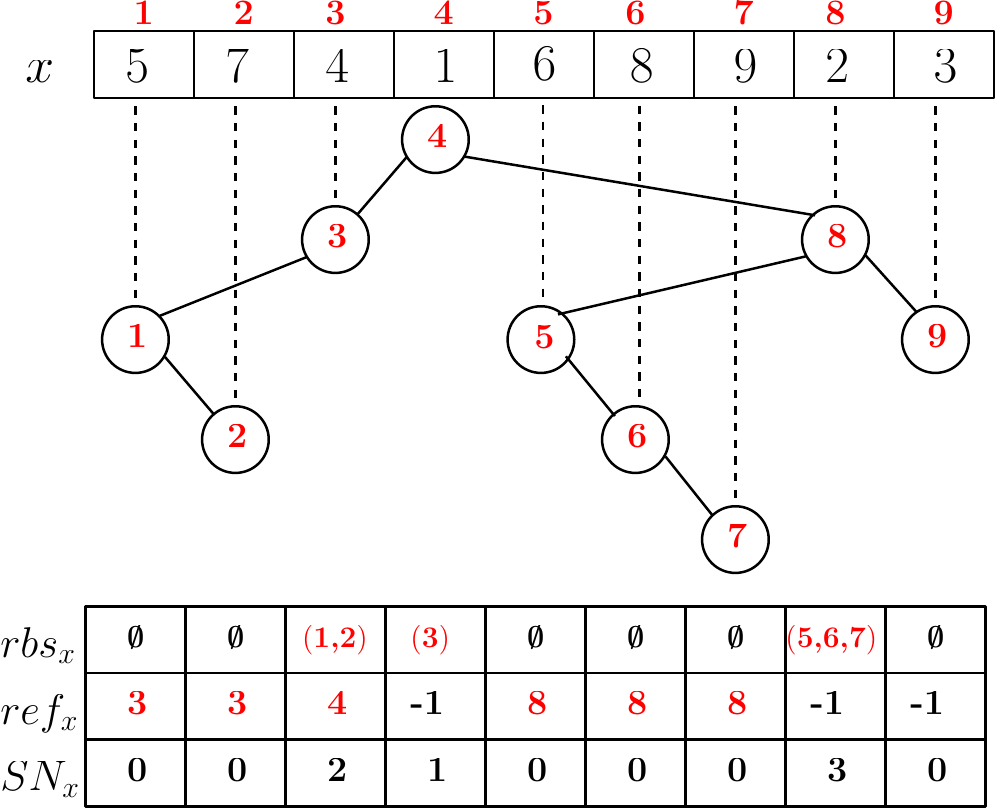}
    \caption{\label{fig:C-X} The Cartesian trees of $x=(5,7,4,1,6,8,9,2,3)$ and its associated tables.}
\end{figure}

We note that a similar notion to the Skipped-number representation appeared in Ohlebusch's book~\cite{Ohlebusch} in Chapter 3 (see also~\cite{FischerH06}). This notion was also present in~\cite{demaine09} and~\cite{PALP19} albeit under a different name: the Cartesian tree signature.

The Skipped-number representation also has a one-to-one mapping with Cartesian trees.

We will call referent of node $j$ the node $h$ that skipped node $j$ during the online construction of the Cartesian tree.

\begin{definition}[Referent $\REFD{x}$]\label{def:ref}
Given a sequence $x$ of length $m$,
let $\REFD{x}: \{1,\ldots,m\} \mapsto \{2,\ldots,m\}\cup \{-1\}$ 
be a function that associates to each position $h$ 
the smallest position $j>h$ such that $x[j] < x[h]$.
When  $\REFD{x}(h) = j$, the position $j$ is called the {\em referent} of $h$. 

\end{definition}
See \figurename~\ref{fig:C-X} for an example.
We remark that for all position $h$, $\REFD{x}(h) = j$ if and only if $h \in \rb(\leftsb(C_j(x))$. 
Note that, $\REFD{x}(h)\neq 1$ in any case, since the first position is added to an empty tree.

\section{Approximate Cartesian tree matching \label{sec:approx}}
In this section, we define several kinds of approximate Cartesian tree matching notions,
always up to one difference.
Then, we exhibit a generic algorithm to solve those problems and study its best-case, 
worst-case and average-case complexity under reasonable assumptions.

\subsection{One swap}

In order to define an approximate version of Cartesian tree matching, we start by considering the following notion of transposition on sequences: 

\begin{definition}[Swap $\tau(x,i)$]\label{def:swap}
Let $x$ be a sequence of length $m$, and $i \in \{1,\ldots, m-1\}$, we denote $y = \tau(x,i)$ the sequence obtained by a \emph{swap} at position $i$ in $x$, that is:
$$y = \tau(x,i) \text{ if }\begin{cases}
y[j] = x[j], \forall j \notin \{i,i+1\}\\
y[i] = x[i+1] \\
y[i+1] = x[i]
\end{cases}
$$
\end{definition}

This kind of transposition is the one made by the Bubble Sort algorithm
and the one appearing in the Damerau-Levenshtein distance.
It is therefore a natural operation on permutations and sequences.
For the Cartesian tree point of view, see~\figurename~\ref{fig:transposition}.
We use the notion of swap to define an instance of approximate Cartesian tree matching.

\begin{definition}[$CT_\tau$ matching] \label{ct-tau}
Let $x$ and $y$ be two sequences of length $m$, we say that $x\ CT_\tau$ matches $y$ (denoted $x \tauCTM y$) if:

$$ \begin{cases} x \CTM y \text{, or}\\ 
\exists\ x',\ y', \exists\ i\in\{1,\ldots, m-1\}, x' \CTM x, y' \CTM y,  x' = \tau(y', i)\ and\ y' = \tau(x', i)\end{cases}$$
\end{definition}

\begin{example}
    $(2,3,4,1,5,7,8,6,9) \tauCTM (4,5,6,3,1,7,8,2,9)$, 
    see \Cref{fig:transposition}.
\end{example}

\begin{figure}
    \centering
    \includegraphics[scale=0.66]{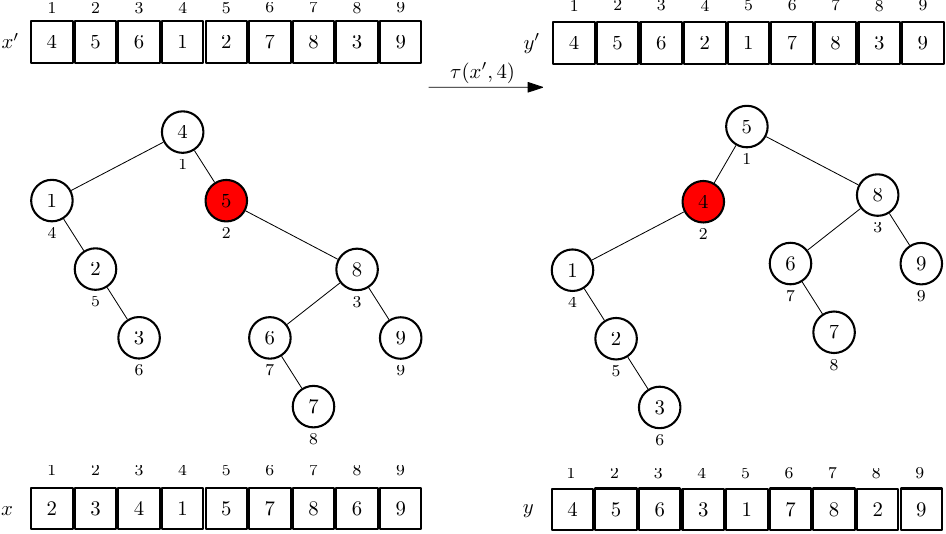}
    \caption{Let $x=(2,3,4,1,5,7,8,6,9,)$ and $y=(4,5,6,3,1,7,8,2,9)$. %
    The sequence $x$ $CT_\tau$ matches $y$ (see Definition \ref{ct-tau} with $x'=(4,5,6,1,2,7,8,3,9)$ and $y'=(4,5,6,2,1,7,8,3,9)$). 
    A swap at position $4$ moves the red node from the right subtree of the root to the left one. 
    In general, a swap at position $i$ consists either in moving
    the leftmost descendant of the right subtree to \textbf{a} rightmost position in 
    the left subtree (that is if $x[i] < x[i+1]$), or the opposite, in moving
    the rightmost descendant of the left subtree to
    \textbf{a} leftmost position of the right subtree of its parent.
    Note that we also have $x \tauCTM y'$, $x' \tauCTM y$ and of course $x' \tauCTM y'$.}
    \label{fig:transposition}
\end{figure}

With that in mind, we now define the version of the approximate Cartesian tree matching problem with at most one swap.

\begin{definition}[Approximate Cartesian tree matching with one swap]\label{ACTM}
Gi\-ven two sequences $p[1\ldots m]$ and $t[1\ldots n]$, find every $1 \le j \le n - m + 1$ such that $t[j\ldots j+m-1] \tauCTM p[1\ldots m]$.
\end{definition}



\subsection{One mismatch}
Informally, two sequences of equal length $m$ match
 with up to one mismatch if they have a prefix of length $k$
 that have the same Cartesian tree and if they have a suffix of length
 $m-k-1$ that have the same Cartesian tree.

\begin{definition}[$CT_\textsc{mis}$ matching] \label{ct-mis}
Let $x$ and $y$ be two sequences of length $m$, we denote $x \misCTM y$ if 
$\exists\ h\in\{1,\ldots, m\}$ such that:
$$ \begin{cases}
x[1\ldots h-1] \CTM y[1\ldots h-1]\\
x[h+1\ldots m] \CTM y[h+1\ldots m].
\end{cases}$$
\end{definition}

\begin{example}
    $(2,3,4,1,5,7,8,6,9) \misCTM (3,4,9,2,5,6,8,1,7)$ since
    $(2,3,4,1) \CTM (3,4,9,2)$ and
    $(7,8,6,9) \CTM (6,8,1,7)$ (see Figure \ref{fig:CT-MIS-INS-DEL} ($a$)).
\end{example}

\begin{figure}
\begin{minipage}{0.28\textwidth}
    \includegraphics[scale=0.6]{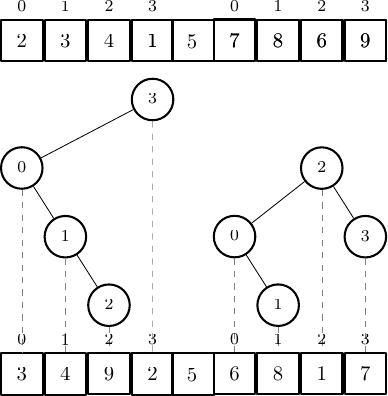}
    $$(a)~CT_\textsc{mis} \text{ matching}$$
\end{minipage}
\hspace{0.5em}\vrule\hspace{0.5em}
\begin{minipage}{0.33 \textwidth}
     \includegraphics[scale=0.6]{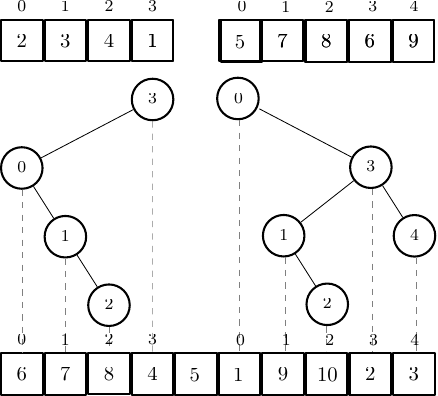}
    $$(b)~CT_\textsc{ins}  \text{ matching}$$
\end{minipage}
\hspace{0.5em}\vrule\hspace{0.5em}
\begin{minipage}{0.28\textwidth}
    \includegraphics[scale=0.6]{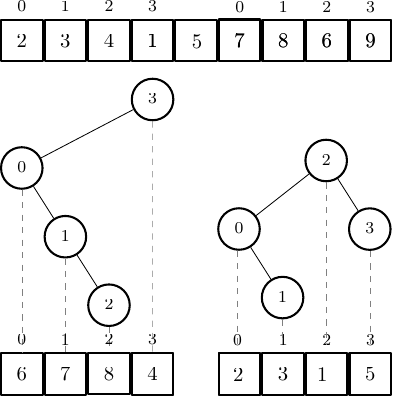}
    $$(c)~CT_\textsc{del} \text{ matching}$$
\end{minipage}

    \caption{Examples of approximate Cartesian tree matching when the error
    respectively comes, from left to right, from a mismatch, an insertion or a deletion. The sequence at the top does not change, but the equivalent one at the bottom does.}
    \label{fig:CT-MIS-INS-DEL}
\end{figure}

The Cartesian tree pattern matching with one mismatch consists
 of finding all the factors of a text $t$ that Cartesian tree
 match with up to one mismatch with a pattern $p$.

\begin{definition}[Approximate Cartesian tree matching with one mismatch]\label{MCTM}
Gi\-ven two sequences $p[1\ldots m]$ and $t[1\ldots n]$, find every $1 \le j \le n - m + 1$ such that $p[1\ldots m] \misCTM  t[j\ldots j+m-1] $.
\end{definition}

\subsection{One insertion}

Informally, one sequence of length $m$ matches
 with one insertion with a sequence of length $m+1$ if they have a prefix of length $k$
 that have the same Cartesian tree and if they have a suffix of length
 $m-k$ that have the same Cartesian tree.

\begin{definition}[$CT_\textsc{ins}$ matching] \label{ct-ins}
Let $x$ and $y$ be two sequences of length $m$ and $m+1$ respectively,
 we denote $x \insCTM y$ if $\exists\ h\in\{1,\ldots, m\}$ such that:
$$ \begin{cases} 
x[1\ldots h] \CTM y[1\ldots h]\\
x[h+1\ldots m] \CTM y[h+2\ldots m+1].
\end{cases}$$
\end{definition}

\begin{example}
    $(2,3,4,1,5,7,8,6,9) \insCTM (6,7,8,4,5,1,9,10,2,3)$ since
    $(2,3,4,1) \CTM (6,7,8,4)$ and
    $(5,7,8,6,9) \CTM (1,9,10,2,3)$ 
    (see Figure \ref{fig:CT-MIS-INS-DEL} ($b$)).
\end{example}


The Cartesian tree pattern matching with one insertion consists
 of finding all the factors of a text $t$ that Cartesian tree
 match with one insertion with a pattern $p$.

\begin{definition}[Approximate Cartesian tree matching with one insertion]\label{ICTM}
Gi\-ven two sequences $p[1\ldots m]$ and $t[1\ldots n]$, find every $1 \le j \le n - m$ such that $p[1\ldots m] \insCTM t[j\ldots j+m] $.
\end{definition}

\subsection{One deletion}

Informally, one sequence of length $m$ matches
 with one deletion with a sequence of length $m-1$ if they have a prefix of length $k$
 that have the same Cartesian tree and if they have a suffix of length
 $m-k-1$ that have the same Cartesian tree.

\begin{definition}[$CT_\textsc{del}$ matching] \label{ct-del}
Let $x$ and $y$ be two sequences of length $m$ and $m-1$ respectively,
 we denote $x \delCTM y$ if $\exists\ h\in\{1,\ldots,m\}$ such that:
\[
\begin{cases} 
x[1\ldots h] \CTM y[1\ldots h]\\
x[h+2\ldots m] \CTM y[h+1\ldots m-1].
\end{cases}
\]
\end{definition}

\begin{example}
    $(2,3,4,1,5,7,8,6,9) \delCTM (6,7,8,4,2,3,1,5)$ since
    $(2,3,4,1) \CTM (6,7,8,4)$ and
    $(7,8,6,9) \CTM (2,3,1,5)$
    (see Figure \ref{fig:CT-MIS-INS-DEL} ($c$)).
\end{example}


The Cartesian tree pattern matching with one deletion consists
 of finding all the factors of a text $t$ that Cartesian tree
 match with one deletion with a pattern $p$.

\begin{definition}[Approximate Cartesian tree matching with one deletion]\label{DCTM}
Gi\-ven two sequences $p[1\ldots m]$ and $t[1\ldots n]$, find every $1 \le j \le n - m+2$ such that $p[1\ldots m] \delCTM t[j\ldots j+m-2]$.
\end{definition}

\subsection{MetaAlgorithm}

Since we have introduced several linear representations of Cartesian trees, 
we introduce a generic notation that will be used to describe a generic algorithm.

\begin{definition}[Linear representations $\LN_x$]\label{def:ln}
Given a sequence $x$
and its associated Cartesian tree $C(x)$, let $\LN_x$, or $\LN(C(x))$, 
be an ordered pair $(\LND_x, \LNG_x)$, where $\LND_x$ is a linear representations 
of $C(x)$ (either $\PDD_x$ or $\SN_x$) and $\LNG_x$ is another linear representation of $C(x)$
(either $\PDG_x$ or $\rbs_x$).
\end{definition}


\begin{algorithm}[ht!]\label{algo:meta}
  \caption{\algoName{metaAlgorithm}($p$, $t$)}
  \SetAlgoLined
  \SetKwInOut{KwIn}{Input}
  \SetKwInOut{KwOut}{Output}
  \KwIn{Two sequences $p$ and $t$ of length $m$ and $n$}
  \KwOut{The number of positions $j$ such that $p$ is equivalent to $t[j\ldots j+m-1]$ }
  $occ \leftarrow 0$\\
  $x \leftarrow t[1\ldots m]$\\
  $\LN_{p}, \LN_{x} \leftarrow$ linear representations of $C(p)$ and $C(x)$\\
  \For{$j \in \{1,\ldots,n-m+1\}$}{
    \If{$\algoName{\eqTest}(\LN_{p}, \LN_{x})$}{
        $occ \leftarrow occ +1$
    }
    $x \leftarrow t[j+1\ldots j+m]$\\
    Update $\LN_{x}$\\
  }
  \Return{$occ$}
\end{algorithm}

Assume we have a function \algoName{\eqTest}
which is adapted according to the notion of (approximate) pattern matching, 
that takes the linear representations of two sequences $x$ and $p$, returns whether
$x$ is equivalent to $p$ or not.
The \algoName{metaAlgorithm} (see~Algorithm~\ref{algo:meta}) returns the number of occurrences of the
pattern $p$ in the sequence $t$.
It can be assumed that, those representations can be computed in linear time and the update (Line 8 of Algorithm~\ref{algo:meta}) can be made in amortized constant time 
(as shown in Algorithms~\ref{algo:updatePD} and~\ref{algo:updateSN}). 
We also assume that, in the worst-case, the \algoName{\eqTest}
function performs a comparison of both linear representations until a mismatch
occurs in both directions (or everything matches), 
plus a constant number of 
comparisons in order to check the equivalence. Thus, the number of comparisons
is bounded by $m+c$, where $m$ is the length of the pattern and $c$ is a constant.

\begin{remark}
Assuming \algoName{\eqTest} has a linear worst-case complexity and a constant best-case complexity, then the \algoName{MetaAlgorithm} has a $\Theta(mn)$ worst-case time complexity and a $\Theta(n)$ best-case complexity, where $m$ is the length of $p$ and $n$ the length of $t$.
\end{remark}

A natural question to be asked when the worst-case complexity of an algorithm
differs from the best-case complexity is the behaviour of the algorithm in practice, in various contexts. A possible way to answer this question from a theoretical point of view is to consider the average-case complexity under various random models. The following lemma describes a set of random models for which
the \algoName{MetaAlgorithm} has an average-case behaviour close to its best-case complexity.

\begin{namedlemma}[Kappa]\label{lm:kappa}
Let us consider an $\eqTest$ function between linear representations of two sequences $x$ and $p$ of length $m$.
 Assuming we have a probabilistic model that guarantees that there exists a constant $\kappa \in (0,1)$ such that for all position $1\le i \le m-1$ we have: 
$$\begin{cases} \mathbb{P}(\LND_{x}[1\ldots i] = \LND_{p}[1\ldots i]\ |\ \LND_{x}[1\ldots i -1] = \LND_{p}[1\ldots i -1]) \leq \kappa \\ \mathbb{P}(\LNG_{x}[m-i\ldots m] = \LNG_{p}[m-i\ldots m]\ |\ \LNG_{x}[m-i+1\ldots m] = \LNG_{p}[m-i+1\ldots m]) \leq \kappa \end{cases}$$
 then the average-case complexity of $\eqTest$ is $\Theta(1)$.
\end{namedlemma}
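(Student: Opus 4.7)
The plan is to decouple the cost of $\eqTest$ into its two directional scans and show that each contributes only an expected constant number of comparisons. By the assumption stated just before the lemma, an execution of $\eqTest$ compares $\LND_x$ and $\LND_p$ from left to right until a mismatch, compares $\LNG_x$ and $\LNG_p$ from right to left until a mismatch, plus a constant number $c$ of additional equivalence checks. So I only need to bound, in expectation, the length of each of the two maximal matching prefixes/suffixes.

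For the forward scan, I would introduce the random variable $X$ equal to the largest index $k \in \{0,1,\ldots,m\}$ such that $\LND_x[1\ldots k] = \LND_p[1\ldots k]$, with $X=0$ when the first symbols already differ; then the number of forward comparisons is at most $X+1$. Applying the chain rule to the hypothesis of the lemma gives
\[
\mathbb{P}(X \geq i) \;=\; \prod_{k=1}^{i} \mathbb{P}\bigl(\LND_x[1\ldots k] = \LND_p[1\ldots k] \,\big|\, \LND_x[1\ldots k-1] = \LND_p[1\ldots k-1]\bigr) \;\leq\; \kappa^{i},
\]
for every $1 \leq i \leq m-1$. Using the tail-sum formula $\mathbb{E}[X] = \sum_{i\geq 1}\mathbb{P}(X \geq i)$ and bounding the tail by a geometric series yields $\mathbb{E}[X] \leq \kappa/(1-\kappa)$, so the expected number of forward comparisons is at most $1/(1-\kappa) = \Theta(1)$.

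A perfectly symmetric argument, using the second hypothesis of the lemma on $\LNG$, bounds the expected number of backward comparisons by $1/(1-\kappa)$ as well. Summing both contributions and adding the $c$ constant-time equivalence checks gives an expected number of comparisons at most $2/(1-\kappa) + c$, which is $\Theta(1)$, as required.

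The one subtlety I expect will be aligning the informal description of $\eqTest$ with the probabilistic hypothesis: in particular, I must verify that the edge case $X=m$ (a complete prefix match, where no mismatch is ever reached and the scan costs $m$) does not break the geometric bound. This case has probability at most $\kappa^{m-1}$ by telescoping, so it contributes only $O(m\,\kappa^{m-1}) = O(1)$ to the expectation and is absorbed into the $\Theta(1)$ conclusion. Once this boundary case is accounted for, the telescoping of conditional probabilities and the geometric summation are routine.
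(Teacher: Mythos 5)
Your proof is correct and follows essentially the same route as the paper: telescoping the conditional-probability hypothesis to get a geometric tail bound $\mathbb{P}(X\ge i)\le\kappa^{i}$ on the number of comparisons in each direction, then summing the geometric series to obtain a constant expectation. You simply spell out the details (tail-sum formula, the symmetric backward scan, the complete-match edge case) that the paper's one-paragraph argument leaves implicit; note in passing that the paper's phrase ``less than $1/\kappa^{i}$'' is evidently a slip for $\kappa^{i}$, which is the bound you correctly derive.
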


\begin{proof}
  The assumption of the lemma tells us that, whether we read the linear representation from left to right 
  or from right to left, the probability that the \algoName{\eqTest} function performs more than $i$ comparisons
  between two linear representations is less than $1/\kappa^{i}$.
  As a matter of fact, it gives us an upper bound on the average number of comparisons, 
  that follows a geometric law of parameter $(1-\kappa)$, which 
  implies 
  an average cost of $\Theta(1)$. 
\end{proof}
Note that the Lemma Kappa holds for some memoryless sources and some Markovian sources. Also, note that, as we will show in Section~\ref{sec:skipped}, the \algoName{\eqTest} using the \SNRepres does not need
a reverse representation $\LNG_{x}$. Therefore the assumption of Lemma Kappa 
on $\LNG_{x}$ is not needed in this case.
Assuming we have $\LND_{x}[1\ldots j-1] = \LND_{p}[1\ldots j-1]$, then there exists a Cartesian tree $A$ whose linear representation is $\LND_{x}[1\ldots j-1]$ with a right branch of length $\RMP(A)$. 
As shown in \Cref{fig:insere_sommet}:
assuming we have already determined that $C(x[1\ldots j-1])=C(p[1\ldots j-1])$, 
to determine whether $C(x[1\ldots j])=C(p[1\ldots j])$ is equivalent to test whether the node $j$ is inserted at the same position in the right branch in both $C(p[1\ldots j-1])$ 
and $C(x[1\ldots j-1])$.

\begin{lemma}\label{lm:avbin}
For a fixed Cartesian tree pattern of size $m$, assuming the linear representation of the text is drawn over the uniform distribution over binary trees of size $n$, then Lemma~\nameref{lm:kappa} holds for $\kappa = 1/2$.
\end{lemma}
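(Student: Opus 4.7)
The plan is to leverage the online Cartesian tree construction. By the one-to-one correspondence between $\LND$ and the Cartesian tree, conditioning on $\LND_x[1\ldots i-1] = \LND_p[1\ldots i-1]$ fixes the common prefix tree $A = C(x[1\ldots i-1]) = C(p[1\ldots i-1])$, with right-branch length $r = \RMP(A)$. For $i \geq 2$ we have $r \geq 1$ since $A$ contains at least the root. As recalled in Figure~\ref{fig:insere_sommet}, adding the $i$-th node produces one of exactly $r+1$ possible extensions of $A$, determined by which of the $r+1$ insertion points along $\rb(A)$ is used. The value of $\LND_x[i]$ is in bijection with this choice, so the event $\LND_x[i] = \LND_p[i]$ is equivalent to selecting a prescribed one of these $r+1$ outcomes.

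Under the uniform distribution on binary trees of size $n$, I would compute the conditional probability of each outcome by counting completions. A standard Catalan-type ballot argument gives the number of binary trees of size $k+s$ extending a fixed prefix tree of size $k$ with right-branch length $r$ as $f(r,s) = \frac{r+1}{r+s+1}\binom{2s+r}{s}$, with the recurrence $f(r,s+1) = \sum_{r'=1}^{r+1} f(r',s)$ corresponding exactly to the decomposition over the $r+1$ choices for the next insertion. Hence the conditional probability of matching is the ratio $f(r', n-i)/f(r, n-i+1)$, where $r' \in \{1,\ldots,r+1\}$ is the new right-branch length prescribed by the pattern.

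The main step is then to bound this ratio by $1/2$. I would argue that the maximum over $r'$ is attained at the boundary $r' = r+1$, i.e.\ when the new node is appended to the bottom of $\rb(A)$, and verify the inequality by simplifying the resulting ratio of binomial coefficients. Once this bound holds for the forward reading, the reverse reading is handled by symmetry: the uniform distribution on binary trees is invariant under reversal of the node order, so the same computation bounds the conditional probability for $\LNG$. Combining both yields $\kappa = 1/2$ in Lemma~\nameref{lm:kappa}. The main obstacle is precisely this ballot-number inequality: it is tightest for small $r$ (the case $r=1$ being critical), and a careful treatment of the end of the factor (small $n-i$) is needed to avoid boundary artifacts.
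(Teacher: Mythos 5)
Your reduction is correct and in fact sharper than what the paper does: conditioning on $\LND_{x}[1\ldots i-1]=\LND_{p}[1\ldots i-1]$ does fix the prefix tree $A$, and under the uniform distribution on $\mathcal{C}_n$ the conditional probability of the prescribed extension is exactly the ratio $f(r',s)/f(r,s+1)$ of ballot numbers that you write down, with $f(r,s)=\frac{r+1}{r+s+1}\binom{2s+r}{s}$. The gap is the step you postpone: that ratio is \emph{not} bounded by $1/2$. For the extremal extension $r'=r+1$ one gets $f(r+1,s)/f(r,s+1)=\frac{(r+2)(s+1)}{(r+1)(2s+r+2)}$, which at $r=1$ equals $\frac{3(s+1)}{2(2s+3)}$: this is already $3/5$ for $s=1$ and increases to $3/4$ as $s\to\infty$. (Concretely, for $n=3$, three of the five binary trees have node $2$ as a right child of node $1$, so that extension of the one-node prefix has probability $3/5$.) So the inequality you flag as the ``main obstacle'' is not a verification to be completed but a statement that is false; your approach, carried out honestly, yields $\kappa=3/4$ and shows that $\kappa=1/2$ is unattainable under the stated model. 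Since Lemma~\nameref{lm:kappa} only needs some constant $\kappa\in(0,1)$, the $\Theta(1)$ conclusion survives with $3/4$ in place of $1/2$, but the constant you were asked to establish does not.

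The paper's own proof takes a different and much shorter route: it asserts that every prefix (indeed every factor) of length $h$ is itself uniformly distributed over $\mathcal{C}_h$, which would make the $\RMP(A)+1\ge 2$ one-node extensions of $A$ conditionally equiprobable and give $1/(\RMP(A)+1)\le 1/2$ at once. Your ballot-number count shows precisely why that shortcut is not available: under the uniform distribution on $\mathcal{C}_n$ the extensions are weighted by their numbers of completions, so prefixes with longer right branches are strictly more likely and the extensions are not equiprobable. In other words, your computation is the correct one for the model as literally stated, and it contradicts both your own target bound and the paper's argument; the constant $1/2$ is only recoverable if the model is changed to one in which each new node is inserted uniformly among the $\RMP+1$ available positions of the right branch, which is not the uniform distribution on binary trees.
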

\begin{proof}
Note that the uniform distribution over binary trees of size $n$ implies that the distribution over
the linear representation of any factor $x$ of length $m$ is the uniform distribution
over binary trees of length $m$.
Let $A$ 
be the tree of size $h-1$ such that $\LND(A)=\LND_{x}[1\ldots h-1] = \LND_{p}[1\ldots h-1]$, there exists exactly $\RMP(A) +1$ distinct trees $B$ of size $h$ such that $\LND(A) = \LND(B)[1\ldots h-1]$. 
Since each tree of size $h$ is drawn with equal probability, and $\RMP(A) +1 \ge 2$, the announced result holds. 
The same logic applies if $\LNG(A)=\LNG_{x}[m-h+1\ldots m] = \LNG_{p}[m-h+1\ldots m]$.
\end{proof}

We now wish to study the average complexity of the \algoName{MetaAlgorithm} when the uniform distribution over
permutations is considered for the text. TO do so, we need to study the number of permutations associated
to a same Cartesian tree.

Given a binary tree $A$ over $n$ nodes, 
the number of permutations, whose associated Cartesian tree is $A$, is given by 
the recursive formula:
$$p(A) = \binom{n-1}{|B|} p(B) p(C)$$
where $B=left(A)$ and $C=right(A)$. Indeed, the root is always labeled by $1$ and the binomial coefficient counts the number of ways to choose $|B|$ distinct values amongst $n-1$.
In~\cite{cleary2017some} (Lemma 2.3), the authors show that 
$$p(A) = \frac{n!}{\prod_{t \in A} (\text{number of nodes in the tree enrooted in }t)}$$
that is a Hook length formula on the Young tableaux associated to binary trees. 

\begin{lemma}\label{lm:avperm}
For a fixed sequence pattern of size $m$, assuming the text is drawn over the uniform distribution over permutations of size $n$, the average complexity of the \algoName{MetaAlgorithm}
is $\Theta(n)$.
\end{lemma}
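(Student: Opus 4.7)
The plan is to apply linearity of expectation to reduce to bounding the expected number of comparisons performed by \eqTest{} at a single window, and then to show this per-window expectation is $O(1)$ via the Hook length formula recalled just before the statement. The matching $\Omega(n)$ lower bound is immediate since the algorithm must read $t$.

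If $t$ is a uniformly random permutation of $[n]$, then for every fixed window position $j$ the relative order of $x = t[j\ldots j+m-1]$ is uniformly distributed over $S_m$; since both linear representations considered by \eqTest{} depend only on relative order, the per-window analysis reduces to: fix a pattern $p \in S_m$, let $x$ be uniform in $S_m$, and bound the expected number of comparisons.

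Writing $A_i = C(p[1\ldots i])$ and using the tail formula, the expected number of forward comparisons is at most $1 + \sum_{i=1}^{m-1}\mathbb{P}(\LND_x[1\ldots i] = \LND_p[1\ldots i])$. For both $\LND = \PDD$ and $\LND = \SN$, the prefix $\LND_x[1\ldots i]$ depends only on $x[1\ldots i]$ and is in bijection with $C(x[1\ldots i])$, so this equality is exactly $C(x[1\ldots i]) = A_i$. Since $x[1\ldots i]$ has uniform relative order on $S_i$, the Hook length formula gives $\mathbb{P}(C(x[1\ldots i]) = A_i) = 1/\prod_{v\in A_i} s(v)$, where $s(v)$ denotes the size of the subtree of $A_i$ rooted at $v$. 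The key deterministic bound is $\prod_{v\in A} s(v) \ge i(i-1)/2$ for every binary tree $A$ on $i \ge 2$ nodes: the root alone contributes the factor $i$, the larger of its two subtrees has size at least $\lceil(i-1)/2\rceil$ so its own root contributes another factor at least $(i-1)/2$, and every remaining factor is $\ge 1$. Hence $\mathbb{P}(C(x[1\ldots i]) = A_i) \le 2/(i(i-1))$ for $i \ge 2$, and telescoping yields $\sum_{i=1}^{m-1}\mathbb{P}(C(x[1\ldots i]) = A_i) \le 3$ uniformly in $m$. An analogous argument bounds the reverse-direction scan when the representation uses $\LNG$: the suffix of length $i$ of $x$, possibly reversed, is again a uniform $S_i$-pattern, so the same Hook length estimate applies. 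The per-window expectation is therefore $O(1)$ and linearity of expectation delivers the announced $O(n)$.

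The main obstacle, compared with \Cref{lm:avbin}, is that Lemma~\nameref{lm:kappa} does \emph{not} apply: if $A_{i-1}$ has a short right branch, the conditional probability that a random next symbol of $x$ extends $A_{i-1}$ as its new rightmost leaf can approach $1$ as $i$ grows, so no $\kappa<1$ bounds the one-step conditional probability uniformly in $i$. The workaround is to bypass conditioning and estimate the unconditional prefix-match probability directly through the Hook length formula, where the weaker polynomial decay $\Theta(1/i^2)$ already suffices to sum to a constant.
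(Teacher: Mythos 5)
Your proof is correct and follows the same overall strategy as the paper's: observe that Lemma~\nameref{lm:kappa} fails for permutations, then bound the unconditional prefix-match probability via the Hook length formula and sum it to a constant. The one genuine difference is in the accounting. The paper bounds the expected cost by a sum of the form $\sum_{j} j \cdot \mathbb{P}(\LND_{x}[1\ldots j] = \LND_{p}[1\ldots j])$, so it needs the match probability to decay like $1/j^{3}$; to get that from the Hook length product it must descend two levels into the tree (root gives $j$, the level-one subtrees give at least $j-2$, and a level-two subtree of size at least $\lceil (j-3)/4 \rceil$ gives the third factor). You instead use the tail-sum identity $\mathbb{E}[X] = \sum_{i}\mathbb{P}(X>i)$, for which $1/i^{2}$ decay suffices, and you get that from just the root (factor $i$) and the larger level-one subtree (factor at least $(i-1)/2$). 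This makes your combinatorial estimate shorter and avoids the paper's slightly delicate level-two argument. Two minor points you gloss over: the paper also explicitly disposes of the extra verification cost $c$ incurred by \algoName{\eqTest} when a near-match is detected (arguing that even a linear $c$, charged only on near-matching windows, contributes $O(1)$ on average), which your write-up should at least mention since that cost is not part of the lcp/lcs scan; and your claim that the per-window analysis reduces cleanly to a uniform pattern in $S_m$ is fine for bounding each window's expectation, but note that consecutive windows are dependent --- linearity of expectation handles this, as you say, so no harm done.
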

\begin{proof}
In this case, Lemma~\nameref{lm:kappa} does not hold.
Given the Cartesian tree $A$ associated to a prefix of length $j-1$, such that
$\RMP(A)=1$, then the $j$-th node is a root with probability $1/j$ and a 
right-child with probability $(j-1)/j$. Therefore, there does not exists a constant
$\kappa$ that satisfies Lemma~\nameref{lm:kappa}.
Though, assuming the uniform distribution over permutations of size $n$ 
to generate the text, the average complexity of the $\eqTest$ algorithm
remains constant.
Given a cost $c$ that depends on the considered notion of pattern matching, the average number of comparisons performed by the $\eqTest$ function is bounded by the following formula:
$$c+\sum_{j=1}^{m} j \cdot \mathbb{P}(\LND_{x}[1\ldots j] = \LND_{p}[1\ldots j]) + \sum_{\ell=1}^{m} (m-\ell+1)\ \cdot \mathbb{P}(\LNG_{x}[\ell \ldots m] = \LNG_{p}[\ell \ldots m])$$
Since, in the case for permutations, for any fixed pattern $p$ and any factor $x$ of $t$, we have 
$\mathbb{P}(\LND_{x} = \LND_{p}) = \mathbb{P}(\LNG_{x} = \LNG_{p})$,
therefore the number of comparisons is less than:
$$c+2\sum_{j=1}^{m} j \cdot \mathbb{P}(\LND_{x}[1\ldots j] = \LND_{p}[1\ldots j])$$
Using the Hook length formula, we have that, for all $j\ge 4$, and all tree $A$ of size $j$, 
$\frac{p(A)}{j!} \le \frac{4}{j(j-2)(j-3)}$.  
Indeed, we have 
$$\frac{p(A)}{j!} = \frac{1}{\prod_{t \in A} (\text{number of nodes in the tree enrooted in }t)} $$
The tree $A$ counts for $j$. The two subtrees enrooted at level $1$ contributes by
$k \times ( j-1 -k)$, which is greater than $j-2$ for all $0\le k \le j-1$.
Finally, at level $2$, there exists at least one subtree which contains at least $\lceil \frac{j-3}{4} \rceil$
nodes. 
Therefore the average number of comparisons is bounded by
$$c+ 6 + 2\sum_{j=4}^{m} \frac{4}{(j-2)(j-3)}$$
Assuming the cost $c$ is different from $0$ if there exist a value $\ell$
such that $\LND_{x}[1\cdots \ell-1] = \LND_{p}[1\cdots \ell-1])$ and $\LNG_{x}[\ell+2\cdots m] = \LNG_{p}[\ell+2\cdots m]$.
Assuming also that $c$ is at most linear,
it can be proven that c has a constant contribution 
to the average cost of the function.
Since the sum itself tends to a constant, this concludes the proof.

\end{proof}

\section{Characterization of the parent-distance tables when a swap occurs \label{sec:charac}}
In this section, let $x$ be a sequence of length $m$,
 $i\in\{1,\ldots,m-1\}$ be an integer, and $y=\tau(x,i)$.

\subsection{Parent-distance tables}

We now describe the differences
 between the parent-distance tables of $x$, $\PDD_x$ and $\PDG_x$ and the parent-distance
 tables of $y$, $\PDD_y$ 
 and $\PDG_y$. 
\Cref{fig:zones} sums up the different notations and parts of the 
parent-distance tables we are going to characterize.


\begin{figure}[!ht]
\begin{center}
\includegraphics[scale=0.3]{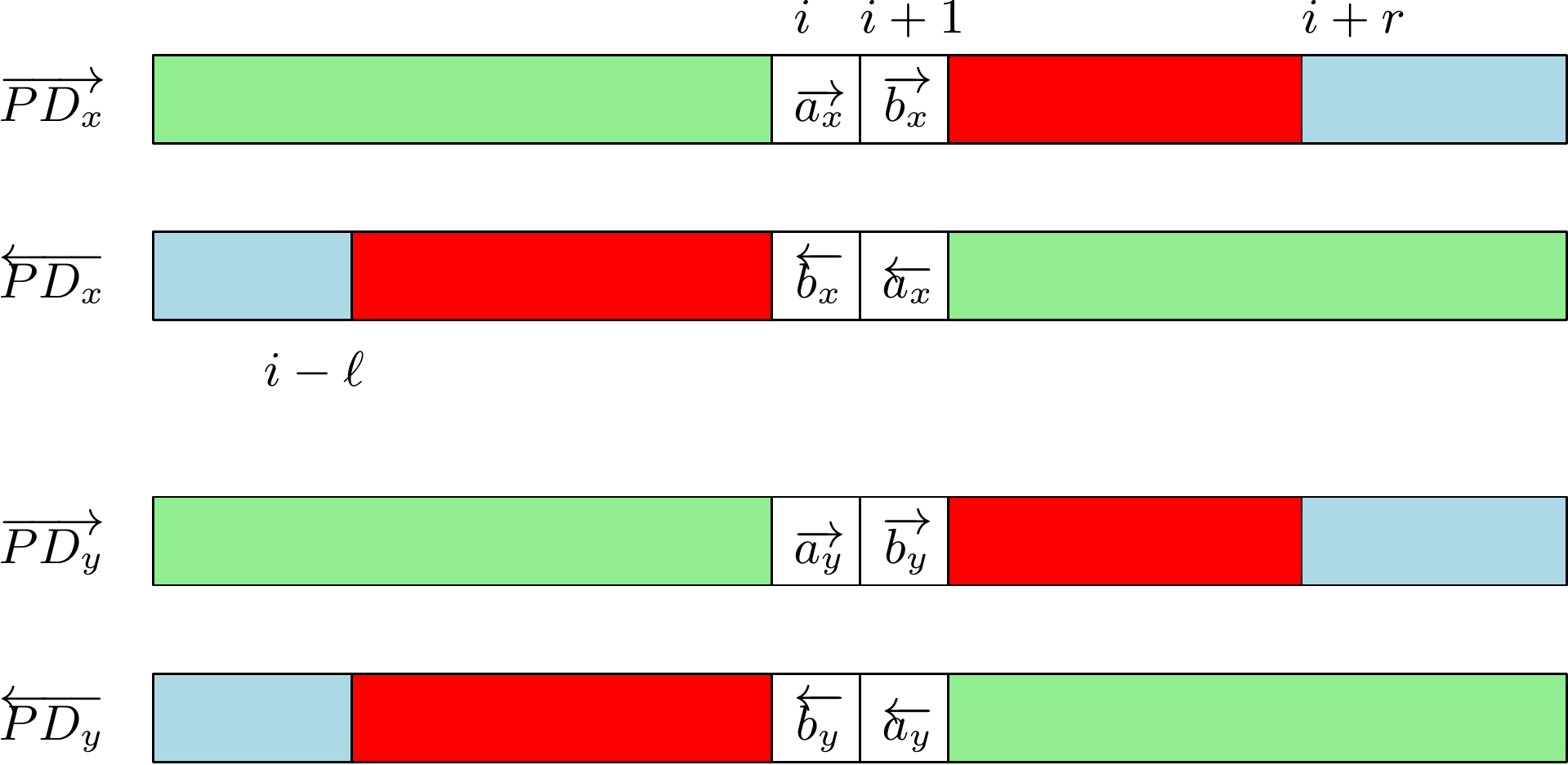}
\caption{This figure sums up the different Lemmas of this section. For instance, the green zones correspond to \nameCref{def:green}~\ref{def:green} and Lemma~\nameref{lm:green}. The values $\protect\overrightarrow{a_x}$, $\protect\overrightarrow{b_x}$, $\ldots$, are the $8$ values found in the parent-distance tables of $x$ and $y$ at position $i$ and $i+1$, that is $\protect\overrightarrow{PD_x}[i] = \protect\overrightarrow{a_x}$, $\protect\overrightarrow{PD_x}[i+1] = \protect\overrightarrow{b_x}$, $\ldots$ Values $i-\ell$ and $i+r$ respectively denote the last and first position of each blue zone.}\label{fig:zones}
\end{center}
\end{figure}

First, we describe how the parent-distances differ at positions $i$ and $i+1$.

\begin{lemma}\label{lm:lesser}
Suppose that $x[i] < x[i+1]$, then the following properties hold:
\begin{enumerate}
    \item\label{lm:lesser:item1} $\dG = 1$.
    \item\label{lm:lesser:item2} $\dD = \begin{cases}0 &\text{if }\aD = 0 \\ \aD + 1 & \text{otherwise.}\end{cases}$
    \item\label{lm:lesser:item3} $\cG = \begin{cases}0 &\text{if } \bG = 0\\ \bG - 1 & \text{otherwise.}\end{cases}$
    \item\label{lm:lesser:item4} $\cD \leq \begin{cases} i-1 & \text{if } \aD = 0\\ \aD & \text{otherwise.}\end{cases}$
\end{enumerate}
\end{lemma}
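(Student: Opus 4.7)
The plan is to prove each of the four items by unfolding the definitions of $\PDD_y$ and $\PDG_y$ and substituting the swap relations $y[i]=x[i+1]$, $y[i+1]=x[i]$, and $y[j]=x[j]$ for $j\notin\{i,i+1\}$, together with the working inequality $y[i+1]=x[i]<x[i+1]=y[i]$. The four items split naturally into two pairs. Items 2 and 4 concern $\PDD_y$ and thus look to the left of positions $i+1$ and $i$, where $y$ agrees with $x$; items 1 and 3 concern $\PDG_y$ and look to the right, where $y$ also agrees with $x$ outside of $\{i,i+1\}$. In each case the essential move is to compare the new threshold (either $x[i]$ or $x[i+1]$) against the one implicit in $\aD$ or $\bG$, exploiting $x[i]<x[i+1]$ to sandwich the nearest-smaller witness.

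First I would dispatch the two $\PDD$-items. For item 2, $\dD$ is the distance from $i+1$ back to the nearest entry of $y$ strictly smaller than $y[i+1]=x[i]$. Position $i$ carries $y[i]=x[i+1]>x[i]$ and is therefore never a witness, so the search collapses to the one defining $\aD=\PDD_x[i]$: this gives $\dD=0$ when $\aD=0$ and otherwise $\dD=\aD+1$, where the extra unit accounts for the index shift from $i$ to $i+1$. For item 4, $\cD$ is the distance from $i$ back to the nearest $j<i$ with $x[j]<y[i]=x[i+1]$. Any $j$ witnessing $\aD$ satisfies $x[j]<x[i]<x[i+1]$ and therefore also witnesses $\cD$, so $\cD\le\aD$ when $\aD\neq 0$; when $\aD=0$, only the trivial upper bound $\cD\le i-1$ remains available.

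Next I would turn to the dual $\PDG$-items. For item 1, $\dG$ is the distance from $i+1$ forward to the nearest $j>i+1$ with $y[j]=x[j]<x[i]$; I would pin this value down by a direct examination of the first few positions beyond $i+1$ under the hypothesis $x[i]<x[i+1]$. For item 3, $\cG$ is the distance from $i$ forward to the nearest $j>i$ with $y[j]<y[i]=x[i+1]$, and the plan is to relate this to $\bG=\PDG_x[i+1]$ by comparing the search "below $x[i+1]$ starting at $i$" in $y$ with the analogous search "below $x[i+1]$ starting at $i+1$" in $x$, being careful about the swapped value at position $i+1$ and the degenerate case $\bG=0$. I expect this last pair to be the main obstacle: the index shift from $i+1$ to $i$ (or vice versa) interacts non-trivially with the swap at the boundary, so one has to split cases on whether the witness of $\bG$ survives the corresponding threshold change and handle the case $\bG=0$ separately.
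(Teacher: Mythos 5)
Your handling of items~\ref{lm:lesser:item2} and~\ref{lm:lesser:item4} is correct and is essentially the paper's own argument: for $j<i$ the sequence is unchanged, position $i$ is never a witness for $\dD$ because $y[i]=x[i+1]>x[i]=y[i+1]$, and the witness of $\aD$ (when it exists) remains a witness for $\cD$ because $x[j]<x[i]<x[i+1]$. The problem lies in items~\ref{lm:lesser:item1} and~\ref{lm:lesser:item3}, and it is not the technical obstacle you anticipate but a misreading of the notation. You take $\dG=\PDG_y[i+1]$, $\cG=\PDG_y[i]$ and $\bG=\PDG_x[i+1]$, i.e.\ you attach the reverse-table labels to positions the same way as the forward ones. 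Under that reading item~\ref{lm:lesser:item1} is simply false: for $x=(1,2,3)$ and $i=1$ we get $y=(2,1,3)$ and $\PDG_y[2]=0\neq 1$. Likewise your version of item~\ref{lm:lesser:item3} fails, since $\PDG_y[i]$ always equals $1$ (the witness is at $i+1$, because $y[i+1]=x[i]<x[i+1]=y[i]$) whereas $\PDG_x[i+1]-1$ is arbitrary. The convention actually in force (see \Cref{fig:zones}, and forced by the opening remark of the paper's proof that $\bG\neq 1$ whenever $x[i]<x[i+1]$) attaches the reverse-table labels in the order the two positions are met when reading right to left: $\aG=\PDG_x[i+1]$, $\bG=\PDG_x[i]$, $\cG=\PDG_y[i+1]$, $\dG=\PDG_y[i]$.

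Once the labels are read this way, the two items you flagged as the ``main obstacle'' are the easy ones, which is why the paper disposes of them in a line or two. Item~\ref{lm:lesser:item1}: $\dG=\PDG_y[i]=1$ directly from $y[i+1]<y[i]$. Item~\ref{lm:lesser:item3}: the value $x[i]$ moves from position $i$ in $x$ to position $i+1$ in $y$ while every entry strictly to the right of $i+1$ is untouched; hence its nearest smaller value on the right either does not exist on both sides (giving $\cG=\bG=0$) or sits at the same absolute position $i+\bG$ with $\bG\geq 2$, giving $\cG=(i+\bG)-(i+1)=\bG-1$. Your proposed strategy for these two items --- ``direct examination of the first few positions beyond $i+1$'' and a comparison of two searches below the threshold $x[i+1]$ --- cannot close the argument, because the identities you would be trying to verify do not hold for the quantities you are actually computing. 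The fix is purely a relabelling, after which your overall plan (unfold the definitions, use $y[j]=x[j]$ off $\{i,i+1\}$, and compare thresholds using $x[i]<x[i+1]$) coincides with the paper's proof.
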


\begin{proof}
Suppose $x[i] < x[i+1]$, we have $\bD = 1$ by definition of the parent-distance (\Cref{def:pd}) and $\bG\neq1$ by definition of the reverse parent-distance
 (\Cref{rev-pd}). Then, if a swap occurs at position $i$, $y[i] > y[i+1]$ and we have:
\begin{enumerate}
    \item $\dG = 1$ by \Cref{rev-pd}.
    
    \item If $\aD = 0$,  
        $x[i]$ is the smallest element in $x[1\ldots i]$ by \Cref{def:pd}. Which implies $y[i+1]$ is the smallest element in $y[1\ldots i+1]$ and thus $\dD = 0$ by Definition~\ref{def:pd}.
        
        Otherwise, $x[i]$ (resp. $y[i+1]$) is not the smallest element in $x[1\ldots i]$ (resp. $y[1\ldots i+1]$). 
        $y[i+1]$ has been pushed away from its parent in $y[1\ldots i-1]$ by one position compared to $x[i]$ and its parent in $x[1\ldots i-1]$. 
        Thus, $\dD = \aD + 1$.
    
    \item If $\bG = 0$, 
        $x[i]$ is the smallest element in $x[i\ldots m]$ by \Cref{rev-pd}. Which implies $y[i+1]$ is the smallest element in $y[i+1\ldots m]$, and thus $\cG = 0$ by \Cref{rev-pd}.
        
        Otherwise, since $x[i]<x[i+1]$, we have $\bG > 1$ by \Cref{rev-pd} and $x[i]$ (resp. $y[i+1]$) is not the smallest element in $x[i\ldots m]$ (resp. $y[i+1\ldots m]$). 
        $y[i+1]$ has been pushed closer to its parent in $y[i+2\ldots m]$ by one position when compared to $x[i]$ and its parent in $x[i+2\ldots m]$. 
        Thus, $\cG = \bG - 1$.
    
    \item If $\aD > 0$, 
        that means there is an element smaller than $x[i]$ at position $i - \aD$ by Definition~\ref{def:pd}. After the swap, the parent-distance of $y[i]$ either refers to that same element at position $i - \aD$ or to a closer one that is smaller than $y[i]$ if such an element exists, and thus $\cD \leq \aD$.
        
        Otherwise, the only information we have is $\cD \leq i - 1$ by \Cref{def:pd}.
\end{enumerate}
\end{proof}

In the following lemma, we give all possibles values for~\cref{lm:lesser:item4} of \Cref{lm:lesser}.

\begin{lemma}\label{lm:precision}
Suppose that $x[i] < x[i+1]$, $\cD \in \{ i - pos \mid pos \in \rb(\leftsb(C_i(x)) \} \cup \{\aD \}$.
\end{lemma}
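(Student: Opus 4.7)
The plan is to rewrite $\cD$ as a function of $x$ alone and then classify the resulting parent position. Since $y[j] = x[j]$ for $j < i$ and $y[i] = x[i+1]$, \Cref{def:pd} gives $\cD = i - p_y$ where $p_y = \max\{j < i : x[j] < x[i+1]\}$ (with $\cD = 0$ if this set is empty). Analogously $\aD = i - p_x$ with $p_x = \max\{j < i : x[j] < x[i]\}$. The goal then reduces to showing that $p_y$ is either $p_x$ itself or a node of $\rb(\leftsb(C_i(x)))$.

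Because $x[i] < x[i+1]$, every $j$ admissible for $p_x$ is also admissible for $p_y$, so $p_y \geq p_x$ whenever $p_x$ is defined. The case $p_y = p_x$ yields $\cD = \aD$ directly, placing $\cD$ in the second set of the conclusion. The core case is thus $p_y > p_x$ (or $p_x$ undefined while $p_y$ is defined). By maximality of $p_x$ and the assumption that the values of $x$ are distinct, we then have $x[p_y] > x[i]$, while the definition of $p_y$ gives $x[p_y] < x[i+1]$. I would next argue that $p_y$ lies on the right branch of $C_{i-1}(x)$: by maximality of $p_y$, any $j$ with $p_y < j < i$ satisfies $x[j] \geq x[i+1] > x[p_y]$, which is exactly the condition that $p_y$ be a right-to-left minimum of $x[1\ldots i-1]$, i.e., a right-spine node. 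Since $x[p_y] > x[i]$, the online construction recalled before \Cref{lm:lesser} pops $p_y$ from that spine when node $i$ is inserted, so $p_y$ becomes a node of $\rb(\leftsb(C_i(x)))$ and $\cD = i - p_y$ belongs to the first set.

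The main obstacle I anticipate is the bookkeeping at the boundary where one or both maxima fail to exist. If $p_y$ does not exist, then $x[i+1]$ is smaller than every entry of $x[1\ldots i-1]$, hence (via $x[i] < x[i+1]$) so is $x[i]$, so $p_x$ does not exist either and $\cD = 0 = \aD$. All remaining empty-max combinations are ruled out by the inequality $p_y \geq p_x$. I would therefore isolate the geometric implication \emph{$p_y > p_x \Rightarrow p_y \in \rb(\leftsb(C_i(x)))$} as the key step and dispatch the empty-max cases in a short preamble, so that the case analysis does not obscure the simple right-spine argument that does the real work.
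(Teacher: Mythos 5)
Your proof is correct and follows the same idea as the paper's (one-sentence) proof: the parent of node $i$ in $C(y)$ is either unchanged or is one of the nodes absorbed into $\rb(\leftsb(C_i(x)))$ when node $i$ is inserted. You simply make this rigorous by identifying the parent positions $p_x$ and $p_y$ explicitly and verifying, via the right-to-left-minimum characterization of the right spine, that $p_y$ lands in the claimed set, which is a faithful (and more detailed) rendering of the paper's argument.
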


\begin{proof}
The node $i$ in $C(y)$ either has the same parent as in $C(x)$
or is inserted in the right branch of the left subtree of $C_i(x)$.
\end{proof}

\begin{lemma}\label{lm:greater}
Suppose that $x[i] > x[i+1]$, then the following properties hold:
\begin{enumerate}
    \item\label{lm:greater:item1} $\dD = 1$;
    \item\label{lm:greater:item2} $\dG = \begin{cases}0 &\text{if }\aG = 0 \\ \aG + 1 & \text{otherwise;}\end{cases}$
    \item\label{lm:greater:item3} $\cD = \begin{cases}0 &\text{if } \bD = 0\\ \bD - 1 & \text{otherwise;}\end{cases}$
    \item\label{lm:greater:item4} $\cG \leq \begin{cases} i-1 & \text{if } \aG = 0\\ \aG & \text{otherwise.}\end{cases}$
\end{enumerate}
\end{lemma}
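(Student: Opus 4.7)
The plan is to prove Lemma \ref{lm:greater} as the mirror image of Lemma \ref{lm:lesser}, swapping the roles of the forward parent-distance $\PDD$ and the reverse parent-distance $\PDG$. The hypothesis $x[i] > x[i+1]$ forces $\bG = 1$ by definition of $\PDG$ (the element at position $i+1$ is already smaller than the one at position $i$), and after the swap one has $y[i] = x[i+1] < x[i] = y[i+1]$. The analysis in each item is then the exact dual of the corresponding item in Lemma \ref{lm:lesser}.

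I would handle the four items in parallel with that proof. Item~\ref{lm:greater:item1} is immediate: $y[i]$ is adjacent to and smaller than $y[i+1]$, so $\dD = \PDD_y[i+1] = 1$ by definition of $\PDD$. For item~\ref{lm:greater:item2}, I would relate $\dG = \PDG_y[i]$ to $\aG = \PDG_x[i+1]$ by noting that $y[i+2\ldots m] = x[i+2\ldots m]$ and that position $i+1$ is disqualified as a witness for $\dG$ because $y[i+1] = x[i] > x[i+1] = y[i]$; the search defining $\dG$ therefore coincides with the one defining $\aG$, but measured from $i$ rather than $i+1$, producing the $+1$ shift (and $\dG = 0$ precisely when $\aG = 0$). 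For item~\ref{lm:greater:item3}, I would compare $\cD = \PDD_y[i]$ with $\bD = \PDD_x[i+1]$: the hypothesis $x[i] > x[i+1]$ rules out position $i$ as a witness for $\bD$, so the witness, if any, lies in $x[1\ldots i-1] = y[1\ldots i-1]$ and is also the witness for $\cD$, producing the $-1$ shift.

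Item~\ref{lm:greater:item4} gives only an inequality: when $\aG \neq 0$, the witness of $\aG$ is smaller than $x[i+1]$ and hence smaller than $x[i] = y[i+1]$, so it is a valid---but not necessarily closest---candidate for $\cG = \PDG_y[i+1]$, yielding $\cG \leq \aG$; when $\aG = 0$, only a trivial bound on the length of the forward search remains. The main obstacle is purely notational: the figure's convention puts the pair $\aD, \bD$ at positions $i, i+1$ but the pair $\aG, \bG$ at positions $i+1, i$ (because the reverse parent-distance looks the other way), so one must track carefully which position is under consideration on each line. Beyond this bookkeeping, no new idea past Lemma \ref{lm:lesser} is required.
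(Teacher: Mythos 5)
Your proposal is exactly the paper's intended argument: the paper gives no separate proof of Lemma~\ref{lm:greater} and simply states that ``the proofs are similar to the ones of Lemmas~\ref{lm:lesser} and~\ref{lm:precision}'', and your dualization (swapping the roles of $\PDD$ and $\PDG$, with $\bG=1$ playing the role of $\bD=1$) is the correct way to carry that out. Your treatment of items~\ref{lm:greater:item1}--\ref{lm:greater:item3}, and of the $\aG\neq 0$ half of item~\ref{lm:greater:item4}, is sound, and you correctly identify the positional flip in the $\PDG$ notation ($\aG=\PDG_x[i+1]$, $\bG=\PDG_x[i]$, etc.), which is the only real subtlety.

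One caveat on the $\aG=0$ case of item~\ref{lm:greater:item4}: you appeal to ``a trivial bound on the length of the forward search'', but the quantity being bounded is $\cG=\PDG_y[i+1]$, a \emph{rightward} search, whose trivial bound is $m-i-1$, not the $i-1$ written in the statement. The bound $\cG\le i-1$ is in fact not provable as stated (take $x=(10,1,2,\ldots,9)$ and $i=1$: then $\aG=0$ but $\cG=1>i-1=0$); the $i-1$ appears to be an uncorrected copy of the corresponding case of Lemma~\ref{lm:lesser}, where the bounded quantity $\cD=\PDD_y[i]$ really is at most $i-1$. Your argument implicitly proves the corrected claim $\cG\le m-i-1$, but you should flag the discrepancy rather than assert that the trivial bound gives what is written.
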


In the following lemma, we give all possibles values for~\cref{lm:greater:item4} of \Cref{lm:greater}.

\begin{lemma}\label{lm:precision2}
Suppose that $x[i] > x[i+1]$, $\cG \in \{ pos - i \mid pos \in \lb(\rightsb(C_i(x)) \} \cup \{\aG \}$.
\end{lemma}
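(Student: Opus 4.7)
The plan is to follow the template of Lemma~\ref{lm:precision}, now working with the reverse parent-distance. First I unpack what $\cG$ means concretely: since $y[i+1]=x[i]$ and $y[j]=x[j]$ for every $j>i+1$, the quantity $\cG$ is exactly the distance from $i+1$ to the nearest $j>i+1$ with $x[j]<x[i]$. The whole claim then reduces to locating this nearest position (or showing it does not exist).

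The argument proceeds by a dichotomy on the value of $x[j]$. If $x[j]<x[i+1]$, then $j$ is already the nearest right ancestor of $i+1$ in $C(x)$, which sits at distance $\aG=\PDG_x[i+1]$, accounting for the second member of the union. If instead $x[i+1]\le x[j]<x[i]$, then $j>i+1$ has $x[j]>x[i+1]$ and $j$ appears before the next right ancestor of $i+1$, so $j$ lies inside the right subtree attached to node $i+1$ in $C(x)$, which is the object denoted $\rightsb(C_i(x))$ in the statement.

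It remains to pin down $j$ inside that subtree. The key standard fact is that along the left branch of any Cartesian subtree the position strictly decreases while the value strictly increases. Hence reading $\lb(\rightsb(C_i(x)))$ from the top down gives a chain of candidates ordered by both position and value; the bottom-most element of this chain whose value is still below $x[i]$ is necessarily the smallest position strictly greater than $i+1$ at which the value drops below $x[i]$, because any node off this left branch is a right descendant of some left-branch node and so sits at a strictly larger position. This forces $\cG=pos-i$ for some $pos\in\lb(\rightsb(C_i(x)))$, matching the first member of the union.

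Combining the two cases yields the stated membership. The main technical obstacle is precisely this structural step: ruling out that a closer candidate could hide inside a nested right subtree of a left-branch node. Beyond this, everything is a symmetric rewriting of the Lemma~\ref{lm:precision} argument, exchanging the forward and reverse parent-distances and simultaneously swapping left with right in the Cartesian tree; one may even derive the result formally by applying Lemma~\ref{lm:precision} to the mirror sequence $\widetilde{x}[j]=x[m+1-j]$ and translating $\PDD_{\widetilde{x}}$ back to $\PDG_x$.
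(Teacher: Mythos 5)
Your overall strategy is exactly the paper's: the paper proves this lemma only by declaring it ``similar'' to the proofs of Lemmas~\ref{lm:lesser} and~\ref{lm:precision}, i.e., precisely the mirror argument you sketch (the node carrying the value $x[i]$, which after the swap sits at position $i+1$, either keeps its referent from $C(x)$ --- accounting for $\aG$ --- or acquires a new, closer referent lying on the left branch of the right subtree hanging under node $i+1$). Your structural step ruling out closer candidates hidden inside nested right subtrees of left-branch nodes is correct and is the real content of the argument.

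However, your last step does not follow from your own setup, and the discrepancy is substantive rather than cosmetic. You open by (correctly) identifying $\cG=\PDG_y[i+1]$ as the distance \emph{from $i+1$} to the nearest $j>i+1$ with $x[j]<x[i]$; if that nearest position is $pos$, this yields $\cG=pos-(i+1)$, yet you conclude $\cG=pos-i$. Likewise, the subtree your argument actually uses is $\rightsb(C_{i+1}(x))$ --- the right subtree of node $i+1$ --- which you silently substitute for the $\rightsb(C_i(x))$ of the statement; but under the hypothesis $x[i]>x[i+1]$ we have $\REFD{x}(i)=i+1$, so $\rightsb(C_i(x))$ is empty and the two objects are not the same. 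Concretely, take $x=(3,1,2)$ and $i=1$: then $y=(1,3,2)$, $\cG=\PDG_y[2]=1$, $\aG=\PDG_x[2]=0$, and $\lb(\rightsb(C_1(x)))=\emptyset$, so the literal statement would force $\cG=0$, while your formula $pos-i$ applied to $pos=3\in\lb(\rightsb(C_2(x)))$ gives $2$; only $pos-(i+1)=1$ matches. What your argument (and the exact mirror of Lemma~\ref{lm:precision} via $\widetilde{x}[j]=x[m+1-j]$) actually establishes is $\cG\in\{pos-i-1\mid pos\in\lb(\rightsb(C_{i+1}(x)))\}\cup\{\aG\}$; the statement as printed appears to carry an indexing slip, and a correct write-up must either prove this corrected form explicitly or fix the indices --- it cannot end by asserting $\cG=pos-i$ after having defined $\cG$ as a distance measured from $i+1$.
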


The proofs are similar to the ones of 
\Cref{lm:lesser,lm:precision}.

In the following, we will define the green and blue zones of the parent-distances
tables, which are equal, meaning that they are unaffected by the swap. Also, 
we define the red zones whose values differ by at most $1$.
We strongly invite the reader to use 
\Cref{fig:zones,fig:blue_red_appendix}
to get a better grasp of the definitions.

We first propose the following lemma to help with incoming proofs related
to the different zones. Informally, the idea is that for all positions/nodes
$j$ whose parent is neither $i$ nor $i+1$, 
the values of the parent-distance tables of $y$ at these positions $j$
should be be the same as 
the values of the parent-distance tables of $x$ at these positions $j$.

\begin{lemma}\label{lm:propzones}
  For all $j\in \{1, \ldots, m\} \setminus \{i,i+1\}$:
  \begin{itemize}
      \item 
      if $\PDD_x[j] \notin \{ j-i-1,j-i\}$, then $\PDD_y[j] = \PDD_x[j]$,
      \item
      if $\PDG_x[j] \notin \{ i-j, i+1-j\}$, then $\PDG_y[j] = \PDG_x[j]$.
  \end{itemize}        
\end{lemma}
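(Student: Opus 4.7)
The plan is to exploit the fact that $y = \tau(x,i)$ differs from $x$ only at positions $i$ and $i+1$ (where the values are swapped), so $y[k] = x[k]$ for every $k \notin \{i, i+1\}$ and $\{y[i], y[i+1]\} = \{x[i], x[i+1]\}$ as a multiset. I will prove the first bullet about $\PDD$; the second bullet follows by an entirely symmetric argument on the mirror sequences (reading right-to-left).

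Fix $j \in \{1,\ldots,m\} \setminus \{i, i+1\}$ with $\PDD_x[j] \notin \{j-i-1, j-i\}$. If $j < i$, then $y[1\ldots j] = x[1\ldots j]$, and since $\PDD_y[j]$ depends only on this prefix we immediately get $\PDD_y[j] = \PDD_x[j]$ with no use of the hypothesis. The substantive case is $j > i+1$, where I introduce
\[
S_x = \{k < j \mid x[k] < x[j]\}, \quad S_y = \{k < j \mid y[k] < y[j]\},
\]
and let $M_x, M_y$ be their respective maxima (with the convention that an empty set contributes $0$). Then $\PDD_x[j] = j - M_x$ and $\PDD_y[j] = j - M_y$, and the hypothesis translates precisely to $M_x \notin \{i, i+1\}$.

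The key observation is that $S_x$ and $S_y$ agree outside of $\{i, i+1\}$: for $k \notin \{i, i+1\}$ we have $y[k] = x[k]$ and $y[j] = x[j]$, so the membership of $k$ in the two sets is identical. Hence $S_x \triangle S_y \subseteq \{i, i+1\}$, and in particular $\max(S_x \cap \{i, i+1\}) \leq i+1$ and $\max(S_y \cap \{i, i+1\}) \leq i+1$. I will then split on the three allowed sub-cases for $M_x$. If $M_x = 0$, then neither $x[i]$ nor $x[i+1]$ is less than $x[j]$, so neither $y[i] = x[i+1]$ nor $y[i+1] = x[i]$ is less than $y[j]$, hence $S_y = \emptyset$ and $M_y = 0$. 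If $1 \leq M_x \leq i-1$, then $M_x \in S_y$ (since $M_x \neq i, i+1$), and the same reasoning as above shows $S_y \cap \{i, i+1\} = \emptyset$, while no position strictly between $M_x$ and $j$ outside $\{i, i+1\}$ lies in $S_x$ (hence in $S_y$), giving $M_y = M_x$. If $M_x \geq i+2$, then $M_x \in S_y$ and any candidate larger than $M_x$ would lie outside $\{i, i+1\}$, hence also outside $S_x$, so $M_y = M_x$ again.

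Thus in every case $M_y = M_x$ and $\PDD_y[j] = \PDD_x[j]$. The main obstacle is not any single step but the bookkeeping that the hypothesis $\PDD_x[j] \notin \{j-i-1, j-i\}$ is exactly what excludes the dangerous situation where the nearest-smaller predecessor sits at one of the two swapped positions; once that is ruled out, the two swapped positions are either both below the maximizer (and hence irrelevant) or both above it and not in $S_x$ (so also not in $S_y$).
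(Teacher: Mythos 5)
Your proof is correct and follows essentially the same route as the paper's: both rest on the observation that the swap only exchanges the values at positions $i$ and $i+1$, so the position of the nearest smaller neighbour of $j$ (and every intermediate comparison) is unaffected unless that neighbour sits at $i$ or $i+1$, which is exactly what the hypothesis excludes. The only cosmetic slip is your convention $\PDD_x[j]=j-M_x$ with $M_x=0$ for an empty set (the definition sets $\PDD_x[j]=0$ in that case), but since you establish $M_y=M_x$ in every case this does not affect the conclusion.
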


\begin{proof}
  According to 
  \Cref{def:pd,rev-pd,def:swap}
  we have:
  \begin{itemize}
      \item for all $j<i$, $x[j] = y[j]$: 
      \begin{itemize}
          \item therefore $\PDD_y[j] = \PDD_x[j]$;
          \item if $\PDG_x[j] \notin \{ i-j, i+1-j\}$, then by definition
          for all $k \in \{j+1, \ldots, j+\PDG_x[j] -1\}$, both $x[j]$ and $y[j]$ are 
          smaller than $x[k]$ and greater than $x[j+\PDG_x[j]]$ (which is equal to $y[j+\PDG_x[j]]$).
          Therefore $\PDG_y[j] = \PDG_x[j]$.
      \end{itemize} 
      \item for all $j>i+1$, $y[j] = x[j]$:       
      the proof is similar to the previous item.
  \end{itemize}
\end{proof}
We now introduce the different zones and show how a swap at position $i$ affects them.
\begin{definition}[The green zones]\label{def:green}
The green zones of $\PDD_x$ and $\PDD_y$ are $\PDD_x[1\ldots i-1]$ and 
$\PDD_y[1\ldots i-1]$. 
The green zones of $\PDG_x$ and $\PDG_y$ are $\PDG_x[i+2\ldots m]$ and $\PDG_y[i+2\ldots m]$.
\end{definition}

\begin{lemma}[Green]\label{lm:green}
The green zones of $\PDD_x$ and $\PDD_y$ (resp. $\PDG_x$ and $\PDG_y$) are equal.
\end{lemma}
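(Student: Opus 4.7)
The plan is to prove this lemma directly from the definitions of the swap and of the (reverse) parent-distance, since the green zones lie entirely on the side of the swap where the sequences $x$ and $y$ coincide. In fact, this is precisely the setting where Lemma~\nameref{lm:propzones} applies trivially, but I would prefer to give the one-line argument straight from the definitions rather than routing through that lemma, because the conclusion is stronger than what Lemma~\nameref{lm:propzones} requires (we do not even need the side conditions on the values of $\PDD_x[j]$ and $\PDG_x[j]$).

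First I would recall that, by Definition~\ref{def:swap}, $y[k] = x[k]$ for every $k \in \{1,\ldots,m\} \setminus \{i,i+1\}$. I would then handle the $\PDD$ green zone: fix $j \in \{1,\ldots,i-1\}$. By Definition~\ref{def:pd}, the value $\PDD_x[j]$ is determined solely by the prefix $x[1\ldots j]$, since we only look at indices $1 \le k < j$. Because $j < i$, the prefix $x[1\ldots j]$ is identical to $y[1\ldots j]$, so the same definition applied to $y$ yields $\PDD_y[j] = \PDD_x[j]$.

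Next I would handle the $\PDG$ green zone symmetrically: fix $j \in \{i+2,\ldots,m\}$. By Definition~\ref{rev-pd}, $\PDG_x[j]$ depends only on $x[j\ldots m]$, since we only consider indices $j < k \le m$. Because $j > i+1$, we have $x[j\ldots m] = y[j\ldots m]$, hence $\PDG_y[j] = \PDG_x[j]$.

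There is no serious obstacle here: the statement is essentially a sanity check that positions of the parent-distance tables whose computation does not look into the swapped window $\{i,i+1\}$ are unaffected. The only subtlety worth flagging is making sure the index bounds in Definitions~\ref{def:pd} and~\ref{rev-pd} are used strictly (i.e.\ $1 \le k < j$ for $\PDD$ and $j < k \le m$ for $\PDG$), so that positions $j \le i-1$ and $j \ge i+2$ respectively avoid the swapped pair. Both inclusions are immediate from the definition of the green zones in Definition~\ref{def:green}, which concludes the proof.
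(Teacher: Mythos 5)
Your proof is correct and is essentially the paper's argument: the paper simply cites \Cref{lm:propzones} (whose side conditions are vacuously satisfied in the green zones, since $\PDD_x[j]\ge 0$ while $j-i,j-i-1<0$ for $j<i$, and symmetrically for $\PDG$), and the proof of that lemma for positions $j<i$ and $j>i+1$ is exactly your one-line observation that $\PDD_x[j]$ depends only on $x[1\ldots j]$ and $\PDG_x[j]$ only on $x[j\ldots m]$, which the swap leaves unchanged. Inlining that argument rather than citing the lemma is a presentational choice, not a different route.
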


\begin{proof}
The proof directly follows from 
\Cref{lm:propzones,def:green}.
\end{proof}

\begin{definition}[The blue zones]\label{def:blue}
The blue zones of 
$\PDD_x$ and $\PDD_y$ are $\PDD_x[i+r \ldots m]$ and $\PDD_y[i+r \ldots m]$ where:
$$
    r = 
    \begin{cases}
    \bG  &\text{if } x[i]<x[i+1] \text{ and } \bG > 1 \\  
    \aG+1  &\text{if } x[i] > x[i+1] \text{ and } \aG >0\\
    m-i+1 &\text{otherwise.}
    \end{cases}
$$
The blue zones of $\PDG_x$ and $\PDG_y$ are $\PDG_x[1 \ldots i-\ell]$ and $\PDG_y[1 \ldots i-\ell]$ where:
$$
    \ell = 
    \begin{cases}
    \aD  &\text{if } x[i]<x[i+1] \text{ and } \aD>0 \\  
    \bD-1  &\text{if } x[i] > x[i+1] \text{ and } \bD>1\\
    i &\text{otherwise.}
    \end{cases}
$$
\end{definition}

Note that in the last cases, the blue zones are empty.

\begin{lemma}[Blue]\label{lm:blue}
The blue zones of $\PDD_x$ and $\PDD_y$ (resp. $\PDG_x$ and $\PDG_y$) are equal.
\end{lemma}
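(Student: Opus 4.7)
The plan is to deduce the Blue lemma from Lemma \nameref{lm:propzones}. Given a position $j$ in the blue zone $\PDD_x[i+r \ldots m]$, the key claim I will establish is that the referent of $j$ in $x$ is neither $i$ nor $i+1$, i.e.\ $\PDD_x[j] \notin \{j-i-1, j-i\}$. Once this holds, the first clause of Lemma \nameref{lm:propzones} immediately yields $\PDD_y[j] = \PDD_x[j]$. A symmetric claim for positions $j$ in the blue zone $\PDG_x[1\ldots i-\ell]$, invoking the second clause of Lemma \nameref{lm:propzones}, will handle the $\PDG$ side.

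I will split the $\PDD$ argument along the three branches of the definition of $r$. In the ``otherwise'' branch, where $r = m-i+1$, the blue zone is empty and the statement is vacuous. When $x[i] < x[i+1]$ and $\bG > 1$, so that $r = \bG$, I use the defining property of $\bG$: position $i+1+\bG$ is the nearest position to the right of $i+1$ at which the value is smaller than $x[i+1]$. Combining this with the Cartesian tree construction recalled around \Cref{fig:insere_sommet} and with \Cref{lm:lesser,lm:precision}, I can show that for every $j \ge i+r$ the referent of $j$ in $x$ lies strictly beyond $i+1$, hence outside $\{i, i+1\}$. The second case, $x[i] > x[i+1]$ with $\aG > 0$ and $r = \aG+1$, follows the dual pattern, now using $\aG$ and \Cref{lm:greater,lm:precision2} to place the referent strictly beyond $i+1$.

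For the $\PDG$ blue zone, the same strategy applies in the mirror direction: one reads the sequence from right to left so that $\PDG$ plays the role of $\PDD$, $\ell$ plays the role of $r$, and the pairs $(\aD,\bD)$ play the role of $(\aG,\bG)$. The "otherwise" branch again makes the zone empty; the two substantive branches reduce, via the second clause of Lemma \nameref{lm:propzones}, to verifying that the next-smaller referent of each $j \le i-\ell$ in $x$ avoids positions $i$ and $i+1$.

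The main obstacle is the boundary. Inside the blue zone, far from $i+r$ or $i-\ell$, the claim $\PDD_x[j] \notin \{j-i-1,j-i\}$ is easy because many closer candidate referents between $i+1$ and $j$ already exist (the very presence of a value smaller than $x[i+1]$ at position $i+1+\bG$ typically supplies one). The delicate part is the first position of the zone, $j = i+r$, where I must combine the defining inequalities for $r$ with the structure of the right branch of $C(x)$ at $i+1$, calling on \Cref{lm:lesser,lm:greater,lm:precision,lm:precision2} to rule out the possibility that $i$ or $i+1$ is the referent; the symmetric endpoint $i-\ell$ for $\PDG$ is treated identically.
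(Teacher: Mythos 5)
Your proposal follows essentially the same route as the paper: both reduce the statement to \Cref{lm:propzones} by showing that every position $j$ in the blue zone of $\PDD_x$ satisfies $\PDD_x[j]\notin\{j-i-1,j-i\}$ (and symmetrically for $\PDG_x$), after which \Cref{lm:propzones} gives $\PDD_y[j]=\PDD_x[j]$. One intermediate claim in your sketch is stated too strongly, though: it is not true that the referent of every $j\ge i+r$ lies strictly beyond $i+1$. When $x[j]<x[i+r]$, the value $x[j]$ is smaller than every value in $x[i\ldots i+r]$ (since $x[i+r]$ is, by the choice of $r$, smaller than all of $x[i\ldots i+r-1]$), so the parent of $j$ lies in the green zone or does not exist, i.e.\ $\PDD_x[j]>j-i$ or $\PDD_x[j]=0$. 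The paper accordingly argues by the two-case split $x[j]>x[i+r]$ (giving $\PDD_x[j]\le j-i-r<j-i-1$, using $r>1$) versus $x[j]<x[i+r]$; both cases avoid $\{j-i-1,j-i\}$, which is all that \Cref{lm:propzones} requires. With that repair your argument coincides with the paper's. Two smaller remarks: the endpoint $j=i+r$ needs no special treatment, since it falls squarely into the second case; and the paper's proof additionally uses \Cref{lm:lesser} to check that the zone boundaries read off $y$'s tables agree with those read off $x$'s, a consistency point your sketch omits but which is not needed for the equality of values itself.
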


\begin{proof}
  Suppose $x[i] < x[i+1]$ (therefore $y[i] > y[i+1]$). 
  According to \Cref{def:blue}, the blue zones of $\PDD_x$ and $\PDD_y$ (resp. $\PDG_x$ and $\PDG_y$) are $\PDD_x[i+\bG\ldots m]$ and $\PDD_y[i+\cG +1\ldots m]$ (resp. $\PDG_x[1\ldots i - \aD]$  and $\PDG_y[1\ldots i - (\dD - 1)]$). 
  From \Cref{lm:lesser:item3}  of \Cref{lm:lesser}, 
  we have $\bG = \cG+1$, meaning that the blue zones of $\PDD_x$ and $\PDD_y$ coincide with each other (resp. \cref{lm:lesser:item2} of \Cref{lm:lesser} for  $\PDG_x$ and $\PDG_y$).
  
  Suppose $\bG > 1$, then there exists a position 
  $r=\bG$ such that for all $j \in \{i, \ldots, i+r-1\}$, we have $x[j] > x[i+r]$. 
  For each $k \in \{i+r, \ldots, m\}$, either $x[k] \le x[i+r]$, in which case $\PDD_x[k]$ and $\PDD_y[k]$ both point to the green zone and therefore did not change. Otherwise,
  $\PDD_x[k]$ and $\PDD_y[k]$ point to at least position $i+r$ and are therefore equal.
  
  If $x[i]<x[i+1]$ then by \Cref{rev-pd} it holds that 
   $x[i+r]<x[i]$ and $x[j]>x[i]$ for all $j\in\{i+1,\ldots,i+r-1\}$.
  Let $k\in\{i+r,\ldots,m\}$ be a position the blue zone of $\PDD_x$.
  If $x[k]>x[i+r]$ then $\PDD_x[k]\le k-i-r<k-i$ and then
   by \Cref{lm:propzones} we have $\PDD_y[k]=\PDD_x[k]$.
  If $x[k]<x[i+r]$ then $x[k]<x[j]$ for all $j\in\{i,\ldots,i+r\}$ and
   $\PDD_x[k]> k-i$ and then by \Cref{lm:propzones} we have $\PDD_y[k]=\PDD_x[k]$.
  
  The other cases can be proved in a similar way.
  
  The proof is similar for $x[i] > x[i+1]$.
\end{proof}

\begin{definition}[The red zones]\label{def:red}
If the blue zone of 
$\PDD_x$ is $\PDD_x[i+r \ldots m]$, then the right red zone is $\PDD_x[i+2 \ldots i+r-1]$.
Conversely, if the blue zone of $\PDG_x$ is $\PDG_x[1 \ldots i-\ell]$, then the left red zone is
$\PDG_x[i-\ell+1 \ldots i-1]$.
The same is true for $\PDD_y$ and $\PDG_y$.
\end{definition}

\begin{lemma}[Red]\label{lm:red}
For each position $j>i+1$ in the red zone of $\PDD_x$.
$$\PDD_y[j] = \begin{cases} \PDD_x[j]-1, \text{ if } \PDD_x[j]=j-i \\  \PDD_x[j]+1, \text{ if } \PDD_x[j]=j-i-1 \text{ and }  x[i] > x[j] > x[i+1] ,\\ \PDD_x[j], \text{ otherwise} \end{cases}$$
For each position $j<i$ in the red zone of $\PDG_x$.
$$\PDG_y[j] = \begin{cases} \PDG_x[j]-1, \text{ if } \PDG_x[j]=i+1-j \\  \PDG_x[j]+1, \text{ if }  \PDG_x[j]=i-j \text{ and } x[i] < x[j] < x[i+1] \\ \PDG_x[j], \text{ otherwise} \end{cases}$$
\end{lemma}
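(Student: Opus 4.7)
The plan is to prove the lemma by case analysis on the value of $\PDD_x[j]$ (respectively $\PDG_x[j]$) for $j$ in the red zone. The starting observation is that Lemma~\ref{lm:propzones} already yields $\PDD_y[j] = \PDD_x[j]$ whenever $\PDD_x[j] \notin \{j-i-1,\, j-i\}$, which disposes of part of the ``otherwise'' branch. It therefore suffices to inspect the two distinguished values of the parent-distance.

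First I would handle the case $\PDD_x[j] = j-i$. By Definition~\ref{def:pd} and since all elements are distinct, this forces $x[i] < x[j]$ and $x[k] > x[j]$ for every $k \in \{i+1,\ldots,j-1\}$. After the swap we have $y[i] = x[i+1] > x[j]$, $y[i+1] = x[i] < x[j]$, and $y[k] = x[k] > x[j]$ for $k \in \{i+2,\ldots,j-1\}$, so the largest index $k < j$ with $y[k] < y[j]$ is now $i+1$, giving $\PDD_y[j] = j-(i+1) = \PDD_x[j]-1$.

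Next, the case $\PDD_x[j] = j-i-1$. Here $x[i+1] < x[j]$ and $x[k] > x[j]$ for $k \in \{i+2,\ldots,j-1\}$, while $x[i]$ is a priori unconstrained. After the swap, $y[i] = x[i+1] < y[j]$ is always a candidate parent of $j$, and it remains to decide whether $y[i+1] = x[i]$ is also one. If $x[i] < x[j]$, then $y[i+1] < y[j]$ and the rightmost qualifying position remains $i+1$, so $\PDD_y[j] = \PDD_x[j]$; if instead $x[i] > x[j]$---which, combined with $x[j] > x[i+1]$, is exactly the condition $x[i] > x[j] > x[i+1]$---then $y[i+1]$ is disqualified and the parent shifts to $i$, giving $\PDD_y[j] = j-i = \PDD_x[j]+1$. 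Together with the ``otherwise'' fallback above, this proves the first part of the lemma.

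The argument for $\PDG_x[j]$ is symmetric: one replaces ``largest $k < j$ with $x[k] < x[j]$'' by ``smallest $k > j$ with $x[k] < x[j]$'', so the two distinguished values become $\PDG_x[j] = i+1-j$ and $\PDG_x[j] = i-j$, and the auxiliary condition $x[i] < x[j] < x[i+1]$ in the $+1$ branch arises because the swap now makes $y[i]$ closer to $j$ than $y[i+1]$, reversing the roles of the two positions in the scan. The main bookkeeping obstacle throughout is converting the weak inequalities that come from the parent-distance definition into strict ones via distinctness of the elements, and keeping straight which of $y[i]$ or $y[i+1]$ is examined first when one reads the sequence left-to-right versus right-to-left.
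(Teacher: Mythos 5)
Your proof is correct and follows essentially the same route as the paper: reduce to the two distinguished values $\PDD_x[j]\in\{j-i,\,j-i-1\}$ via Lemma~\ref{lm:propzones}, then track which of $y[i]$, $y[i+1]$ qualifies as the parent of $j$ after the swap. Your write-up is in fact slightly more explicit than the paper's about why $x[i]<x[j]<x[i+1]$ is forced in the first case and about the sign of each comparison, but the underlying argument is the same.
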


\begin{proof}
We only prove the Lemma for positions $j>i+1$ in the red zone of $\PDD_x$, since the logic is exactly the same
for $\PDG_x$.
According to Lemma~\ref{lm:propzones}, if $\PDG_x[j] \neq \PDG_y[j]$ then $\PDG_x[j] \in \{j-i, j-i-1\}$.
\begin{itemize}
    \item if $\PDG_x[j] = j-i$, then the parent of $j$ is $i$ and necessarily, $x[i]<x[i+1]$. When the swap is applied, the parent of $j$ is moved one position closer, therefore  $\PDG_y[j] = \PDG_x[j]-1$.
    \item if $\PDG_x[j] = j-i-1$, then the parent of $j$ is $i+1$. When the swap is applied, there are two possibilities.  Either $x[i] < x[j]$, therefore $y[i+1]<x[j]$ and $\PDG_y[j]=\PDG_x[j]$, or 
    $x[i] > x[j]$, then the parent of $j$ is moved one position further, and $\PDG_y[j] = \PDG_x[j]+1$
\end{itemize}
\end{proof}

\begin{figure}[!ht]
  \begin{minipage}[t]{0.49\textwidth}
      \vspace{0pt}
      \includegraphics[scale=0.35]{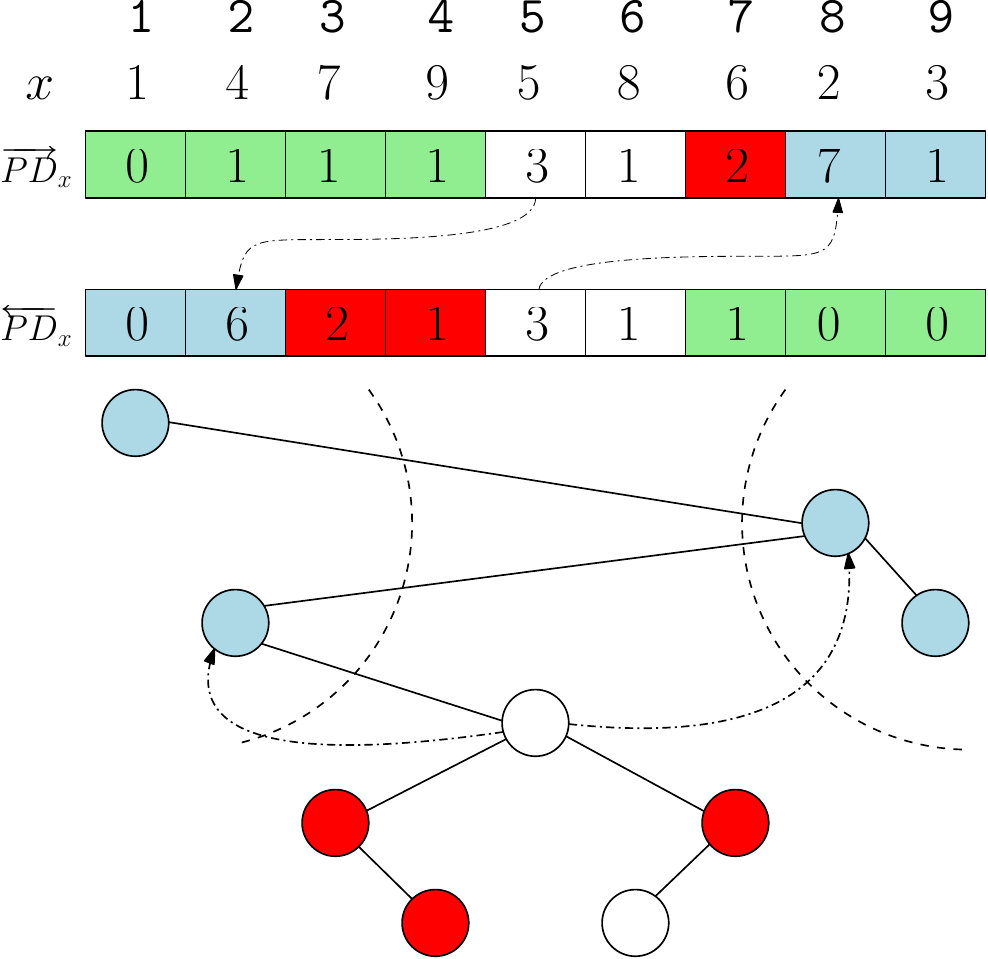}
  \end{minipage}
  \hfill
  \begin{minipage}[t]{0.49\textwidth}
      \vspace{0pt}
      \includegraphics[scale=0.35]{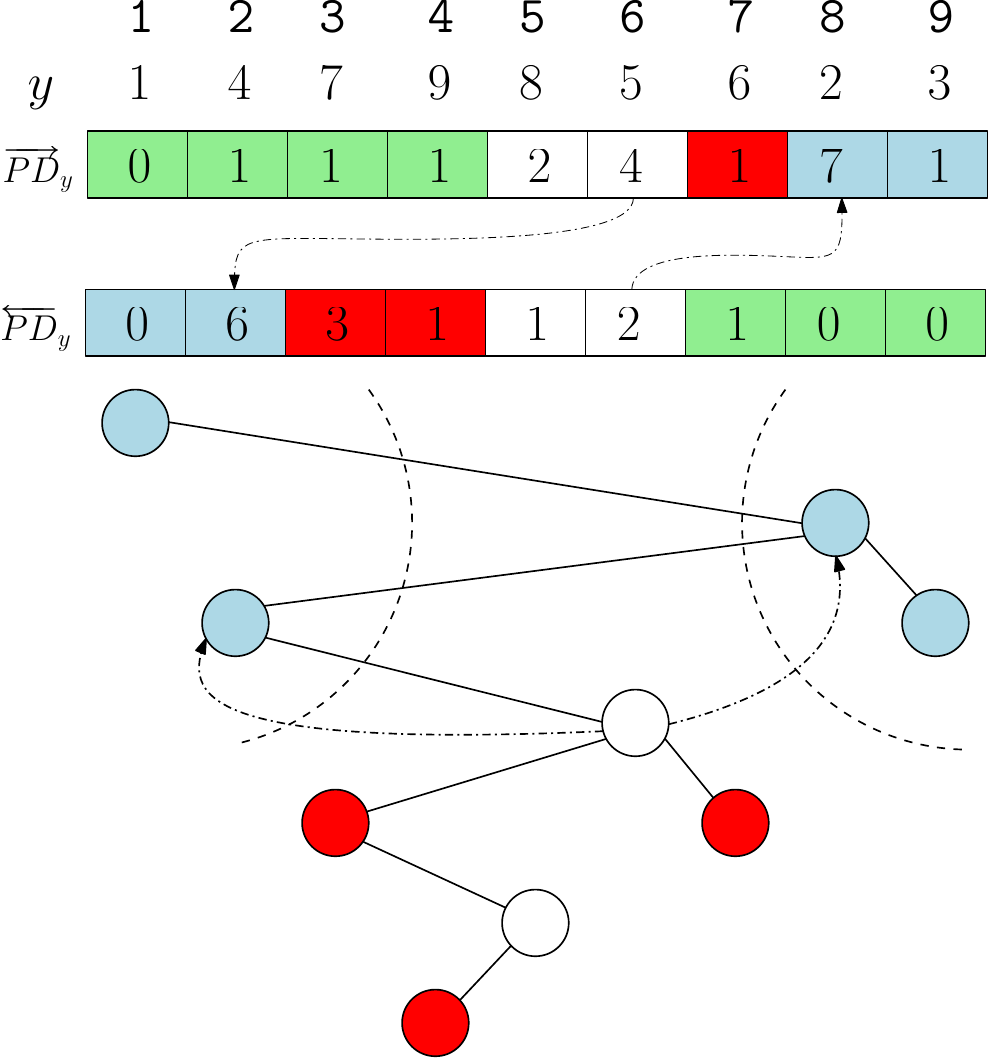}
  \end{minipage}
  \caption{\label{fig:blue_red_appendix}%
  In this figure, swaps are applied at position \texttt{5} on both $x$ and $y$.
  As can be seen on the left part of the Figure, $x[\texttt{5}] < x[\texttt{6}]$, $\ell=\PDD_x[\texttt{5}]$ and $r=\PDG_x[\texttt{5}]$ gives us the position $\texttt{5}+r$ and $\texttt{5}-\ell$ of the first values that are smaller than $x[\texttt{5}]$ and, by extension, smaller than any value between $\texttt{5}-\ell$ and $\texttt{5}+\ell$. Therefore any position smaller than $\texttt{5}-\ell$ in $\PDG_x$ is unaffected by the swap. The same
  goes for any position greater than $\texttt{5}+r$ in $\PDD_x$.
  On the right part of the figure, we have $y[\texttt{5}] > y[\texttt{6}]$, $\ell=\PDD_y[\texttt{6}]-1$ and $r=\PDG_y[\texttt{6}]+1$.}
\end{figure}
We now show that swaps at different positions produce different Cartesian trees.

\begin{lemma}\label{lm:uniondisjointe}
Let $j \in \{1,\ldots, m-1\}$ with $i\neq j$. 
Then $\tau(x,i) \nCTM \tau(x,j)$.
\end{lemma}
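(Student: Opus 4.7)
Without loss of generality I assume $i < j$, and set $y := \tau(x, i)$ and $z := \tau(x, j)$. The key tool will be the standard fact that in the Cartesian tree $C(w)$ of any sequence $w$, the lowest common ancestor of the nodes at positions $a$ and $b$ is the unique position in $[a,b]$ achieving $\min\{w[a], w[a+1], \ldots, w[b]\}$; this follows directly from \Cref{def:ct}, since the subtree of $C(w)$ associated with a contiguous range $[a,b]$ is rooted at the position of the minimum of that range. The plan is to exhibit, in each of two cases, a range whose minimum sits at different positions in $y$ and in $z$, so that the corresponding LCAs lie at different positions, forcing $C(y) \neq C(z)$.

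First I handle the case $j \geq i+2$, where the two swaps act on disjoint position pairs. Here the swap at $j$ does not touch positions $i$ or $i+1$, so $(z[i], z[i+1]) = (x[i], x[i+1])$, whereas by definition $(y[i], y[i+1]) = (x[i+1], x[i])$. Consequently the smaller of $x[i]$ and $x[i+1]$ occupies position $i$ in one of $y, z$ and position $i+1$ in the other, and the LCA of $\{i, i+1\}$ already distinguishes the two Cartesian trees.

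The harder case is $j = i+1$, in which the two local modifications overlap on three consecutive positions. A direct computation gives $(y[i], y[i+1], y[i+2]) = (x[i+1], x[i], x[i+2])$ and $(z[i], z[i+1], z[i+2]) = (x[i], x[i+2], x[i+1])$. These triples are two permutations of the same multiset, so the minimum value $v := \min\{x[i], x[i+1], x[i+2]\}$ coincides. A three-way case split on which of $x[i], x[i+1], x[i+2]$ equals $v$ then shows that the position of $v$ inside $y[i..i+2]$ is never the same as its position inside $z[i..i+2]$: for instance, if $v = x[i]$ then $v$ sits at position $i+1$ in $y$ and at position $i$ in $z$, and the other two cases are symmetric. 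Applying the LCA property to the range $[i, i+2]$ yields different LCAs for $\{i, i+1, i+2\}$ in $C(y)$ and in $C(z)$, hence $C(y) \neq C(z)$.

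The main obstacle is this overlapping case, where a direct comparison via the parent-distance lemmas (\Cref{lm:lesser} and \Cref{lm:greater}) would still require a further case analysis on the two pairs $(i, i+1)$ and $(i+1, i+2)$ simultaneously; the LCA reformulation sidesteps this entirely and reduces the argument to a three-element enumeration.
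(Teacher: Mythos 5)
Your proof is correct, but it takes a genuinely different route from the paper's. The paper argues through its parent-distance machinery: for $j>i+1$ it combines Lemma~\nameref{lm:green} (the table $\PDD_{\tau(x,j)}$ agrees with $\PDD_x$ on all positions $k<j$) with \Cref{lm:lesser,lm:greater} (which force $\PDD_{\tau(x,i)}[i+1]\neq\PDD_x[i+1]$), so the two tables disagree at position $i+1$ and the one-to-one correspondence between parent-distance tables and Cartesian trees finishes the case; for the adjacent case $j=i+1$ it asserts that it suffices to check all sequences of length $3$. You instead use the range-minimum characterization of Cartesian trees --- the LCA of positions $a$ and $b$ in $C(w)$ is the argmin of $w[a\ldots b]$ --- and exhibit a window ($[i,i+1]$ when $j\ge i+2$, $[i,i+2]$ when $j=i+1$) whose minimum lands at different positions in $\tau(x,i)$ and $\tau(x,j)$; I checked all three sub-cases of your $j=i+1$ enumeration and each one does separate the two argmin positions. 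What each approach buys: the paper's proof reuses lemmas already established in the section and stays entirely within the linear-representation framework that the algorithms operate on, but its treatment of $j=i+1$ is the weakest step, since the reduction to length-$3$ sequences is asserted rather than justified; your LCA argument is self-contained, needs only the distinctness of the elements, and handles the overlapping case with a clean three-way enumeration, at the cost of importing one standard fact (the Cartesian-tree/RMQ correspondence) that the paper mentions only in passing via~\cite{demaine09}. If you fold your argument into the paper's context, it would be worth stating and briefly justifying that LCA fact from \Cref{def:ct}, exactly as you sketch.
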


\begin{proof}
Suppose without loss of generality that $j > i $.
If $j>i+1$, then according to Lemma~\nameref{lm:green}, we have:
$$\forall k< j, \PDD_x[k] = \PDD_{\tau(x,j)}[k] = \PDD_{\tau(x,i)}[k] $$
And according to \Cref{lm:lesser}, we have that $\PDD_x[i+1] \neq \PDD_{\tau(x,i)}[i+1]$,
which leads to a contradiction.

Then suppose that $j=i+1$, then it is sufficient to consider what happens on a sequence of length $3$:
having local differences on the parent-distance tables implies having different parent-distances and therefore not $CT$ matching. One can easily check that the lemma is true for each sequence of length $3$.
\end{proof}

\subsection{Computing the position of the swap with the green zones}

We  first show how to compute the position of the possible swap thanks to the green zones.
This is the idea behind the function \algoName{computeCandidates} in line~\ref{algo:testequivPD:linei} of algorithm~\ref{algo:testequivPD}.
If the parent-distances are equal, then the two sequences trivially $CT$ match.
Otherwise, the idea is to rely on a ``pincer movement'' using the green zones. 
According to Lemma~\nameref{lm:green}, the green zones of
$\PDD_x$ and $\PDD_y$ (resp. $\PDG_x$ and $\PDG_y$) are equal. 
From \Cref{lm:lesser,lm:greater}, we also deduce that $\bD \neq \dD$ and $\bG \neq \dG$. 
Unfortunately, we might run into four different cases, depending on whether $\aD$ equals $\cD$ and $\aG$ equals $\cG$. 
\Cref{fig:indistinguability} presents all four possible cases.

\begin{figure}[ht!]
    \centering
    \includegraphics[scale=0.55]{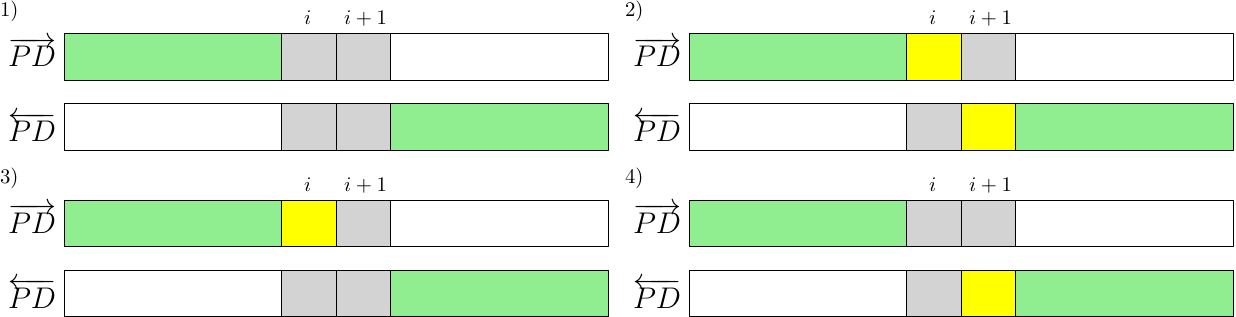}
    \caption{The parent-distance tables of sequences $x$ and $y$ are merged into one. If a zone is colored either in green or yellow, then the tables match, in grey if they do not and white when it is unknown.}
    \label{fig:indistinguability}
\end{figure}

In case 1 of ~\Cref{fig:indistinguability}, there is a gap of length 2 between the green/yellow zones, and we can immediately deduce that the positions of this gap are the only eligible positions for the swapped elements. In case 2, the gap is reduced to 0, but we can once again pinpoint the position $i$ with total accuracy. Lastly, in cases 3 and 4, there is a gap of 1,
and as we currently have no knowledge of the position of the swap, both cases end up indistinguishable,
meaning that we must test the positions on the left of the gap and of the gap itself. From~\Cref{lm:uniondisjointe}, we have that only one such position 
might be a swap, as two swaps on the same sequence would produce different Cartesian trees and thus different parent-distance tables.
Any gap larger than 2 immediately disqualifies the sequences from $CT_\tau$ matching.

Algorithm~\ref{algo:candidates} computes the location of the possible swap, using the green zones as described above. Its parameters $j$ and $k$ correspond to the first indexes at which the parent-distance (resp. reverse parent-distance) tables differ, that is the first encountered grey areas in~\Cref{fig:indistinguability}.

\begin{algorithm}\label{algo:candidates}
  \caption{\algoName{computeCandidates}$(i, j)$}
  \SetAlgoLined
  \SetKwInOut{KwIn}{Input}
  \SetKwInOut{KwOut}{Output}
  \KwIn{Two positions $j$ and $k$}
  \KwOut{The set of positions where a swap may have happened}
  $d \leftarrow k-j$\;
  \If(\tcp*[f]{see case 1 in \Cref{fig:indistinguability}}){$d = 1$}{ 
    return $\{j\}$\; 
  }
  \If(\tcp*[f]{see case 2 in ~\Cref{fig:indistinguability}}){$d = -1$}{
    return $\{k\}$\;
  }
  \If(\tcp*[f]{see cases 3, 4 in ~\Cref{fig:indistinguability}}){$d = 0$}{
    return $\{j-1, j\}$\;
  }
  return $\emptyset$\;
\end{algorithm}

\subsection{The {\em equivalenceTest} function}

Algorithm~\ref{algo:testequivPD}, below, is based on
\Crefrange{lm:lesser}{lm:uniondisjointe}. It takes the parent-distance and reverse parent-distance tables of the pattern and the current window on the text as inputs and returns $True$ if they $CT_\tau$ match and $False$ otherwise.

\begin{algorithm}\label{algo:testequivPD}
  \caption{\algoName{equivalenceTestPD}$((\PDD_p, \PDG_p), (\PDD_x, \PDG_x))$}
  \SetAlgoLined
  \SetKwInOut{KwIn}{Input}
  \SetKwInOut{KwOut}{Output}
  \KwIn{The parent-distance tables of $p$ and $x$}
  \KwOut{$True$ if $p \tauCTM x$, $False$ otherwise}
  $j \leftarrow 2$\;
  \While{$j \le m$ and $\PDD_p[j] = \PDD_x[j]$ }{
  $j \leftarrow j+1$\;
  }
  \If(\tcp*[f]{Exact match}){$j=m+1$}{
        \Return $True$\; 
  }
  $k = m - 1$\;
  \While{$k \ge j$ and $\PDG_p[k] = \PDG_x[k]$ }{
  $k \leftarrow k-1$\;
  }
  $candidatePositions \leftarrow \algoName{computeCandidates}(j, k)$\; \label{algo:testequivPD:linei}
  \ForEach{$pos \in candidatePositions$}{
    \If{Lemmas~\ref{lm:lesser},~\ref{lm:greater},~\nameref{lm:blue} and~\nameref{lm:red} hold for $p, x$ and $pos$}{
    \Return{$True$}\;
    }
  }
  \Return{$False$}\;
\end{algorithm}

\begin{thm}
Given two sequences $p$ and $x$ of length $m$, Algorithm~\ref{algo:testequivPD}
 has a $\Theta(m)$ worst-case time complexity and a $\Theta(1)$ best-case complexity
 and a $\Theta(1)$ space complexity.
\end{thm}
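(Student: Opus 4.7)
The plan is to analyze Algorithm~\ref{algo:testequivPD} component by component: (i) the forward while loop scanning $\PDD_p$ versus $\PDD_x$ up to the first mismatch $j$, (ii) the backward while loop scanning $\PDG_p$ versus $\PDG_x$ down to the first mismatch $k$ (bounded below by $j$), (iii) the call to \algoName{computeCandidates}$(j,k)$, and (iv) the verification \textbf{foreach} loop that checks Lemmas~\ref{lm:lesser},~\ref{lm:greater},~\nameref{lm:blue},~\nameref{lm:red} for each candidate position. Bounding each piece separately, then combining them, should yield all three complexity claims.

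For the worst-case upper bound $O(m)$, I would argue that the two scanning loops together perform at most $2m-3$ comparisons, so they cost $O(m)$. By inspection of Algorithm~\ref{algo:candidates}, \algoName{computeCandidates} returns a set of size at most $2$ in constant time. The \textbf{foreach} loop therefore performs at most two verifications; each verification consists of evaluating Lemmas~\ref{lm:lesser} and~\ref{lm:greater} at positions $pos$ and $pos+1$ (constant many entries of the four parent-distance tables) together with checks of the red and blue zones which, in the worst case, touch each remaining position once and are thus $O(m)$. The overall cost is therefore $O(m)$. For the matching $\Omega(m)$ lower bound I would take $p=x$: the first loop runs through positions $2,\dots,m$ and the algorithm returns $True$ only after $m-1$ comparisons.

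For the best case I would exhibit a constant-time instance. Choose sequences with $\PDD_p[2]\neq \PDD_x[2]$ and $\PDG_p[m-1]\neq \PDG_x[m-1]$; both while loops stop after a single iteration with $j=2$ and $k=m-1$, so for $m\ge 5$ we have $k-j=m-3>1$, \algoName{computeCandidates} falls through its three \textbf{if}-branches and returns $\emptyset$, and the \textbf{foreach} loop performs no iteration. The whole execution uses only a bounded number of comparisons, giving the $O(1)$ best case; the trivial $\Omega(1)$ bound completes the $\Theta(1)$ claim. For the $\Theta(1)$ space bound I would simply note that the algorithm allocates only the indices $j,k$, the integer $d$ inside \algoName{computeCandidates}, the loop variable $pos$, and the set $candidatePositions$ of cardinality at most $2$; the parent-distance tables are inputs, so they are not counted.

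The main obstacle I anticipate is keeping the verification step honest: a naive reading of Lemma~\nameref{lm:blue} alone would seem to force rechecking a $\Theta(m)$-long blue zone per candidate, which is fine for the $O(m)$ upper bound but worth commenting on. The cleanest way to handle it is to observe that the forward scan already certifies equality on $\PDD_p[2..j-1]$ and the backward scan on $\PDG_p[k+1..m-1]$, so the residual work Lemma~\nameref{lm:blue} imposes on top of the scans is only the untouched portion, and the red-zone check of Lemma~\nameref{lm:red} is bounded by the red-zone length which is itself $O(m)$. This preserves the $O(m)$ total cost and is compatible with the $\Theta(1)$ space bound since no auxiliary structure is built.
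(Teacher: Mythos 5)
Your proof is correct; note that the paper states this theorem without any accompanying proof, so there is no authorial argument to compare against, only the informal assumption in Section~3 that \algoName{\eqTest} compares the two representations until a mismatch in each direction plus a constant number of further checks. Your decomposition matches that intended reading, and you go somewhat further than the paper by supplying the explicit $\Omega(m)$ witness ($p=x$ forces the full forward scan before returning $True$), a concrete $O(1)$ best-case instance where \algoName{computeCandidates} returns $\emptyset$, and an honest accounting of the red/blue-zone verification cost, which is the one step a careless reading could miscount; all three $\Theta$ bounds follow as you argue.
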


\subsection{Updating the parent-distance and reverse parent-distance tables}

When searching for a pattern $p$ of length $m$ in a text $t$
 of length $n$, the parent-distance representations of $p$ are computed once in
 a preprocessing phase.
The searching phase looks at the text $t$ through a window of size $m$.
The parent-distance representations of $t[1\ldots m]$
 are first computed, then for $m+1\le j \le n$ the parent-distance representations
 of $t[j-m+1\ldots j]$ are computed by updating the parent-distance representations
 of $t[j-m\ldots j-1]$.
For that, function \algoName{updatePD} in Algorithm~\ref{algo:updatePD} uses:
\begin{itemize}
    \item
    a deque $D$
    for storing the right branch of the Cartesian tree of the current window on $t$. Each element of $D$ consists of  a pair $(val,pos)$ where $val$ is a symbol of $t$ and $pos$ is the associated position;
    \item
    two circular arrays $\PDD$ and $\PDG$ of size $m$ for storing the Parent-distance representations
    of the current window on $t$;
    \item
    a circular array $\notREF$ of size $m$ for storing for
     each position $k$ in the current window, the positions of which $k$ is the referent.
     More formally if $\REFD{{t[j-m\ldots j-1]}}(h) = k$ then $h \in \notREF[k]$.
     Array $\notREF$ is equivalent to table $D$ in~\cite{demaine09}.
\end{itemize}

If the first element of $D$ is equal to $j-m$ then it is removed from $D$. For each $pos \in \notREF[j - m]$, we know they are ``losing their parent'' as we slide the window and we set their parent-distance to 0. We set $\PDD[j - m + 1]$ to 0 according to~\cref{def:pd}, since it is the new ``first'' element after the update. 
We compute the value of $\PDD[j]$ thanks to $D$ as in~\cite{PALP19} and also update $\notREF$ accordingly. 
By~\cref{rev-pd}, we have $\PDG[j] = 0$. 
Lastly, with the addition of node $j$, some reverse-parent distances who were previously equal to 0 may have ``gained a parent'' in $j$ and need to be updated accordingly. 
All said reverse-parent distances are the positions on $\rb(\leftsb(C_j(t[j-m+1\ldots j]))$, that is all the positions that were removed from $D$ and added to the checklist while computing $\PDD[j]$. 

\begin{algorithm}\label{algo:updatePD}
  \caption{\algoName{updatePD}$(\PDD, \PDG, t, j, m, \notREF, D)$}
  \SetAlgoLined
  \SetKwInOut{KwIn}{Input}
  \SetKwInOut{KwOut}{Output}
  \KwIn{$t$: sequence, %
        $j$: position, %
        $m$: window size, %
        $D$: right branch of $C(t[j-m\ldots j-1]$, %
        $\PDD, \PDG$:
        parent-distance representations of $t[j-m\ldots j-1]$,
        $\notREF$: array with information pertaining to the referents in $t[j-m\ldots j-1]$}
  \KwOut{$(D, \PDD, \PDG, \notREF)$ such that %
         $D$: right branch of $C(t[j-m+1\ldots j]$, %
         $\PDD, \PDG$: parent-distance representations of $t[j-m+1\ldots j]$, %
         $\notREF$: array with information pertaining to the referents in $t[j-m+1\ldots j]$}
  $checklist \leftarrow \emptyset$\;
  
  \ForEach{$pos \in \notREF[((j-m) \bmod m)+1]$}{
    $\PDD[(pos \bmod m)+1] \leftarrow 0$\;
  }
  $(val, pos) \leftarrow \dBack(D)$\;
  \If{$pos = j-m$}{
    $\dPopBack(D)$\;
  }
  $\PDD[((j + 1) \bmod m)+1] \leftarrow 0$\;
  \While{$not\ \dIsEmpty(D)$}{
    $(val, pos) \leftarrow \dFront(D)$\;
    \If{$val \leq t[j]$}{
        $break$\;
    }
    $checklist \leftarrow checklist \cup \{pos\}$\;
    $\dPopFront(D)$\;
  }
  \If{$not\ \dIsEmpty(D)$}{
    $\PDD[(j \bmod m) + 1] \leftarrow 0$\;
  }
  \Else{
    $\PDD[(j \bmod m) + 1] = j - pos$\;
    $\notREF[(pos \bmod m)+1] \leftarrow \notREF[(pos \bmod m)+1] \cup \{j\}$\;
  }
  $\dPushFront(D, (\PDD[(j \bmod m) + 1], j))$\;
  $\PDG[(j \bmod m) + 1] \leftarrow 0$\;
  \If{$\PDD[(j \bmod m) + 1] \neq 1$}{
    \ForEach{$pos \in checklist$}{
        $\PDG[(pos \bmod m) +1] \leftarrow j - pos$\;
    }
  }
\end{algorithm}

\section{Swap graph of Cartesian trees \label{sec:graph}}
\subsection{Swap graph}

In this section, we define a graph of Cartesian trees, where two trees are connected
by an edge if one can be obtained from the other using a swap operation.

\begin{definition}[neighbours and neighbourhood, $\ngi$ and $\ng$]\label{def:ng}
Let $T \in \mathcal{C}_m$ be a Cartesian tree with $m$ nodes.
We define $\ngi(T,i)$ the set of Cartesian trees $C(y)$ obtained by 
identifying a sequence $x$ such that $T=C(x)$ and doing a swap on $x$ at position $1\leq i\leq m-1$, that is:
$$
    \ngi(T,i) = \{ C(y) \in \mathcal{C}_m \mid \exists\ x\text{ such that  }T=C(x) \text{ and } y=\tau(x,i)  \}
$$
\end{definition}
Also, we have
$$
    \ng(T) = \bigcup_{i=1}^{m-1} \ngi(T,i)
$$
where all unions are disjoint according to Lemma~\ref{lm:uniondisjointe}. Informally, we will say that $\ngi(T, i)$ are the neighbours of $T$ with a swap at position $i$ and we will call $\ng(T)$ the neighbourhood of $T$.

\begin{definition}[Swap graph $\SGraph_m$]
Let $m$ be an integer. 
The \emph{Swap graph} of Cartesian trees of size $m$, denoted by 
$\mathcal{G}_m=(\mathcal{C}_m, E_m)$,
where $\mathcal{C}_m$ is its set of vertices, and $E_m$ the set of edges such that
 $\{C(x),C(y)\} \in E_m$ if $C(y) \in \ng(C(x))$.
\end{definition}
\Cref{fig:graph} shows the Swap graphs $\SGraph_m$ with $m$ smaller than $4$. 

\begin{figure}[ht!]
\begin{center}
\begin{minipage}{0.49\textwidth}
\begin{center}
    \includegraphics{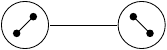}\\
    \vspace{5em}

\includegraphics{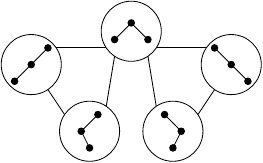}
\end{center}

\end{minipage}
\begin{minipage}{0.5\textwidth}
\includegraphics[scale=0.9]{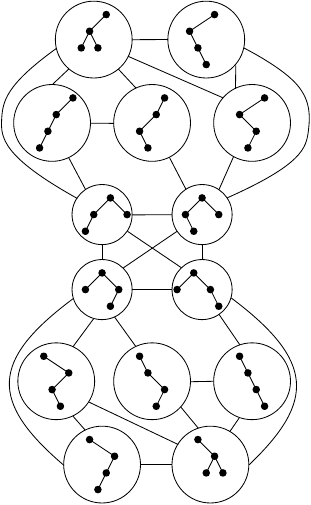}    
\end{minipage}
\end{center}
\caption{Swap graph of Cartesian trees of size $2, 3$ and $4$. \label{fig:graph}}
\end{figure}

In the following, we study the set of neighbours a vertex can have in the Swap graph.
Let $T \in \mathcal{C}_m$ be a Cartesian tree of size $m$ and $\ng(T)$ be its neighbourhood in the Swap graph.

\begin{lemma}\label{lm:decomp}
    Let $T \in \mathcal{C}_{m}$ be a Cartesian tree of size $m$
    with 
    $\leftsb(T)$ of size $k-1$ and 
    $\rightsb(T)$ of size $m-k$.
    We have
    $$
        |\ng(T)| = |\ng(\leftsb(T))| + |\ng(\rightsb(T))| + |\ngi(T,k-1)| + |\ngi(T,k)|
    $$
\end{lemma}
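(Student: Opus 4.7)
The plan is to partition the valid swap positions $\{1,\ldots,m-1\}$ into four groups relative to the root of $T$, which sits at position $k$ in any realizing sequence (since $|\leftsb(T)|=k-1$ and $|\rightsb(T)|=m-k$). The two interior groups $\{1,\ldots,k-2\}$ and $\{k+1,\ldots,m-1\}$ act strictly inside $\leftsb(T)$ and $\rightsb(T)$ respectively, while positions $k-1$ and $k$ are the boundary swaps that involve the root. Since Lemma~\ref{lm:uniondisjointe} guarantees the disjoint decomposition $\ng(T)=\bigsqcup_{i=1}^{m-1}\ngi(T,i)$, the identity will follow once I establish the two counting equalities $|\ngi(T,i)|=|\ngi(\leftsb(T),i)|$ for $1\le i\le k-2$ and $|\ngi(T,i)|=|\ngi(\rightsb(T),i-k)|$ for $k+1\le i\le m-1$.

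To establish the first equality I fix any realizing sequence $x$ with $C(x)=T$, observe that $x[k]$ is the global minimum and that $C(x[1\ldots k-1])=\leftsb(T)$ and $C(x[k+1\ldots m])=\rightsb(T)$. For $i\le k-2$ the swap $\tau(x,i)$ does not touch $x[k]$ or $x[k+1\ldots m]$, so the root of $C(\tau(x,i))$ is still at position $k$, its right subtree is still $\rightsb(T)$, and its left subtree equals $C(\tau(x[1\ldots k-1],i))\in\ngi(\leftsb(T),i)$. This yields a natural map $\ngi(\leftsb(T),i)\to\ngi(T,i)$ which is injective (the right subtree being fixed, distinct left subtrees give distinct images) and surjective (any element of $\ngi(T,i)$ must have root $k$ and right subtree $\rightsb(T)$ by the same localization argument). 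The second equality follows by the symmetric argument on $\rightsb(T)$, shifting the swap index by $k$.

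Summing these equalities together with the two untouched boundary terms gives
\begin{align*}
|\ng(T)| &= \sum_{i=1}^{k-2}|\ngi(T,i)|+|\ngi(T,k-1)|+|\ngi(T,k)|+\sum_{i=k+1}^{m-1}|\ngi(T,i)|\\
 &= |\ng(\leftsb(T))|+|\ngi(T,k-1)|+|\ngi(T,k)|+|\ng(\rightsb(T))|,
\end{align*}
which is the claimed decomposition. The degenerate cases $k=1$ or $k=m$ (empty subtree and out-of-range boundary swap) are handled by the natural conventions $|\ng(\emptyset)|=0$ and $|\ngi(T,0)|=|\ngi(T,m)|=0$. The main technical point is the locality claim that a swap at position $i\le k-2$ preserves the root position and the right subtree of the Cartesian tree; once this is read off from the recursive definition of the Cartesian tree, the rest of the argument is just bookkeeping.
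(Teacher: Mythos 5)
Your proof is correct and follows essentially the same route as the paper's: split the swap positions $\{1,\ldots,m-1\}$ into the two interior ranges and the two boundary positions $k-1,k$, invoke Lemma~\ref{lm:uniondisjointe} for disjointness, and identify $\sum_{i=1}^{k-2}|\ngi(T,i)|$ with $|\ng(\leftsb(T))|$ (and symmetrically on the right). The paper merely asserts these two cardinality identities, whereas you justify them with the explicit locality/bijection argument — a welcome addition, not a divergence.
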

\begin{proof}
    The result follows from Lemma~\ref{lm:uniondisjointe} and the definition of $\ng(T)$ (\Cref{def:ng}).
    Indeed, we have 
        $|\ng(\leftsb(T))| =   |\bigcup_{i=1}^{k-2}  \ngi(T,i)|$ and 
        %
        $|\ng(\rightsb(T))| = |\bigcup_{i=k+1}^{m-1} \ngi(T,i)|$.
\end{proof}

\begin{lemma}\label{lm:decompT}
    Let $T \in \mathcal{C}_{m}$ be a Cartesian tree of size $m$ with a root $i$, 
    $$
        |\ngi(T, i-1)| = \LMP(\rightsb(T))+1 \text{ and } |\ngi(T,i)| = \RMP(\leftsb(T))+1 
    $$
\end{lemma}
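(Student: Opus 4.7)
The plan is to analyze directly how a single swap at position $i$ or $i-1$ reshapes the Cartesian tree around its root. Fix any representative $x$ of $T$; since $i$ is the root, $x[i]$ is the minimum of $x$, so in particular $x[i] < x[i-1]$ and $x[i] < x[i+1]$ whenever those positions exist.

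For $|\ngi(T, i)|$, set $y = \tau(x, i)$. The global minimum $x[i]$ now sits at position $i+1$ of $y$, so $i+1$ is the root of $C(y)$. Consequently $\rightsb(C(y)) = C(y[i+2 \ldots m]) = C(x[i+2 \ldots m]) = \rightsb(T)$, which is independent of the choice of $x$, while
$$\leftsb(C(y)) = C(y[1 \ldots i]) = C(x[1 \ldots i-1] \cdot x[i+1])$$
is the tree obtained by appending $x[i+1]$ to a sequence whose Cartesian tree is $\leftsb(T)$. By the online construction recalled just after \Cref{def:ct} and illustrated in \Cref{fig:insere_sommet}, appending an element attaches a new node at the end of the right branch of $\leftsb(T)$, splitting that branch at the lowest ancestor whose value is smaller than the inserted value. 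The right branch has $\RMP(\leftsb(T))$ nodes, so there are exactly $\RMP(\leftsb(T)) + 1$ distinct attachment outcomes, each producing a distinct Cartesian tree.

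The formula for $|\ngi(T, i-1)|$ then follows by a mirror argument: after swapping positions $i-1$ and $i$, the new root of $C(y)$ is at position $i-1$, the left subtree is preserved as $\leftsb(T)$, and the right subtree becomes $C(x[i-1] \cdot x[i+1 \ldots m])$, obtained by prepending $x[i-1]$ to the sequence underlying $\rightsb(T)$. A prepended element attaches on the left branch of the existing tree, yielding $\LMP(\rightsb(T)) + 1$ distinct placements. The main obstacle is the freedom claim, namely that every one of these attachment points is actually realised by some representative $x$ of $T$. This is a short rank-preservation argument: the Cartesian tree depends only on the order type, and keeping all comparisons that define $T$ intact leaves one degree of freedom in the value at position $i+1$ (respectively $i-1$), which is enough to make it take any prescribed rank relative to the values on the right branch of $\leftsb(T)$ (respectively the left branch of $\rightsb(T)$), while remaining larger than $x[i]$. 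Combined with \Cref{lm:uniondisjointe} to ensure the placements give distinct neighbours, the two identities follow.
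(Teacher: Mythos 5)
Your overall strategy coincides with the paper's: interpret a swap adjacent to the root as detaching the extremal node of one subtree and re-attaching it along the outer branch of the other subtree, count the branch-length-plus-one attachment points, and argue that each is realised by some representative of $T$. Two details are off, though neither breaks the count. First, the departing subtree is \emph{not} preserved: after $y=\tau(x,i)$ the new root is node $i+1$ and $\rightsb(C(y)) = C(x[i+2\ldots m])$, which is $\rightsb(T)$ with its leftmost node removed, not $\rightsb(T)$ itself (for $x=(1,3,2)$ the tree $\rightsb(T)$ has two nodes while $\rightsb(C(\tau(x,1)))$ has one); symmetrically, after $\tau(x,i-1)$ the new left subtree is $C(x[1\ldots i-2])$, i.e.\ $\leftsb(T)$ minus its rightmost node, not $\leftsb(T)$. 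What your argument actually needs --- and what remains true --- is only that this residual subtree is determined by $T$ and not by the chosen representative, so the number of distinct neighbours is governed entirely by the attachment choices in the receiving subtree; you should state it that way. Second, \Cref{lm:uniondisjointe} concerns swaps at two \emph{different} positions and does not give the distinctness of the trees obtained from distinct attachment points for the same swap position; that distinctness is immediate (the attachment depth determines the size of the moved node's left, resp.\ right, subtree) but deserves its own sentence rather than an appeal to that lemma. With these corrections your proof is essentially the paper's, which likewise enumerates the positions $j_1,\ldots,j_{\LMP(\rightsb(T))}$ on the left branch of $\rightsb(T)$ and exhibits, for each interleaving of the moved value among them, a realising sequence.
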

\begin{proof}
    Let $B=\rightsb(T)$.
    We only prove that $|\ngi(T,i-1)| = \LMP(B)+1$ since the rest of the proof uses the same arguments.
    Let $x$ be a sequence such that $C(x) = T$.
    As stated in the definition section (see \Cref{fig:transposition}), the swap $\tau(x,i-1)$ moves the rightmost node of $\leftsb(T)$ into a leftmost position in $B$. 
    Let $j_1, \ldots, j_{\LMP(B)}$ be the positions
    in the sequence $x$ that corresponds to the nodes of the left branch of $B$. For each $\ell < \LMP(B)$, there always
    exists a sequence $y = \tau(x,i)$ such that $y[i] < y[j_1] < \cdots < y[j_\ell] < y[i-1] < y[j_{\ell+1}] < \cdots < y[j_{\LMP(B)}]$. Therefore, there exist exactly $\LMP(B)+1$ possible output trees when applying such a swap. 
\end{proof}

This link between the number of possible outputs for a swap at a given position and the length of a rightmost (resp. leftmost) path in a subtree is given by the Skipped-number (resp. reverse Skipped-number) representation.

\begin{lemma}\label{lm:neighbor}
For every Cartesian tree $T \in \mathcal{C}_{m} $ of size $m \geq 2$, we have 
    $$
        m - 1 \le |\ng(T)| \le 3(m-2)+1
    $$
\end{lemma}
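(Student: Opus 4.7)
\emph{Lower bound.} The inequality $|\ng(T)| \geq m - 1$ is immediate from Lemma~\ref{lm:uniondisjointe}. Indeed, for each $i \in \{1, \ldots, m-1\}$ and any $x$ with $C(x) = T$, the swap $y = \tau(x, i)$ produces a Cartesian tree $C(y) \in \ngi(T, i)$, so each $\ngi(T, i)$ is non-empty; together with the pairwise disjointness granted by Lemma~\ref{lm:uniondisjointe}, this gives $|\ng(T)| = \sum_{i=1}^{m-1} |\ngi(T, i)| \geq m - 1$.

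\emph{Upper bound.} I plan to prove $|\ng(T)| \leq 3(m-2) + 1$ by strong induction on $m$, using the decomposition of Lemmas~\ref{lm:decomp} and~\ref{lm:decompT}. Let $T$ have root $k$, left subtree $L$ of size $a$ and right subtree $R$ of size $b$, with $a + b = m - 1$. The base case $m = 2$ gives $|\ng(T)| = 1 = 3(m-2) + 1$. When $a = 0$ (or symmetrically $b = 0$) the formulas collapse to $|\ng(T)| = |\ng(R)| + 1 \leq 3(m-1) - 5 + 1 = 3m - 7 \leq 3(m-2) + 1$, so these cases are easy. The difficult case is $a, b \geq 1$, where
\[
|\ng(T)| = |\ng(L)| + |\ng(R)| + \RMP(L) + \LMP(R) + 2.
\]
A direct use of $|\ng(L)| \leq 3a - 5$ and $|\ng(R)| \leq 3b - 5$ is insufficient, because $\RMP(L)$ and $\LMP(R)$ may each grow linearly with the size of the corresponding subtree.

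\emph{Strengthened hypothesis.} To close the induction I will strengthen the claim by proving simultaneously that, for every Cartesian tree $T$ of size $n \geq 1$,
\[
|\ng(T)| + \RMP(T) \leq 3n - 2, \quad |\ng(T)| + \LMP(T) \leq 3n - 2, \quad |\ng(T)| + \LMP(T) + \RMP(T) \leq 3n - 1.
\]
All three inequalities hold trivially when $n = 1$. In the inductive step, the case $a, b \geq 1$ for the third inequality decomposes the quantity on $T$ as the sum of the same quantity on $L$ and on $R$ plus $4$, which closes using the inductive hypothesis. The first two inequalities then close via the same case analysis, using the third inequality on one of the two subtrees to absorb the extra $\RMP$ or $\LMP$ term. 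Once these auxiliary bounds are established, the difficult case of the main induction becomes
\[
|\ng(T)| = \bigl(|\ng(L)| + \RMP(L)\bigr) + \bigl(|\ng(R)| + \LMP(R)\bigr) + 2 \leq (3a - 2) + (3b - 2) + 2 = 3(m-2) + 1,
\]
finishing the proof. The main obstacle will be identifying the correct set of mutually supporting invariants and threading the three simultaneous inductions consistently across the boundary cases where one subtree is empty (and thus its $\RMP$ or $\LMP$ is taken to be $0$); the remaining calculations in each case are elementary.
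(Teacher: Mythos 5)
Your proof is correct, but it takes a genuinely different route from the paper. The paper avoids induction entirely: it bounds each term by $|\ngi(T,j)| \le \LMP(\rightsb(T_{j+1})) + \SN[j] + 1$ (wastefully adding both candidate values from Lemma~\ref{lm:decompT} regardless of which case applies), and then sums over $j$ using the two global, amortized inequalities $\sum_{j} \SN[j] \le m-1$ and $\sum_{j} \LMP(\rightsb(T_j)) \le m-2$ --- each of which holds because every node is skipped at most once during the online construction (and symmetrically for the reverse). This yields $(m-2)+(m-2)+(m-1) = 3(m-2)+1$ in three lines. Your argument instead runs a structural induction on the tree via Lemmas~\ref{lm:decomp} and~\ref{lm:decompT}, and correctly identifies that the naive inductive hypothesis fails because $\RMP(\leftsb(T))$ and $\LMP(\rightsb(T))$ can be linear; your three strengthened invariants $|\ng(T)|+\RMP(T)\le 3n-2$, $|\ng(T)|+\LMP(T)\le 3n-2$ and $|\ng(T)|+\LMP(T)+\RMP(T)\le 3n-1$ do check out (the third is tight at $n=1$ and propagates with exactly the needed slack, and the boundary cases with an empty subtree go through with room to spare), and they combine to give $3(m-2)+1$ in the main case exactly as you compute. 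What the paper's approach buys is brevity and an explicit amortization interpretation (each node contributes $O(1)$ to the neighbourhood size in total); what yours buys is independence from the Skipped-number representation and a family of tight auxiliary bounds on $|\ng(T)|$ in terms of the branch lengths that could be reused elsewhere. Your lower-bound argument coincides with the paper's (non-emptiness of each $\ngi(T,i)$ plus disjointness from Lemma~\ref{lm:uniondisjointe}).
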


\begin{proof}
Let us first consider the following claim that is easily verified by \Cref{def:SN}:
$$\forall m \geq 2, \sum_{j=1}^{m} \SN[j] \leq m - 1.$$

And its converse, where $T_j$ is the subtree of $T$ enrooted in $j$:
$$\forall \ell < m, \sum_{j=\ell}^{m} \LMP(\rightsb(T_j))  \leq m - \ell - 1$$

From these two inequalities and \Cref{lm:decompT}, we also immediately have:
$$
    \forall j \leq m - 1, |\ngi(T, j)| = 
    \begin{cases}
        \LMP(\rightsb(T_{j+1})) + 1,\ if\ x[j] > x[j+1]\\
        \SN[j] + 1,\ otherwise
    \end{cases}
$$

Now, let us recall that, according to \Cref{lm:uniondisjointe}, we have
$$
    |\ng(T)| = \sum_{j=1}^{m-1} |\ngi(T,j)|
$$

This gives us the following upper bound on the size of the neighbourhood:

\begin{align*}
|\ng(T)| & \leq \sum_{j=1}^{m-1} \left(\LMP(\rightsb(T_{j+1})) + \SN[j] + 1\right)\\
         & \leq \sum_{j=1}^{m-1} \LMP(\rightsb(T_{j+1})) + \sum_{j=1}^{m-1} \SN[j] + \sum_{j=1}^{m-1} 1\\
         & \leq \sum_{j=2}^{m} \LMP(\rightsb(T_{j+1})) + \sum_{j=1}^{m-1} \SN[j] + m - 1\\
         & \leq (m - 2) + (m - 2) + m - 1\\
         & \leq 3(m-2)+1
\end{align*}

\end{proof}
We use the previous lemma to obtain a lower bound on the diameter of the Swap graph.

\begin{lemma}
    The diameter of the Swap graph $\SGraph_m$ is $\Omega(\frac{m}{\ln{m}})$.
\end{lemma}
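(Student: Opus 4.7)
The plan is to combine a standard Moore-type volume bound with the two ingredients already at hand: the maximum degree of $\SGraph_m$ (supplied by \Cref{lm:neighbor}) and the total number of vertices of $\SGraph_m$, which equals the $m$-th Catalan number $|\mathcal{C}_m|=\frac{1}{m+1}\binom{2m}{m}$.

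First I would fix an arbitrary vertex $T_0 \in \mathcal{C}_m$ and let $D$ denote the diameter of $\SGraph_m$. By \Cref{lm:neighbor}, every vertex of $\SGraph_m$ has degree at most $\Delta := 3(m-2)+1$. The set of vertices reachable from $T_0$ in at most $D$ edges exhausts $\mathcal{C}_m$, so the classical Moore bound gives
\[
|\mathcal{C}_m| \;\le\; 1+\Delta+\Delta(\Delta-1)+\cdots+\Delta(\Delta-1)^{D-1} \;\le\; (\Delta+1)^{D}.
\]
Taking logarithms, $D \ge \log|\mathcal{C}_m|/\log(\Delta+1)$.

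Next I would invoke the well-known asymptotics of the Catalan numbers, $|\mathcal{C}_m| \sim 4^m/(m^{3/2}\sqrt{\pi})$, which yields $\log|\mathcal{C}_m| = m\log 4 - \Theta(\log m) = \Theta(m)$. Since $\Delta+1 \le 3m-4$, we also have $\log(\Delta+1) = \Theta(\log m)$. Putting the two estimates together,
\[
D \;\ge\; \frac{\log|\mathcal{C}_m|}{\log(\Delta+1)} \;=\; \frac{\Theta(m)}{\Theta(\log m)} \;=\; \Omega\!\left(\frac{m}{\ln m}\right),
\]
which is the claimed lower bound.

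The argument is essentially routine once the ingredients are in place, so I do not anticipate a real obstacle; the only minor subtlety is to ensure the Moore bound is applied correctly (handling the trivial cases $m\le 2$ separately, where $\Delta\le 1$ and $|\mathcal{C}_m|$ is constant so the asymptotic statement is vacuous) and to cite the Catalan asymptotics cleanly. Everything else is a direct consequence of \Cref{lm:neighbor} and a comparison of $\Theta(m)$ versus $\Theta(\log m)$.
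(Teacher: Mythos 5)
Your proposal is correct and follows essentially the same route as the paper: both arguments count the vertices of $\SGraph_m$ by the Catalan number $\frac{1}{m+1}\binom{2m}{m}$, bound the maximum degree by $O(m)$ via \Cref{lm:neighbor}, and conclude by a volume (Moore-type) argument that the diameter is at least $\log|\mathcal{C}_m|/\log(\Delta+1)=\Omega(m/\ln m)$. Your version is in fact slightly more carefully stated than the paper's, which simply solves $(3m)^k=\frac{1}{m+1}\binom{2m}{m}$ for $k$.
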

\begin{proof}
    The number of vertices in the graph is equal to the number of binary trees enumerated by the Catalan numbers, that is $\frac{\binom{2m}{m}}{m+1}$. Since the maximal degree of a vertex is less than $3m$ according to \Cref{lm:neighbor}, the diameter is lower bounded by the value $k$ such
    that:
    $$(3m)^k = \frac{\binom{2m}{m}}{m+1}$$
    $$\implies k = \frac{\ln\left(\frac{\binom{2m}{m}}{m + 1} \right)}{\ln(3) + \ln(m)}$$
    $$\implies k = \frac{2m\ln(2m) - 2m\ln(m) - \ln(m + 1)}{\ln(3) + \ln(m)}$$
    By decomposing $2m\ln(2m)$ into $2m\ln{2} + 2m\ln{m}$ we obtain
    $$\implies k = \frac{2m\ln(2)}{\ln(3) + \ln(m)} - \frac{\ln(m + 1)}{\ln(3) + \ln(m)}$$
which corresponds to the announced result.
\end{proof}

\subsection{An Aho-Corasick based algorithm}

The idea of the following method is to take advantage of the upper bound on
the size of the neighbourhood of a given Cartesian tree in the Swap graph.
Given a sequence $p$, we compute the set of its neighbours $\ng(C(p))$,
then we compute the set of all parent-distance tables and build the 
automaton that recognizes this set of tables using the Aho-Corasick method
 for multiple Cartesian tree matching~\cite{PALP19}.
Then, it is sufficient to read the parent-distance table of the text into the automaton
and check whenever we reach a final state.



In order to compute the neighbourhood of a given tree $T$ of size $m$, we compute the parent-distance tables of every set of neighbours of said tree if a swap occurs at position $i$, for all $1 \leq i \leq m -1$. We need only distinguish two cases for every position $i$, whether $x[i] < x[i+1]$ or not. 

\begin{algorithm}\label{algo:ngPD}
  \caption{\algoName{buildAhoCorasickAutomaton}}
  \SetAlgoLined
  \SetKwInOut{KwIn}{Input}
  \SetKwInOut{KwOut}{Output}
  \KwIn{The parent-distance table $\PDD_x$ of a sequence $x$ of length $m$}
  \KwOut{The Aho-Corasick Automaton that recognizes $\{\PDD_x\} \cup\{\PDD_y \mid C(y) \in \ng(C(x))  \} $}
  $\mathcal{A} \leftarrow$ Compute a trie that recognizes $\PDD_x$\;
  \For{$i \in \{1,\ldots, m\}$}{
    $\ng \leftarrow $ Compute $\ng(C(x),i)$ according to 
    \Cref{lm:lesser,lm:precision,lm:greater,lm:precision2} and \nameref{lm:red}\;
    \ForEach{$\PDD_y \in \ng$}{
       Add $\PDD_y$ in $\mathcal{A}$\;
    }
  }
  Compute the failure function in $\mathcal{A}$\;
  Return $\mathcal{A}$ \;
\end{algorithm}
Lines $1$ and $5$ in Algorithm~\ref{algo:ngPD} is the classical method to add a word in the language recognized by a trie.
Line $6$ can be computed using~\cite{PALP19}.

The Aho-Corasick automaton contains at most $\mathcal{O}(m^2)$ states.
The following theorem can be obtained from Section $4.2$ in~\cite{PALP19}.
\begin{thm}
Given two sequences $p$ and $t$ of length $m$ and $n$, the Aho-Corasick based
 algorithm (Algorithm~\ref{algo:ngPD}) has an $\mathcal{O}((m^2 + n)\log(m) )$ worst-case time complexity
 and an $\mathcal{O}(m^2)$ space complexity.
\end{thm}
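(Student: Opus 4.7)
The plan is to bound separately the time for the preprocessing phase (building the trie, adding all neighbours, and computing the Aho-Corasick failure function) and for the search phase (feeding the parent-distance table of $t$ into the resulting automaton), then combine them.

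First I would bound the size of the automaton. By Lemma~\ref{lm:neighbor}, $|\ng(C(p))| \le 3(m-2)+1 = \mathcal{O}(m)$, so the set of tables to be recognised has cardinality $\mathcal{O}(m)$, each of length $m$. Hence the underlying trie $\mathcal{A}$ has at most $\mathcal{O}(m^2)$ states, giving the claimed $\mathcal{O}(m^2)$ space bound (each state carries only a constant amount of extra information plus pointers managed via a balanced data structure, as in Park \emph{et al.}'s multi-pattern construction).

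Next I would bound the preprocessing time. The initial table $\PDD_p$ is computed in $\mathcal{O}(m)$ time. For each $i \in \{1,\ldots,m-1\}$, Lemmas~\ref{lm:lesser}, \ref{lm:precision}, \ref{lm:greater}, \ref{lm:precision2} and~\nameref{lm:red} tell us exactly which entries of $\PDD_p$ must be modified and in what way, so each individual neighbour $\PDD_y$ can be generated in $\mathcal{O}(m)$ time, and all $\mathcal{O}(m)$ neighbours together in $\mathcal{O}(m^2)$ time. Insertion of each neighbour into the trie walks $\mathcal{O}(m)$ edges; because the transition labels are drawn from a set of size up to $m$, locating the correct child at each node costs $\mathcal{O}(\log m)$ when transitions are stored in a balanced search tree, as done in Section~4.2 of~\cite{PALP19}. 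This gives $\mathcal{O}(m^2 \log m)$ for the trie construction. The failure-function construction for the Cartesian-tree Aho-Corasick automaton, as explained in Section~4.2 of~\cite{PALP19}, runs in time proportional to the total trie size with the same $\log m$ factor for transitions, and therefore adds only $\mathcal{O}(m^2 \log m)$.

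Finally the search phase reads the parent-distance table of $t$ (which is computed and updated online in $\mathcal{O}(n)$ total amortised time by Algorithm~\ref{algo:updatePD}) through $\mathcal{A}$. Each of the $n$ symbols triggers a constant amortised number of automaton transitions (by the standard Aho-Corasick argument), each costing $\mathcal{O}(\log m)$ to follow the appropriate goto or failure edge, so the search takes $\mathcal{O}(n \log m)$. Adding the preprocessing and search costs yields the overall $\mathcal{O}((m^2 + n)\log m)$ time bound. The main subtlety is the $\log m$ factor on transitions: this is forced because the parent-distance alphabet grows with $m$, and one cannot afford an $m$-indexed table of outgoing edges at every state without blowing up space; everything else is essentially a direct translation of Park \emph{et al.}'s multi-pattern Cartesian-tree matching complexity to our family of $\mathcal{O}(m)$ patterns of length $m$.
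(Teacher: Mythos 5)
Your proposal is correct and follows exactly the route the paper intends: the $\mathcal{O}(m)$ bound on $|\ng(C(p))|$ from Lemma~\ref{lm:neighbor} gives the $\mathcal{O}(m^2)$ state bound, and the $\log m$ factor per transition comes from the Aho-Corasick construction for Cartesian tree matching in Section~4.2 of~\cite{PALP19}, which is all the paper itself invokes. Your write-up is in fact more detailed than the paper's, which leaves the argument as a citation.
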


\section{Skipped-number representation when one swap occurs \label{sec:skipped}}
\newcommand{\SNl}{\overleftarrow{\mathit{SN}}}
\newcommand{\GS}{\mathit{GS}}


Again, in this section, let $x$ be a sequence of length $m$,
 $i\in\{1,\ldots,m-1\}$ be an integer, and $y=\tau(x,i)$.
It is divided in two parts.
The first one 
characterizes the differences between the \SNRepres\
of $x$ and the \SNRepres\ of $y$.
 The second part explains
how to update the \SNRepres\ of a text factor.

\subsection{Skipped-number tables}

In this subsection, we show
that the \SNRepres\
of $x$ and the \SNRepres\ of $y$ can differ in at most $3$ positions (\Cref{lm:swaptomismatch}).

We pinpoint the possible locations of those changes (\Cref{lm:pos,lm:swaptomismatch})
and finally, we show that we can precisely determine how the values in those positions change
by looking at a constant number of information (\Cref{lm:changesLess,lm:changesMore}).

We start by characterizing the positions where the \SNRepres\ of $y$ is equal to the \SNRepres\ of $x$.
Recall that it can be assumed that the sequences are totally ordered. In the case
of a partial order, one can linearize the
sequence in order to obtain a total ordering.
However, by Definition~\ref{def:swap}, for any sequence $x[1\ldots m]$ and $i \in \{1,\ldots, m-1\}$ such that $x[i]=x[i+1]$ it is not considered a swap.

\begin{lemma}\label{lm:pos}
  $\SN_y[j] = \SN_x[j]$ for all position $j \le m$ such that 
  $$
    j \notin \{i, i+1, \REFD{x}(i), \REFD{x}(i+1), \REFD{y}(i), \REFD{y}(i+1)\}.
  $$ 
\end{lemma}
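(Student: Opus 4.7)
The plan is to express the Skipped-number values through referents and then track which referents can possibly change under the swap. By the remark following \Cref{def:ref}, a position $k$ belongs to $\rbs_x[j]$ iff $\REFD{x}(k) = j$, so $\SN_x[j] = |\{k : \REFD{x}(k) = j\}|$. It therefore suffices to show that for every $j$ outside the stated set, the fiber $\{k : \REFD{x}(k) = j\}$ coincides with $\{k : \REFD{y}(k) = j\}$.

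First I would establish a stability claim: for every position $k \notin \{i, i+1\}$ whose referent $\REFD{x}(k)$ is also not in $\{i, i+1\}$, one has $\REFD{y}(k) = \REFD{x}(k)$. I split by the relative position of $k$. If $k > i+1$, then all entries of $y$ relevant to computing $\REFD{y}(k)$ agree with those of $x$, so the claim is immediate. If $k < i$, I distinguish the sub-case $\REFD{x}(k) < i$ (the referent is found before reaching position $i$, so the entries at $i$ and $i+1$ are never consulted) from the sub-case $\REFD{x}(k) > i+1$; in the latter the definition of $\REFD{x}(k)$ forces $x[i], x[i+1] \geq x[k]$, and swapping them preserves $y[i], y[i+1] \geq y[k] = x[k]$, while the entry at $\REFD{x}(k)$ is unchanged, so the referent in $y$ is still $\REFD{x}(k)$.

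The remaining positions whose referent could possibly change are exactly $k \in \{i, i+1\}$ and $k < i$ with $\REFD{x}(k) \in \{i, i+1\}$. The first class contributes only to the fibers at $\REFD{x}(i), \REFD{x}(i+1), \REFD{y}(i), \REFD{y}(i+1)$. For the second class I would verify that $\REFD{y}(k) \in \{i, i+1\}$ as well, by a short case split: if $\REFD{x}(k) = i$ then $\REFD{y}(k)$ is $i$ or $i+1$ according to whether $x[i+1] < x[k]$, and if $\REFD{x}(k) = i+1$ then $x[i] \geq x[k]$, so after the swap $y[i] = x[i+1] < x[k]$ and $\REFD{y}(k) = i$. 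Hence such positions can only alter the fibers at $j \in \{i, i+1\}$.

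Combining these two observations, for any $j \notin \{i, i+1, \REFD{x}(i), \REFD{x}(i+1), \REFD{y}(i), \REFD{y}(i+1)\}$ the two fibers coincide, giving $\SN_x[j] = \SN_y[j]$. The main subtlety will be the sub-case $k < i$ with $\REFD{x}(k) > i+1$, where one must carefully argue that swapping $x[i]$ and $x[i+1]$ does not introduce a new smaller value before position $\REFD{x}(k)$; the clean observation that $\{x[i], x[i+1]\} = \{y[i], y[i+1]\}$ as multisets and that both entries dominate $x[k]$ makes this step immediate.
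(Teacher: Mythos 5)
Your proof is correct and follows essentially the same route as the paper's: both rest on the identification $\SN_x[j] = \lvert\{k : \REFD{x}(k)=j\}\rvert$ (the remark after \Cref{def:ref}) and the observation that only nodes $i$ and $i{+}1$ can change fiber, apart from nodes whose referent merely toggles between $i$ and $i{+}1$. Your case analysis on $k<i$ versus $k>i{+}1$ makes explicit a stability claim that the paper's proof leaves implicit, but the underlying argument is the same.
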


\begin{proof}
Recall that a swap is an operation that either moves node $i$ from the rightmost path of the 
left subtree of node $i+1$ to a leftmost path of its right subtree or moves node $i+1$ from 
the leftmost path of the right subtree to a rightmost path of the left subtree of node $i$.

According to ~\cref{def:SN}, the \SNRepres\ only changes on the positions $j$ 
where $\RMP(\leftsb(C_j(x)))$ is modified, that is the number of nodes on the rightmost path of the left subtree of $j$ is modified. Positions $i$ and $i+1$ might be modified because their left subtree might be.
A position $j \notin \{i,i+1\}$ can only be affected if either $i$ or $i+1$ is in 
$\rb(\leftsb(C_j(x)))\cup \rb(\leftsb(C_j(y)))$. By~\cref{def:ref}, this position is in 
$\{\REFD{x}(i), \REFD{x}(i+1), \REFD{y}(i), \REFD{y}(i+1)\}$.
\end{proof}

Next, we show that 
the \SNRepres\ of $y$ cannot differ from 
the \SNRepres\ of $x$
for two distinct positions in 
$\{\REFD{x}(i), \REFD{x}(i+1), \REFD{y}(i), \REFD{y}(i+1)\}$.
This implies that there are at most $3$ mismatches between $\SN_x$ and $\SN_y$.

\begin{lemma}
\label{lm:swaptomismatch}
There exist at most $3$ positions $j$ such that $\SN_x[j] \neq \SN_y[j] $, where $j\in\{i, i+1, \REFD{x}(i), \REFD{x}(i+1), \REFD{y}(i), \REFD{y}(i+1)\}$.
\end{lemma}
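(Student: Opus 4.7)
The plan is to sharpen the size-six bound supplied by \Cref{lm:pos} by establishing that the set $S = \{i, i+1, \REFD{x}(i), \REFD{x}(i+1), \REFD{y}(i), \REFD{y}(i+1)\}$ either contains at most three elements or contains at least one position at which $\SN_x$ and $\SN_y$ agree. The argument splits on whether $x[i] < x[i+1]$ or $x[i] > x[i+1]$; since the two cases are analogous, I focus on $x[i] < x[i+1]$.

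First I would compute $\REFD{y}(i)$ and $\REFD{y}(i+1)$ directly from \Cref{def:ref} using $y = \tau(x, i)$. Because $y[i+1] = x[i] < x[i+1] = y[i]$, the position $i+1$ is the smallest index $j > i$ with $y[j] < y[i]$, so $\REFD{y}(i) = i+1$. Because $y[i+1] = x[i]$ and $y[j] = x[j]$ for every $j > i+1$, the smallest $j > i+1$ with $y[j] < y[i+1]$ coincides with the smallest $j > i$ with $x[j] < x[i]$, so $\REFD{y}(i+1) = \REFD{x}(i)$. Hence $S \subseteq \{i, i+1, \REFD{x}(i), \REFD{x}(i+1)\}$. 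Since $x[i] < x[i+1]$ forces every $j$ with $x[j] < x[i]$ to also satisfy $x[j] < x[i+1]$, we have $\REFD{x}(i+1) \le \REFD{x}(i)$ (treating undefined values as ``absent''); if the two coincide, then $|S| \le 3$ and we are done.

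In the remaining subcase $\REFD{x}(i+1) < \REFD{x}(i)$, I would prove that $\SN_x[\REFD{x}(i)] = \SN_y[\REFD{x}(i)]$. Setting $j = \REFD{x}(i)$, the remark following \Cref{def:ref} identifies $\rb(\leftsb(C_j(x)))$ with $\{h : \REFD{x}(h) = j\}$ and $\rb(\leftsb(C_j(y)))$ with $\{h : \REFD{y}(h) = j\}$. A case split on $h$ would yield: for $h < i$ with $\REFD{x}(h) = j$ both $x[i]$ and $x[i+1]$ exceed $x[h]$, so the swap leaves $\REFD{y}(h) = j$; for $h = i$, $\REFD{y}(i) = i+1 \ne j$; for $h = i+1$, $\REFD{y}(i+1) = j$; and for $i+1 < h < j$ the referent depends only on values to the right of $i+1$, which are unchanged. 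Consequently $\rb(\leftsb(C_j(y)))$ is obtained from $\rb(\leftsb(C_j(x)))$ by removing $i$ and inserting $i+1$, so both lists have the same length.

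The main obstacle will be the case analysis in this last step, specifically the check that the swap does not redirect $\REFD{y}(h)$ away from $j$ for $h < i$ with $\REFD{x}(h) = j$; this requires exploiting that only the two entries at positions $i$ and $i+1$ change and that they both remain above $x[h]$ after the swap. The symmetric case $x[i] > x[i+1]$ is handled by the same argument with the roles of $\REFD{x}(i)$ and $\REFD{x}(i+1)$ exchanged; there $\REFD{x}(i) = i+1$ and the position whose $\SN$ value is preserved turns out to be $\REFD{x}(i+1)$.
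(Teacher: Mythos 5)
Your proposal is correct and follows essentially the same route as the paper's proof: you compute $\REFD{y}(i)=i+1$ and $\REFD{y}(i+1)=\REFD{x}(i)$ (for $x[i]<x[i+1]$) to collapse the six candidate positions of \Cref{lm:pos} to at most four, and then show that at the extra position $\REFD{x}(i)$ the skipped set merely exchanges $i$ for $i+1$, so $\SN_x$ and $\SN_y$ agree there; the paper does the same, just organized as a case split on $\REFD{x}(i)=\REFD{x}(i+1)$ versus not before splitting on the sign of $x[i]-x[i+1]$. The step you flag as the main obstacle goes through exactly as you anticipate, since for $h<i$ with $\REFD{x}(h)=j$ both swapped values stay above $x[h]$.
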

\begin{proof}
    The aim of this proof is either to show that two positions are equal in the set or to show that there cannot be
    differences in the \SNRepres of $x$ and in the \SNRepres of $y$ at these positions.
    We distinguish two cases:
    \begin{enumerate}
        \item $\REFD{x}(i) = \REFD{x}(i+1)$ \label{lem:proof:refi_eq_refi1} (see in \Cref{fig:1}: $\tau(x,5)$): 
            let $j=\REFD{x}(i)$,
            if $j\neq -1$ this implies that $x[i]$ and $x[i+1]$ are both on the rightmost path of $C(x[1\ldots j-1])$. 
            If $j=-1$ then this implies that $x[i]$ and $x[i+1]$ are both on the rightmost path of $C(x[1\ldots m])$. 
            Hence, in both scenarios $x[i]<x[i+1]$.
            It also implies that $\REFD{x}(i+1) = \REFD{y}(i+1)$ and $\REFD{y}(i) = i+1$. 
            Therefore, the only positions for which the $\SN_y$ table can
            be different from the $\SN_x$ table are $i$, $i+1$, and $\REFD{x}(i)$.
        
        \item $\REFD{x}(i) \neq \REFD{x}(i+1)$: this case implies the following two scenarios.
        \begin{enumerate}
            \item
            $x[i] < x[i+1]$ (see in~\Cref{fig:1}: $\tau(x,4)$) \label{lem:proof:refi_neq_refi1_i_lti_1}: 
            since $y[i+1]<y[i]$, then we have $\REFD{y}(i) = i+1$.
            By~\Cref{def:ref} we have $\REFD{x}(i)=\REFD{y}(i+1)$.
            Since $\REFD{x}(i) \neq \REFD{x}(i+1)$, we have $x[\REFD{x}(i)]<x[i]<x[\REFD{x}(i+1)]<x[i+1]$ then
            $x[\REFD{x}(i)]<y[i+1]<x[\REFD{x}(i+1)]<y[i]$ then
            $x[i+1]\not\in\rb(C(x[1\ldots \REFD{x}(i)-1]))$ and
            $y[i]\not\in\rb(C(y[1\ldots \REFD{x}(i)-1]))$ and thus
            $\RMP(C(x[1\ldots \REFD{x}(i)-1])) = 
            \RMP(C(y[1\ldots \REFD{x}(i)-1]))$ 
            and thus $\SN_x[\REFD{x}(i)] = \SN_y[\REFD{x}(i)]$.
            Therefore, the only positions for which the $\SN_y$ table can
            be different from the $\SN_x$ table are $i$, $i+1$, and $\REFD{x}(i+1)$.%
            \item $x[i] > x[i+1]$ (see in \Cref{fig:1}: $\tau(x,2)$ and $\tau(x,3)$)\label{lem:proof:refi_neq_refi1_i_gti_1}:
            By~\cref{def:ref}, we know that $\REFD{x}(i)=i+1$.
            Let us recall that node $i+1$ is the last node added to the rightmost path of $C_{i+1}(x)$ and $C_{i+1}(y)$.
            For position $\REFD{x}(i+1)$, we have two options, either node $i+1$ will be later skipped by another node or not:
            \begin{enumerate}
                \item  \label{lem:proof:refi_neq_refi1_i_gti_1_i}               
                If node $i+1$ is skipped, then $\REFD{x}(i+1)\neq -1$ (see in \Cref{fig:1}: $\tau(x,2)$), we have $\REFD{x}(i+1) = \REFD{y}(i)$. 
                    Suppose we have $\REFD{y}(i+1) = \REFD{y}(i) = \REFD{x}(i+1)$.
                    In that case, the only positions for which the $\SN_y$ table can differ from the $\SN_x$ table are $i$, $i+1$, and $\REFD{y}(i+1)$.
                    Otherwise, this implies
                    $y[\REFD{y}(i) = \REFD{x}(i+1)]<y[i]<y[\REFD{y}(i+1)]<y[i+1]$ then
                    $x[\REFD{x}(i+1)]<x[i+1]<x[\REFD{y}(i+1)]<x[i]$ then
                    $x[i]\not\in\rb(C(x[1\ldots \REFD{y}(i)-1]))$ and
                    $y[i+1]\not\in\rb(C(y[1\ldots \REFD{y}(i)-1]))$ and thus
                    $\RMP(C(x[1\ldots \REFD{y}(i)-1])) = 
                    \RMP(C(y[1\ldots \REFD{y}(i)-1]))$ 
                    and thus $\SN_x[\REFD{y}(i)] = \SN_y[\REFD{y}(i)]$.
                    And once again, the only positions for which the $\SN_y$ table can differ from the $\SN_x$ table are $i$, $i+1$, and $\REFD{y}(i+1)$.%
                \item \label{lem:proof:refi_neq_refi1_i_gti_1_ii}               
                    If node $i+1$ isn't skipped, then $\REFD{x}(i+1)=-1$ (see in \Cref{fig:1}: $\tau(x,3)$) and it implies that $\REFD{y}(i)=-1$. 
                    Hence, the only positions for which the $\SN_y$ table can differ from the $\SN_x$ table are $i$, $i+1$, and $\REFD{y}(i+1)$.%
            \end{enumerate} 
        \end{enumerate}
    \end{enumerate}
\end{proof}

\begin{figure}[ht!]
            %
            \begin{minipage}{0.49\textwidth}
                \ref{lem:proof:refi_eq_refi1}\\
                \includegraphics[scale=0.7]{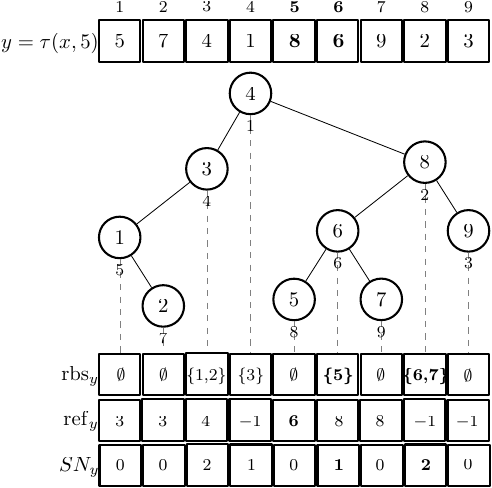} 
            \end{minipage}
            \begin{minipage}{0.49\textwidth}
                \ref{lem:proof:refi_neq_refi1_i_lti_1}\\
                \includegraphics[scale=0.7]{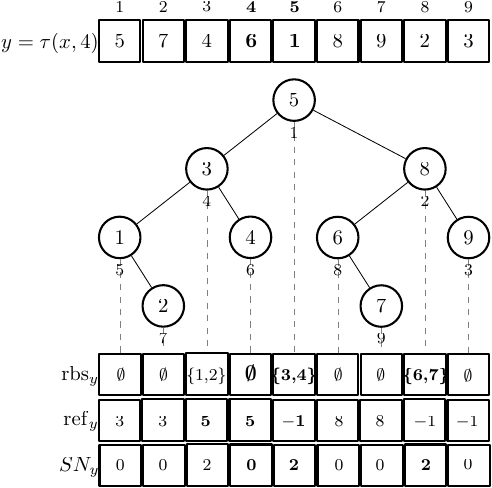} 
            \end{minipage}
            \begin{minipage}{0.49\textwidth}
               \ref{lem:proof:refi_neq_refi1_i_gti_1_i}\\
               \includegraphics[scale=0.7]{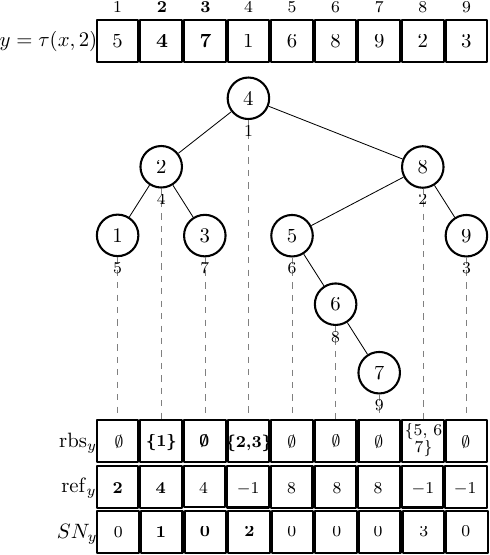} 
            \end{minipage}
            \begin{minipage}{0.49\textwidth}
                \ref{lem:proof:refi_neq_refi1_i_gti_1_ii}\\
                \includegraphics[scale=0.7]{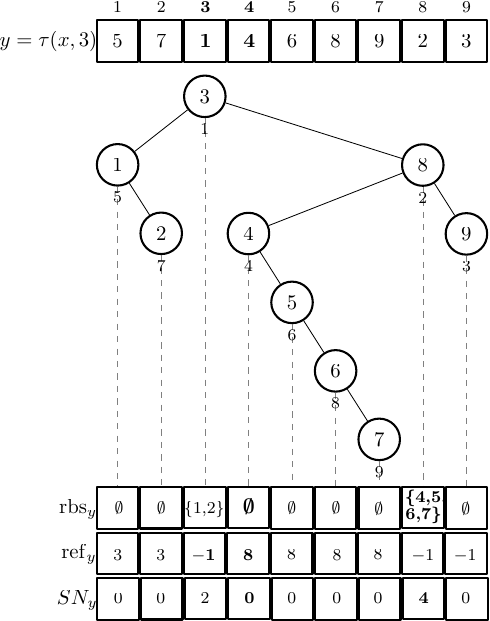} 
            \end{minipage}
            \caption{\label{fig:1}%
            From left to right, we show the effect of swaps on the sequence $x= (5,7,4,1,6,8,9,2,3)$ (from \Cref{fig:C-X}) at positions $5, 4, 2$ and $3$, which, see \Cref{lm:swaptomismatch,lm:changesLess,lm:changesMore}, respectively
             corresponds to cases 
             \ref{lem:proof:refi_eq_refi1}. ($\REFD{x}(i)\neq \REFD{y}(i)$), 
             \ref{lem:proof:refi_neq_refi1_i_lti_1}. ($\REFD{x}(i+1) \neq \REFD{y}(i+1)$  ),
             \ref{lem:proof:refi_neq_refi1_i_gti_1_i} ($\REFD{x}(i+1)=\REFD{y}(i) \neq \REFD{y}(i+1)$) and 
             \ref{lem:proof:refi_neq_refi1_i_gti_1_ii} ($\REFD{x}(i+1)\neq \REFD{y}(i+1)$).
             Values that are different from the ones in tables of
             \Cref{fig:C-X} are bolded.
             As one can see, $\SN_x$ and $\SN_y$ differ in at most 3 positions.
            }
            
\end{figure}

We now show that the \SNRepres\ of $x$ and $y$ at position $\REFD{y}(i)$
 cannot be different unless the \SNRepres of $x$ and $y$ at position $\REFD{x}(i+1)$ are different.

\begin{coro}\label{coro:swaptomismatch}
If $\REFD{y}(i)\ne -1 $ and $\REFD{x}(i+1)\ne -1$ then
 $\SN_x[\REFD{y}(i)] \neq \SN_y[\REFD{y}(i)]$ iff $\SN_x[\REFD{x}(i+1)] \neq \SN_y[\REFD{x}(i+1)]$
\end{coro}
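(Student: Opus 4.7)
The plan is to case-split on the sign of $x[i]-x[i+1]$, mirroring the case analysis in the proof of \Cref{lm:swaptomismatch}. The case $x[i]>x[i+1]$ is immediate: $y[i]=x[i+1]<x[i]=y[i+1]$ implies that $i+1$ cannot be the referent of $i$ in $y$, and since $y$ and $x$ agree at every position past $i+1$, we obtain $\REFD{y}(i)=\REFD{x}(i+1)$. The two positions in the statement coincide and the equivalence is trivial.

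The substantive case is $x[i]<x[i+1]$, where $y[i+1]=x[i]<x[i+1]=y[i]$ forces $\REFD{y}(i)=i+1$. Setting $r:=\REFD{x}(i+1)$, the claim reduces to showing that $\SN_x[i+1]\neq\SN_y[i+1]$ if and only if $\SN_x[r]\neq\SN_y[r]$. I plan to show that, under $r\neq -1$, both sides hold unconditionally, which yields the equivalence automatically.

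For the first inequality: along the right branch of $C(x[1\ldots i])$ values are strictly increasing and bounded above by $x[i]<x[i+1]$, so inserting $i+1$ skips no node and $\SN_x[i+1]=0$. In $C(y[1\ldots i])$ the last node on the right branch is $i$ with value $x[i+1]$, which is skipped by the insertion of $i+1$ of value $x[i]<x[i+1]$, yielding $\SN_y[i+1]\geq 1$.

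For the second inequality I would compare $\rb(C(x[1\ldots r-1]))$ and $\rb(C(y[1\ldots r-1]))$ directly. Both branches share the prefix $P$ consisting of the nodes of $\rb(C(x[1\ldots i-1]))$ of value less than $x[i]$. After inserting positions $i$ and $i+1$, the right branch in $x$ continues with $i$ then $i+1$, whereas in $y$ only $i+1$ (of value $x[i]$) remains, node $i$ having been absorbed into $\leftsb(C_{i+1}(y))$. Subsequent insertions of $x[i+2],\ldots,x[r-1]$ all carry values strictly greater than $x[i+1]$ by minimality of $r$, so they interact only with the bottom of the right branch and produce an identical tail $Q$ in both sequences. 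Thus $\rb(C(x[1\ldots r-1]))=P,i,i+1,Q$ while $\rb(C(y[1\ldots r-1]))=P,i+1,Q$. A short case analysis on the sign of $x[i]-x[r]$ (distinguishing $\REFD{x}(i)=r$ from $\REFD{x}(i)>r$) then shows that the set of nodes skipped by $r$ is always larger in $x$ than in $y$ by exactly one, giving $\SN_x[r]=\SN_y[r]+1$. The main obstacle will be making this synchronization argument rigorous, in particular justifying that inserting $x[i+2],\ldots,x[r-1]$ produces the same tail $Q$ in both sequences despite the structural difference between the two right branches.
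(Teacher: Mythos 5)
Your proposal is correct and follows essentially the same route as the paper's proof: you split on the sign of $x[i]-x[i+1]$ (which matches the case analysis inherited from \Cref{lm:swaptomismatch}), observe that for $x[i]>x[i+1]$ the two positions $\REFD{y}(i)$ and $\REFD{x}(i+1)$ coincide, and for $x[i]<x[i+1]$ show that both inequalities hold unconditionally, with $\SN_x[i+1]=0\neq\SN_y[i+1]$ and $\SN_x[\REFD{x}(i+1)]=\SN_y[\REFD{x}(i+1)]+1$. Your right-branch synchronization argument for the tail $Q$ is sound (all values at positions $i+2,\ldots,r-1$ exceed $x[i+1]$, so they never reach node $i+1$ in either tree) and merely spells out in more detail what the paper asserts in one line, namely that node $i+1$ leaves the rightmost path of $C(y[1\ldots\REFD{x}(i+1)-1])$.
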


\begin{proof}
We consider all cases of~\Cref{lm:swaptomismatch}.
\begin{enumerate}
    \item 
    Let $j = \REFD{x}(i) = \REFD{x}(i+1)$:
    Let us recall that $x[i] < x[i+1]$, $\REFD{y}(i) = i+1$, $\SN_x[i+1] = 0$, $\SN_y[i+1] \neq 0$ and that $\REFD{y}(i+1) = j$.
    Then, nodes $i$ and $i+1$ are skipped by node $j$ in $x$ while node $i$ is skipped by $i+1$ which is in turn skipped by $j$ in $y$, thus $\SN_x[j] = \SN_y[j] + 1$. 
    \item 
    $\REFD{x}(i) \neq \REFD{x}(i+1)$:
    \begin{enumerate}
        \item 
        $x[i] < x[i+1]$:
        Let us recall that $\REFD{y}(i) = i+1$,
        $\SN_x[i+1]=0$ and $\SN_y[i+1]\neq 0$.
        We have $\SN_x[\REFD{x}(i+1)] = \SN_y[\REFD{x}(i+1)] +1$ since node $i+1$ is not on the rightmost path of $C(y[1\ldots \REFD{x}(i+1)-1])$.
            
        \item 
        $x[i] > x[i+1]$:
        \begin{enumerate}
            \item 
            $\REFD{x}(i+1)\neq -1$:
            Let us recall that $j = \REFD{y}(i) = \REFD{x}(i+1)$.~\Cref{lm:swaptomismatch} shows that $\SN_x[j] = \SN_y[j]$.
            \item 
            $\REFD{x}(i+1)=-1$:
            Let us recall that we also have $\REFD{y}(i) = -1$ according to~\Cref{lm:swaptomismatch}.
        \end{enumerate}
    \end{enumerate}
\end{enumerate}
\end{proof}

However, having at most $3$ mismatches in the \SNRepres\ does not imply exactly one swap in the sequence, since more than one swap can affect the same positions provided above in \Cref{lm:swaptomismatch}.

\begin{example}
    Let $x=(8,7,6,5)$ and $\SN_x=(0,1,1,1)$ while $y=(7,8,5,6)$ and $\SN_y=(0,0,2,0)$. There are $3$ mismatches between $\SN_x$ and $\SN_y$, while there are two swaps between $x$ and $y$.
\end{example}

Thus, we propose a stronger lemma based on the four case analysis in the proof of \Cref{lm:swaptomismatch}, where we focus on the affected positions and show how exactly the \SNRepres\ differs at these positions when a swap occurs.

let $k$ denote the number of positions $1\leq j<i$ on the rightmost path of $C(x[1\ldots i-1])$ where $x[j]>x[i+1]$. 
Note that since we proved in \Cref{coro:swaptomismatch} that $\REFD{y}(i)$ cannot be different from $\REFD{x}(i)$ unless $\REFD{x}(i+1)$ is different from $\REFD{y}(i+1)$, thus we omit it in the next lemma. 

\begin{lemma}
    \label{lm:changesLess}
    If $x[i]<x[i+1]$, then 
    $$\begin{cases}\SN_{y}[i] \in \{0,\ldots, \SN_{x}[i]\} \\ \SN_{y}[i+1] = \SN_x[i]-\SN_{y}[i] + 1 \\ \SN_{y}[\REFD{x}(i+1)] = \SN_{x}[\REFD{x}(i+1)]-1, \text{ if } \REFD{x}(i+1) \neq -1 \end{cases}$$
\end{lemma}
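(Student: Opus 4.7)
The plan is to trace the online Cartesian tree construction for both $x$ and $y$, exploiting that $y[1\ldots i-1]=x[1\ldots i-1]$ (so $C(y[1\ldots i-1])=C(x[1\ldots i-1])$) and that the swap gives $y[i]=x[i+1]>x[i]=y[i+1]$. Each of the three claims will follow from explicitly identifying the rightmost path at the relevant construction step and counting how many of its nodes exceed the value being inserted, which is precisely what $\SN$ records by \Cref{def:SN}.

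For the bound on $\SN_y[i]$, I would observe that inserting $y[i]=x[i+1]$ into $C(x[1\ldots i-1])$ skips exactly the nodes of $\rb(C(x[1\ldots i-1]))$ with value strictly greater than $x[i+1]$. Since $x[i]<x[i+1]$, these form a subset of the nodes skipped by $x[i]$ in the construction of $C(x)$, whose cardinality is $\SN_x[i]$; hence $\SN_y[i]\in\{0,\ldots,\SN_x[i]\}$.

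For $\SN_y[i+1]$, the next step is to describe the right branch of $C(y[1\ldots i])$: it consists of the non-skipped nodes of $\rb(C(x[1\ldots i-1]))$ (all with value at most $x[i+1]$) followed by position $i$ (value $x[i+1]$). Inserting $y[i+1]=x[i]$ skips every entry of this branch exceeding $x[i]$: among the preserved ancestors these are precisely the nodes with value strictly between $x[i]$ and $x[i+1]$, of which there are $\SN_x[i]-\SN_y[i]$ (they are skipped by $x[i]$ in $C(x)$ but not by $y[i]$ in $C(y)$), and position $i$ itself contributes one more, yielding $\SN_y[i+1]=\SN_x[i]-\SN_y[i]+1$.

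Finally, for $j:=\REFD{x}(i+1)\ne -1$, every position strictly between $i+1$ and $j$ has value exceeding $x[i+1]$, so these intermediate insertions skip neither position $i$ nor position $i+1$ in either tree and contribute identical portions to the right branch in $C(x[1\ldots j-1])$ and $C(y[1\ldots j-1])$. The two right branches therefore coincide except that $\rb(C(x[1\ldots j-1]))$ contains position $i$ as the immediate parent of $i+1$, whereas in $C(y)$ position $i$ has been pushed into the left subtree of $i+1$ and is absent from the right branch. When $j$ is inserted there are two subcases: if $x[i]>x[j]$, both $i$ and $i+1$ are skipped in $C(x)$ but only $i+1$ in $C(y)$; if $x[i]<x[j]$, only $i+1$ is skipped in $C(x)$ and neither is skipped in $C(y)$. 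All other contributions to the skip count coincide, so in both subcases $\SN_x[j]-\SN_y[j]=1$. The main obstacle is precisely this last step: one has to justify that the right branches evolve in lockstep between positions $i+1$ and $j$ so that the sole discrepancy is the presence or absence of position $i$, and then carry out the case analysis on $x[i]$ versus $x[j]$ to see that the difference is always exactly one.
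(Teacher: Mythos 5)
Your proof is correct and follows essentially the same route as the paper's: both arguments track how the set of nodes skipped at positions $i$ and $i+1$ is redistributed by the swap (the nodes skipped by $y[i]$ form a subset of those skipped by $x[i]$, and $y[i+1]$ picks up the complement plus node $i$ itself), and both conclude that the referent $\REFD{x}(i+1)$ loses exactly one skipped node. Your treatment of the third claim via the two subcases $x[i]>x[j]$ and $x[i]<x[j]$ is in fact more careful than the paper's one-line justification, which asserts that both $i$ and $i+1$ lie in $\rb(\leftsb(C_{\REFD{x}(i+1)}(x)))$ — something that only holds in the first of your subcases — so your lockstep argument fills a small gap rather than introducing one.
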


\begin{proof}
In $C(y)$, node $i$ is swapped with node $i+1$
compared to $C(x)$
since $x[i]<x[i+1]$
implies that $y[i+1]<y[i]$.
Then node $i+1$ is removed from $\lb(\rightsb(C_i(x))$ and added somewhere in the right branch 
of $\leftsb(C_i(x))$ (see \figurename~\ref{fig:changeLess}). 
By definition, there is exactly $\SN_{x}[i]+1$ such positions. 
Once the node is inserted at a position $j\in \{0,\ldots, \SN_{x}[i]$, $\SN_{y}[i] = \SN_{x}[i] - j \in \{0,\ldots, \SN_{x}[i]\}$. 
Since we removed $\SN_{y}[i]$ nodes from 
the right branch of $\leftsb(C_i(x))$, but added former node $i+1$, we have $\SN_{y}[i+1] = \SN_x[i]-\SN_{y}[i] + 1$.
Finally, if $i+1$ has a
referent in $C(x)$, that is $\REFD{x}(i+1) \neq -1$, then both nodes $i$ and $i+1$ were in $\rb(\leftsb(\REFD{x}(i+1))$. Since we
removed node $i+1$ from this subtree, we have $\SN_{y}[\REFD{x}(i+1)] = \SN_{x}[\REFD{x}(i+1)]-1$.
\end{proof}

\tikzstyle{myState}=[draw=blue!50,very thick,fill=blue!20,circle,inner sep =1pt,minimum height=.75em]
\tikzstyle{myFinalState}=[draw=green!50,very thick,fill=green!20,circle,inner sep =1pt,minimum height=.75em]

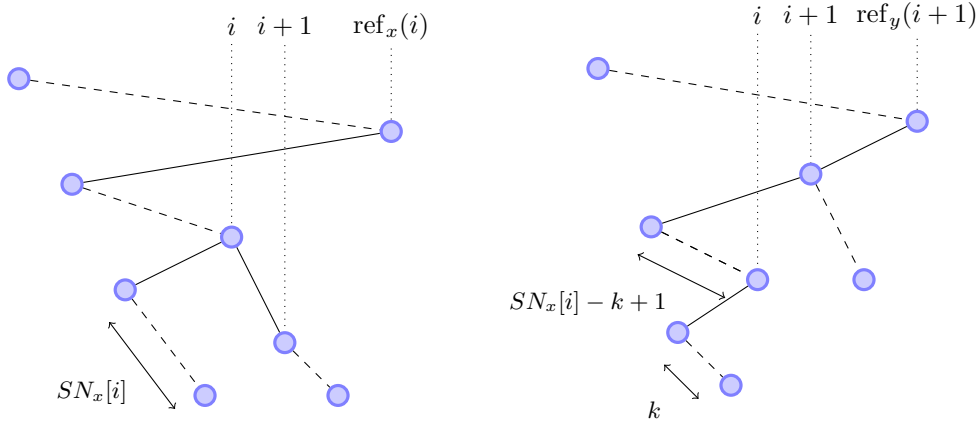
\begin{figure}
    \centering
\begin{tikzpicture}[scale=0.7]
\node at (4,7) (l1) {$i$};
\node at (5,7) (l2) {$i+1$};
\node at (7,7) (l3) {$\REFD{x}(i)$};

\node[myState] at (0,6) (q1) {}; 
\node[myState] at (1,4) (q2) {}; 
\node[myState] at (2,2) (q3) {}; 
\node[myState] at (3.5,0) (q4) {};
\node[myState] at (4,3) (q5) {}; 
\node[myState] at (5,1) (q6) {}; 
\node[myState] at (6,0) (q7) {}; 
\node[myState] at (7,5) (q8) {};
\path[-] (q3) edge node {} (q5)
         (q5) edge node {} (q6)
         (q2) edge node {} (q8)
;
\path[dashed] (q3) edge node {} (q4)
         (q1) edge node {} (q8)
         (q2) edge node {} (q5)
         (q6) edge node {} (q7)
;
\path[dotted] (l1) edge node {} (q5)
         (l2) edge node {} (q6)
         (l3) edge node {} (q8)
;
\node (a1) [below left=0.1cm of q3] {};
\node (a2) [below left=0.1cm of q4] {};
\path[<->] (a1) edge node [below left=0.1
cm] {\small $\SN_x[i]$} (a2);
\end{tikzpicture}
\qquad
\begin{tikzpicture}[scale=0.7]

\node at (3,7) (l1) {$i$};
\node at (4,7) (l2) {$i+1$};
\node at (6,7) (l3) {$\REFD{y}(i+1)$};
\node[myState] at (0,6) (q1) {}; 
\node[myState] at (1,3) (q2) {}; 
\node[myState] at (1.5,1) (q3) {}; 
\node[myState] at (2.5,0) (q4) {};
\node[myState] at (3,2) (q5) {}; 
\node[myState] at (4,4) (q6) {}; 
\node[myState] at (5,2) (q7) {}; 
\node[myState] at (6,5) (q8) {};
\path[-] (q3) edge node {} (q5)
         (q2) edge node {} (q6)
         (q6) edge node {} (q8)
;
\path[dashed] (q3) edge node {} (q4)
         (q2) edge node {} (q5)
         (q2) edge node {} (q5)
         (q1) edge node {} (q8)
         (q6) edge node {} (q7)
;
\path[dotted] (l1) edge node {} (q5)
         (l2) edge node {} (q6)
         (l3) edge node {} (q8)
;
\node (a1) [below left=0.1cm of q3] {};
\node (a2) [below left=0.1cm of q4] {};
\path[<->] (a1) edge node [below left=0.2cm] {\small $k$} (a2);
\node (a3) [below left=0.1cm of q2] {};
\node (a4) [below left=0.1cm of q5] {};
\path[<->] (a3) edge node [below left=0.1cm] {\small $\SN_x[i]-k+1$} (a4);
\end{tikzpicture}
    \caption{$C(x)$ (left) and $C(y)$ (right)}
    \label{fig:changeLess}
\end{figure}

\begin{lemma}
    \label{lm:changesMore}
    If $x[i] > x[i+1]$, then 
    $$\begin{cases} 
    \SN_{y}[i] = \SN_{x}[i] + \SN_{x}[i+1] - 1 \\
    \SN_{y}[i+1] = 0 \\
    \SN_{y}[pos] = \SN_{x}[pos] +1, \text{ where } pos \in \{\REFD{x}(i+1)\} \cup \lb(\rightsb(C_{i+1}(x))) 
    \end{cases}$$
\end{lemma}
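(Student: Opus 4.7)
I would mirror the proof of Lemma~\ref{lm:changesLess} by walking through the online Cartesian tree construction of $C(y)$ alongside that of $C(x)$, exploiting $y[1\ldots i-1]=x[1\ldots i-1]$ together with $y[i]=x[i+1]<x[i]=y[i+1]$.

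The first two identities are immediate. Since $y[i+1]>y[i]$, at step $i+1$ the new node lands directly as a right child of $y[i]$ without skipping anything, yielding $\SN_y[i+1]=0$. For $\SN_y[i]$, the prefix equality gives $\rb(C(y[1\ldots i-1]))=\rb(C(x[1\ldots i-1]))$, so I would count on this common rightmost path the nodes with value $>y[i]=x[i+1]$. These split into those with value $>x[i]$, which are exactly $\rbs_x[i]$, and those with value in $(x[i+1],x[i])$, which are precisely $\rbs_x[i+1]\setminus\{i\}$. The two sets are disjoint of sizes $\SN_x[i]$ and $\SN_x[i+1]-1$, so $\SN_y[i]=\SN_x[i]+\SN_x[i+1]-1$.

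For the third identity, I would track a structural invariant through the online construction for $q$ ranging from $i+2$ to $\REFD{x}(i+1)$: just after step $i+1$, $\rb(C(y[1\ldots i+1]))$ equals $\rb(C(x[1\ldots i+1]))$ with one extra node (value $x[i+1]$) inserted just above the bottom node (value $x[i]$). For every intermediate $q\in(i+1,\REFD{x}(i+1))$ with $x[q]>x[i]$, the new element ignores both ``bottom-two'' nodes of $C(y)$ and skips exactly the same set of deeper nodes as in $C(x)$, since the subtrees beneath the surviving ``extra'' node develop identically; hence $\SN_y[q]=\SN_x[q]$. The invariant first breaks either at an intermediate $q^{*}=\REFD{y}(i+1)$ with $x[q^{*}]\in(x[i+1],x[i])$ -- such a $q^{*}$ is automatically a prefix minimum of $x$ restricted to $(i+1,\REFD{x}(i+1))$ and therefore lies in $\lb(\rightsb(C_{i+1}(x)))$ -- or, if no such intermediate exists, at $q^{*}=\REFD{x}(i+1)$ itself. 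In both cases, $q^{*}$ incurs exactly one additional skip relative to the insertion in $C(x)$, namely node $i+1$ (of value $x[i]$) in the first case or node $i$ (of value $x[i+1]$) in the second, proving $\SN_y[q^{*}]=\SN_x[q^{*}]+1$ for the (unique) such $q^{*}$ in $\{\REFD{x}(i+1)\}\cup\lb(\rightsb(C_{i+1}(x)))$. Combined with Lemma~\ref{lm:pos}, no other positions are affected.

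The main obstacle is the third step: rigorously establishing and propagating the rightmost-path invariant through all intermediate insertions, verifying that the subtree rooted at the surviving ``extra'' node develops identically in $C(x)$ and $C(y)$ up to the breaking position, and pinpointing the single invariant-breaking index that absorbs the $+1$ increment while ruling out further changes at any other element of the candidate set.
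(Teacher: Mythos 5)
Your proof is correct and follows essentially the same route as the paper: both arguments trace the swap through the online construction, counting the nodes of the common rightmost path that the relabelled node $i$ now skips to get $\SN_y[i]=\SN_x[i]+\SN_x[i+1]-1$, and locating the single position that absorbs the $+1$ as the new referent of the value $x[i]$, which is either $\REFD{x}(i+1)$ or a node of $\lb(\rightsb(C_{i+1}(x)))$. The paper's own proof is much terser --- it dispatches the first identity by observing the operation is the inverse of that of Lemma~\ref{lm:changesLess} and the third by directly invoking the referent concept --- so your explicit rightmost-path invariant is a more detailed rendering of the same idea rather than a different approach.
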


\begin{proof}
This operation is the opposite of the one occurring in Lemma~\ref{lm:changesLess}, which 
explains why $\SN_{y}[i] = \SN_{x}[i] + \SN_{x}[i+1] - 1$. Since $x[i] > x[i+1]$, 
$\leftsb(C_{i+1}(y)$ is empty and $\SN_{y}[i+1] = 0$. Finally, node $i+1$ in $C(y)$
has a new referent $pos$, thus $\SN_{y}[pos]$ has to be incremented.
If nodes $i$ and $i+1$ have the same referent, then $pos=\REFD{x}(i+1)$. If not, then 
$pos$ is a node from the left-branch of $\rightsb(C_{i+1}(x))$. 
\end{proof}

Note that those lemmas prove that either 
$\SN_y[i]\ne \SN_x[i]$ or $\SN_y[i+1]\ne \SN_x[i+1]$.
Therefore, if we can detect less than $3$ mismatches between two sequences $\SN_x$ and $\SN_y$, then if there exists a position $i$  such that $y=\tau(x,i)$, it implies that the smallest position of the mismatches is either $i$ or $i+1$. 
Moreover, in the case where there are exactly $3$ mismatches,  the first two have to be in positions $i$ and $i+1$.
Henceforth, in the following section, where we describe our solution in detail, the verification step of the algorithm focuses at positions $i$ and $i+1$ of the mismatches as the potential positions of the sought swap.

\begin{algorithm}\label{algo:testequivSN}
  \caption{\algoName{equivalenceTestSwapSN}$((\SN_p), (\SN_x))$}
  \SetAlgoLined
  \SetKwInOut{KwIn}{Input}
  \SetKwInOut{KwOut}{Output}
  \KwIn{The \SNRepres tables of $p$ and $x$}
  \KwOut{$True$ if $p \tauCTM x$, $False$ otherwise}
  $j \leftarrow 2$\;
  \While{$j \le m$ and $\SN_p[j] = \SN_x[j]$ }{
  $j \leftarrow j+1$\;
  }
  \If(\tcp*[f]{Exact match}){$j=m+1$}{
        \Return $True$\; 
  }
  \If{$x[j]<x[j+1]$}{
     Check Lemma~\ref{lm:changesLess} and update $j$ accordingly\;
     $pos \leftarrow \REFD{x}(j+1)$\;
  }
  \Else{
     Check Lemma~\ref{lm:changesMore} and update $j$ accordingly\;
     $pos \leftarrow \REFD{y}(j+1)$\;
  }
  $k \leftarrow j+2$\;
  \While{$k \le m$}{
       \If{$k \neq pos$ and $\SN_p[k] \neq \SN_x[k]$}{\Return{$False$}\;}
     $k \leftarrow k+1$\;
  }
  \Return{$True$}\;
\end{algorithm}

\begin{lemma}
    Algorithm~\ref{algo:testequivSN} has a $\Theta(m)$ worst-case complexity, a $\Theta(1)$ best-case complexity.
\end{lemma}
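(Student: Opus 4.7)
The plan is to prove both bounds by carefully tracking the total number of elementary operations executed, broken into the first scanning loop, the constant-size block of lemma checks, and the second scanning loop.

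For the worst-case upper bound, I would first note that every line in Algorithm~\ref{algo:testequivSN} outside the two \textbf{while} loops performs only a bounded number of array accesses and comparisons; in particular, the verification ``Check Lemma~\ref{lm:changesLess}/\ref{lm:changesMore}'' examines only the constant-size set $\{i,i+1,\REFD{x}(i+1),\REFD{y}(i+1)\}$ of positions whose values are explicitly described in Lemmas~\ref{lm:changesLess} and~\ref{lm:changesMore}, and looking up the referents can be assumed to take $O(1)$ time (they are maintained by the update procedure of the sliding window, as for $\PDD$ in Algorithm~\ref{algo:updatePD}). The first \textbf{while} loop runs at most $m-1$ iterations, and the second one runs at most $m-j-1$ iterations, so the total cost is $O(m)$.

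For the worst-case lower bound $\Omega(m)$, it suffices to exhibit an input forcing linear work. The simplest is the exact match case: when $\SN_p=\SN_x$, the first loop scans the whole sequence and the algorithm returns \textbf{True} after $m-1$ comparisons, which is $\Omega(m)$. Hence the worst case is exactly $\Theta(m)$.

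For the best-case bounds, $\Omega(1)$ is trivial since at least one comparison is performed. For the matching $O(1)$ upper bound, I would exhibit a family of inputs on which the algorithm terminates in a bounded number of steps. Take $p$ and $x$ such that $\SN_p[2]\neq \SN_x[2]$, so the first \textbf{while} exits after one iteration with $j=2$; the lemma-check block then runs in constant time, sets $pos$, and the second loop begins at $k=4$. If moreover $\SN_p[4]\neq \SN_x[4]$ and $4\neq pos$, the test inside the second loop triggers \textbf{False} and the algorithm returns after only a bounded number of operations. Such inputs are easy to construct, so the best case is $\Theta(1)$.

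The main (and really only) subtlety I expect is justifying that each iteration and each lemma check truly takes $O(1)$ time: the comparisons $\SN_p[\cdot]=\SN_x[\cdot]$ are obviously $O(1)$, but the lemma verifications reference $\REFD{x}(j+1)$ and $\REFD{y}(j+1)$, so the argument relies on the assumption (already invoked for the parent-distance case) that these referents are computed and stored alongside the \SNRepres\ when sliding the window. Under this assumption, both loops and the middle block are each a sum of $O(1)$ operations over at most $m$ iterations, which yields the claimed $\Theta(m)$ worst-case and $\Theta(1)$ best-case bounds.
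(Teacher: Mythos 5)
Your proof is correct: the paper states this lemma without any proof, and your argument (bounding each loop by $m$ iterations with $O(1)$ work per iteration and per lemma check, exhibiting the exact-match input for the $\Omega(m)$ worst case, and an input with an immediate mismatch at $j=2$ followed by a disqualifying mismatch at $k=4\neq pos$ for the $O(1)$ best case) is precisely the intended justification. Your noted subtlety — that the referents $\REFD{x}(j+1)$ and $\REFD{y}(j+1)$ must be available in constant time — is handled by the paper's sliding-window machinery (the array $\notREF$ of Algorithm~\ref{algo:updateSN}), so the assumption you invoke is legitimate.
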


\subsection{Updating the Skipped-number representation}

When searching for of a pattern $p$ of length $m$ in a text $t$
 of length $n$, the Skipped-number representation of $p$ is computed once in
 a preprocessing phase.
The searching phase looks at the text $t$ through a window of size $m$.
The Skipped-number representation of $t[1\ldots m]$
 is first computed, then for $m+1\le j \le n$ the Skipped-number representation
 of $t[j-m+1\ldots j]$ is computed by updating the Skipped-number representation
 of $t[j-m\ldots j-1]$.
For that, Function \algoName{updateSN} in Algorithm~\ref{algo:updateSN} uses:
\begin{itemize}
    \item
    a deque $D$
    for storing the right branch of the Cartesian tree of the current window on $t$;
    \item
    a circular array $SN$ of size $m$ for storing the Skipped-number representation
    of the current window on $t$;
    \item
    a circular array $RD$ of size $m$ for storing the distance from their referents of all the
    positions in the current window on $t$.
\end{itemize}

If the first element of $D$ is equal to $j-m$ then it is removed from $D$
 and the Skipped-number representation of its referent is decreased by one.
Then position $j$ can be inserted in the right branch (possibly popping some
 elements and updating the distance to their referent which is $j$)
 and its Skipped-number representation is computed.
The distance to its referent is set to $0$.
Since this update operation only adds a constant number of operations for each
 $m+1\le j \le n$ in addition to the computation of the Cartesian tree of
 $t$, it also has a linear worst case time complexity for all the calls
 to \algoName{updateSN}.

\begin{algorithm}
  \caption{\algoName{updateSN}$(t, j, m, S, SN, RD)$\label{algo:updateSN}}
  \SetAlgoLined
  \SetKwInOut{KwIn}{Input}
  \SetKwInOut{KwOut}{Output}
  \KwIn{%
        $t$: sequence, %
        $j$: position, %
        $m$: window size, %
        $D$: right branch of $C(t[j-m\ldots j-1]$, %
        $SN$: Skipped-number representation of $t[j-m\ldots j-1]$, %
        $RD$: distances to the referents of positions $[m-j\ldots j-1]$ %
        }
  \KwOut{$(D, SN, RD)$ such that %
         $D$: right branch of $C(t[j-m+1\ldots j]$, %
         $SN$: Skipped-number representation of $t[j-m+1\ldots j]$, %
         $RD$: distances to the referents of positions $[m-j+1\ldots j]$ %
         }
  $(val,pos) \leftarrow \dBack(D)$\;
  \If{$j-pos \ge m$}{
    $\dPopBack(D)$\;
  }
  \If{$RD[(j\bmod m)+1] > 0$}{
    $r \leftarrow ((j+RD[(j \bmod m)+1]) \bmod m) +1$\;
    $SN[r] \leftarrow SN[r] - 1$\;
  }
  $s \leftarrow 0$\;
  \While{not $\dIsEmpty(D)$}{
    $(val,pos) \leftarrow \dFront(D)$\;
    \If{$val < t[j]$}{
        Break\;
    }
    $RD[(pos \bmod m)+1] \leftarrow j-pos$\;
    $\dPopFront(D)$\;
    $s\leftarrow s+1$\;
  }
  $\dPushFront(D,(t[j],j))$\;
  $SN[(j\bmod m)+1]\leftarrow s$\;
  $RD[(j\bmod m)+1]\leftarrow 0$\;
  \Return{(D, SN, RD)}\;
\end{algorithm}

\section{Effect of one mismatch/insertion/deletion on linear representations} \label{sec:differences}
Given a pattern $p$ of length $m$ and a text $t$ of length $n$,
 the approximate Cartesian tree matching with at most one mismatch,
 one insertion or one deletion (see \Cref{ct-mis,ct-ins,ct-del}) can be solved by scanning the text
 with a sliding window of size $m$, $m+1$ or $m-1$ respectively.
 
 \begin{definition}[$\lcp, \lcs$]
Let $u$ and $v$ be two sequences.
Then let $\lcp(u,v)$ denote the length of the longest common
 prefix of $u$ and $v$ and
 let $\lcs(u,v)$ denote the length of the longest common
 suffix of $u$ and $v$.
 \end{definition}

\subsection{Effect of one mismatch on linear representations \label{sec:mismatch}}

\begin{lemma}\label{lm:mis}

Given $1\le j \le n-m+1$.
Let $\ell = \lcp(\PDD_p, \PDD_{t[j\ldots j+m-1]})$
 and
 let $r = \lcs(\PDG_p, \PDG_{t[j\ldots j+m-1]})$.
If $\ell+r \ge m-1$, then there exists an occurrence with at most one mismatch of $p$ in $t$.
\end{lemma}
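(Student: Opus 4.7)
The plan is to reduce the claim to the definition of $\misCTM$ (\Cref{ct-mis}) by exhibiting a splitting position $h\in\{1,\ldots,m\}$ witnessed by the lengths $\ell$ and $r$. Let $x=t[j\ldots j+m-1]$.

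First, I would record two ``locality'' observations about the two linear representations that are immediate from \Cref{def:pd,rev-pd}. For $\PDD$: since $\PDD_x[h]$ only depends on $x[1\ldots h]$, one has $\PDD_{x[1\ldots k]} = \PDD_x[1\ldots k]$ for every $1\le k\le m$, and the same for $p$. Hence $\PDD_p[1\ldots k]=\PDD_x[1\ldots k]$ implies $p[1\ldots k]\CTM x[1\ldots k]$ via the one-to-one correspondence between Cartesian trees and parent-distance representations. For $\PDG$: by re-indexing, $\PDG_{x[k\ldots m]}[i]=\PDG_x[k+i-1]$ for all $1\le i\le m-k+1$, so $\PDG_p[k\ldots m]=\PDG_x[k\ldots m]$ implies $p[k\ldots m]\CTM x[k\ldots m]$. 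These two observations are the only non-trivial ingredients, and they are routine consequences of the definitions; I would state them quickly and move on.

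Then I would set up the pigeonhole on $h$. The two conditions required by \Cref{ct-mis} are $p[1\ldots h-1]\CTM x[1\ldots h-1]$ and $p[h+1\ldots m]\CTM x[h+1\ldots m]$. By the above, the first holds as soon as $h-1\le\ell$, and the second as soon as $m-h\le r$. Thus any $h$ with
\[
\max(1,m-r)\ \le\ h\ \le\ \min(m,\ell+1)
\]
works. Such an $h$ exists iff $m-r\le\ell+1$, i.e.\ iff $\ell+r\ge m-1$, which is exactly the hypothesis.

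Finally I would pick a concrete witness, for instance $h=\min(m,\ell+1)$, and verify both matching conditions using the two observations above, concluding $p\misCTM x$, which by \Cref{MCTM} is the desired occurrence at position $j$. There is no real obstacle: the only point to be careful about is the boundary cases ($\ell=m$, giving an exact match, and $r=m$, symmetrically), which are handled automatically by taking $\max$/$\min$ with $1$ and $m$ in the range for $h$.
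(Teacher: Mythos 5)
Your proof is correct and follows essentially the same route as the paper's: both reduce to Definition~\ref{ct-mis} by using the locality of $\PDD$ and $\PDG$ to find a split position $h$ with $h-1\le\ell$ and $m-h\le r$, which exists exactly when $\ell+r\ge m-1$. Your version merely makes explicit the prefix/suffix restriction facts and the pigeonhole on $h$ that the paper leaves implicit.
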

\begin{proof}
If $\ell \ge m-1$ then $p[1\ldots m-1] \CTM t[j\ldots j+m-2]$ and thus
 $p \misCTM t[j\ldots j+m-1]$.
 
Otherwise, if $\ell < m-1$ and $\ell+r \ge m-1$
 then there exists $k\in\{1,\ldots, m\}$
 such that $p[1\ldots k-1] \CTM t[j\ldots j+k-2]$
 and $p[k+1\ldots m] \CTM y[j+k\ldots j+m-1]$
 and thus $p \misCTM t[j\ldots j+m-1]$.
\end{proof}

\begin{example}
Let
$p=(2,3,4,1,5,7,8,6,9)$,
$t=(4,3,
7,8,13,6,9,10,5,11
,1,2
)$
and let us consider the window on $t$:
$x=t[3\ldots 10]= (7,8,13,6,9,10,5,11)$

Then:

\noindent
$\PDD_p=(0, 1, 1, 0, 1, 1, 1, 3, 1)$ and
 $\PDG_p=(3, 2, 1, 0, 0, 2, 1, 0, 0)$,

\noindent
$\PDD_x=(0, 1, 1, 0, 1, 1, 1, 0, 1)$
and
$\PDG_x=(3, 2, 1, 4, 3, 2, 1, 0, 0)$.

Then
$\lcp(\PDD_p,\PDD_x) = 7$ and
$\lcs(\PDD_p,\PDD_x) = 4$ and $p[0\ldots 3] \CTM x[0\ldots 3]$ and $p[5\ldots 8] \CTM x[5\ldots 8]$
thus $p \misCTM x$.

See example \Cref{fig:CT-MIS-INS-DEL}(a).
\end{example}

\subsection{Effect of one insertion on linear representations \label{sec:insert}}


\begin{lemma} \label{lm:ins}
Given $1\le j \le n-m$.
Let $\ell = \lcp(\PDD_p, \PDD_{t[j\ldots j+m]})$
 and
 let $r = \lcs(\PDG_p, \PDG_{t[j\ldots j+m]})$.
If $\ell+r \ge m$, then there exists an occurrence with at most one insertion of $p$ in $t$.
\end{lemma}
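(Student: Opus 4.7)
The proof mirrors that of \Cref{lm:mis}, with careful index bookkeeping to account for the length mismatch between $p$ (length $m$) and the window $x = t[j\ldots j+m]$ (length $m+1$). The plan is to exhibit an explicit $h \in \{1,\ldots,m\}$ satisfying \Cref{ct-ins}. The natural choice is $h = \min(\ell, m)$, and I first observe that $\PDD_u[1] = 0$ for every non-empty sequence $u$ by \Cref{def:pd}, so $\ell \ge 1$ and $h$ is always a legal value.

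If $\ell = m$, the ``insertion point'' lies at position $m+1$ of the window. Taking $h = m$, the prefix equality $\PDD_p[1\ldots m] = \PDD_x[1\ldots m]$ gives $p[1\ldots m] \CTM x[1\ldots m] = t[j\ldots j+m-1]$ by the one-to-one correspondence between parent-distance tables and Cartesian trees, while the suffixes $p[m+1\ldots m]$ and $x[m+2\ldots m+1]$ are both empty, so \Cref{ct-ins} is satisfied.

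Otherwise $\ell < m$ and I take $h = \ell$. The prefix condition $p[1\ldots \ell] \CTM t[j\ldots j+\ell-1]$ follows exactly as in the proof of \Cref{lm:mis}. For the suffix condition, the key observation is that by \Cref{rev-pd} the value $\PDG_u[k]$ depends only on $u[k\ldots |u|]$, so two sequences of the same length share the same Cartesian tree on a suffix iff their reverse parent-distance tables agree on that suffix. The hypothesis $\ell + r \ge m$ rewrites as $m - r + 1 \le \ell + 1$, so the interval $[\ell+1, m]$ is contained in the length-$r$ common suffix $[m-r+1, m]$ of $\PDG_p$. Aligning this with the length-$r$ suffix $[m-r+2, m+1]$ of $\PDG_x$ yields $\PDG_p[\ell+1\ldots m] = \PDG_x[\ell+2\ldots m+1]$, hence $p[\ell+1\ldots m] \CTM x[\ell+2\ldots m+1] = t[j+\ell+1\ldots j+m]$, which is the second half of \Cref{ct-ins}.

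The only real subtlety, and the main obstacle I expect, is the off-by-one in the suffix alignment: when two sequences of different lengths share a common suffix of length $r$ in their $\PDG$ tables, the corresponding indices are shifted by one, and it is precisely this shift that encodes the inserted position at index $\ell+1$ of the window. Everything else is a direct transcription of the $\lcp/\lcs$ argument used for one mismatch.
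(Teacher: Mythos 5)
Your proof is correct and follows essentially the same route as the paper's: use the lcp to certify the matching prefix, the lcs (with the one-position shift) to certify the matching suffix, and place the insertion at position $\ell+1$ of the window, i.e.\ take $h=\ell$ in \Cref{ct-ins} (the paper phrases this as the existence of $k=\ell+1$). Your version merely spells out the index alignment and the fact that $\PDG$ restricted to a suffix equals the $\PDG$ of that suffix, which the paper leaves implicit.
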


\begin{proof}
If $\ell \ge m$ then $p[1\ldots m] \CTM t[j\ldots j+m-1]$ and thus
 $p \insCTM t[j\ldots j+m]$.
 
Otherwise, if $\ell < m$ and $\ell+r \ge m$
 then there exists $k\in\{1,\ldots, m\}$ such that $p[1\ldots k-1] \CTM t[j\ldots j+k-2]$
 and $p[k\ldots m] \CTM t[j+k\ldots j+m]$
 and thus $p \insCTM t[j\ldots j+m]$.
\end{proof}
 
 See example \Cref{fig:CT-MIS-INS-DEL}(b).

\subsection{Effect of one deletion on linear representations \label{sec:delete}}


\begin{lemma}\label{lm:del}
Given $1\le j \le n-m+2$.
Let $\ell = \lcp(\PDD_p, \PDD_{t[j\ldots j+m-2]})$
 and
 let $r = \lcs(\PDG_p, \PDG_{t[j\ldots j+m-2]})$.
If $\ell+r \ge m-1$, then there is an occurrence with at most one deletion of $p$ in $t$. 
\end{lemma}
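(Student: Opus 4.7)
The argument will mirror the proofs of Lemmas~\ref{lm:mis} and~\ref{lm:ins}, simply adapted to the fact that the window $t[j\ldots j+m-2]$ now has length $m-1$ instead of $m$ or $m+1$. Two standing facts will be used without reproof, both immediate from Definitions~\ref{def:pd} and~\ref{rev-pd}: the parent-distance of a prefix of a sequence is the corresponding prefix of its parent-distance, and the reverse parent-distance of a suffix of a sequence is the corresponding suffix of its reverse parent-distance; combined with the one-to-one correspondence between each of these representations and the Cartesian tree, this is all the machinery needed.

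First I will dispatch the case $\ell \ge m-1$. In this situation the whole of $\PDD_{t[j\ldots j+m-2]}$, which has length $m-1$, coincides with the length-$(m-1)$ prefix of $\PDD_p$. By the prefix property this yields $p[1\ldots m-1] \CTM t[j\ldots j+m-2]$, and choosing $h=m-1$ in Definition~\ref{ct-del} matches the first condition exactly; the second condition is trivial since the factors $p[m+1\ldots m]$ and $t[j+m-1\ldots j+m-2]$ are both empty.

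In the remaining case $\ell < m-1$ and $\ell + r \ge m-1$, I will pick $h=\ell$ (any $h$ in the interval $[m-1-r,\ell]$ works identically). The first condition $p[1\ldots h] \CTM t[j\ldots j+h-1]$ is immediate from the matching length-$\ell$ prefix of the two parent-distance tables and the prefix property. For the second condition, the two factors $p[h+2\ldots m]$ and $t[j+h\ldots j+m-2]$ both have length $m-\ell-1$, and since $m-\ell-1 \le r$ their reverse parent-distances are exactly the common suffix of length $m-\ell-1$ of $\PDG_p$ and $\PDG_{t[j\ldots j+m-2]}$; applying the suffix property gives $p[h+2\ldots m] \CTM t[j+h\ldots j+m-2]$, and Definition~\ref{ct-del} is fulfilled.

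The only spot that requires a bit of care, and is the mildest obstacle in the plan, is the index arithmetic in this last step, because the two $\PDG$ tables have different lengths $m$ and $m-1$: the common suffix of length $s$ corresponds to positions $[m-s+1,m]$ in $p$ but to positions $[m-s,m-1]$ in the window, i.e.\ to the factors $p[m-s+1\ldots m]$ and $t[j+m-s-1\ldots j+m-2]$. Substituting $s = m-\ell-1$ gives precisely the two factors of the second condition of Definition~\ref{ct-del} for $h=\ell$, which closes the argument.
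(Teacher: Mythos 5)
Your proof is correct and follows essentially the same route as the paper's: the case $\ell \ge m-1$ is handled by taking $h=m-1$ in Definition~\ref{ct-del}, and otherwise a split point $h=\ell$ (the paper's $k=\ell+1$) is exhibited using the prefix/suffix properties of $\PDD$ and $\PDG$. The only difference is that you spell out the index arithmetic for the length-$(m-1)$ window, which the paper leaves implicit.
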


\begin{proof}
If $\ell \ge m-1$ then $p[1\ldots m-1] \CTM t[j\ldots j+m-2]$ and thus
 $p \delCTM t[j\ldots j+m-2]$.
 
Otherwise, if $\ell < m-1$ and $\ell+r \ge m-1$
 then $\exists\ k\in\{1,\ldots, m\}$
 such that $p[1\ldots k-1] \CTM t[j\ldots j+k-2]$
 and $p[k+1\ldots m] \CTM t[j+k-1\ldots j+m-2]$
 and thus $p \delCTM t[j\ldots j+m-2]$.
\end{proof}

See example \Cref{fig:CT-MIS-INS-DEL}(c).

\subsection{An algorithm to test the equivalence \label{sec:testEQ}}

\begin{algorithm}\label{algo:testequivMID}
  \caption{\algoName{\eqTest Diff}$((\PDD_p, \PDG_p), (\PDD_x, \PDG_x))$}
  \SetAlgoLined
  \SetKwInOut{KwIn}{Input}
  \SetKwInOut{KwOut}{Output}
  \KwIn{The parent-distance tables of $p$ and $x$}
  \KwOut{$True$ if $x$ is equivalent to $p$, $False$ otherwise}
  $j \leftarrow 2$\;
  \While{$j \le m$ and $\PDD_p[j] = \PDD_x[j]$ }{
  $j \leftarrow j+1$\;
  }
  \If{$j=m+1$}{
        \Return $True$\;
  }
  $k = m - 1$\;
  \While{$k \ge j$ and $\PDG_p[k] = \PDG_x[k]$ }{
  $k \leftarrow k-1$\;
  }
  \If{Lemmas~\ref{lm:mis} (for $\misCTM$),~\ref{lm:ins} (for $\insCTM$), or~\ref{lm:del} (for $\delCTM$) hold for $p, x$ and $j$}{
    \Return{$True$}\;
  }
  \Return{$False$}\;
\end{algorithm}

Algorithm~\ref{algo:testequivMID} describes the \algoName{\eqTest} needed
in the \algoName{MetaAlgorithm} (Algorithm~\ref{algo:meta}) to solve the approximate Cartesian tree matching problem with up to one mismatch, insertion or deletion.
Note that the results from Section~\ref{sec:approx} on the worst-case
complexity and average-case complexity of the \algoName{MetaAlgorithm} still apply.

\section{Experiments \label{sec:expe}}
The following experiments were all performed using Python3 which had an influence on the obtained results.
The random model considered for the experiments is the uniform distribution
over permutations for the pattern and the text.
As one can see on \Cref{fig:av_pd_runtime,fig:av_pd_cmp,fig:av_diff_cmp,fig:av_diff_runtime},
the experimental results are consistent with \Cref{lm:avperm}.
The average running time decreases with the size of the pattern:
For each window starting position, the average number of comparisons tends to a constant smaller than $4$. The number of such positions the algorithms have to test decreases as the size of the pattern increases,  and so the total number of comparisons decreases. All experiments ran on a Dell Inc. Precision 3581 with a 13th Gen Intel Core™ i7-13700H × 20 CPU and 32 GB RAM.

\begin{figure}[!ht]
  \begin{minipage}[t]{0.47\textwidth}
  \vspace{0pt}
  \includegraphics[scale=0.5]{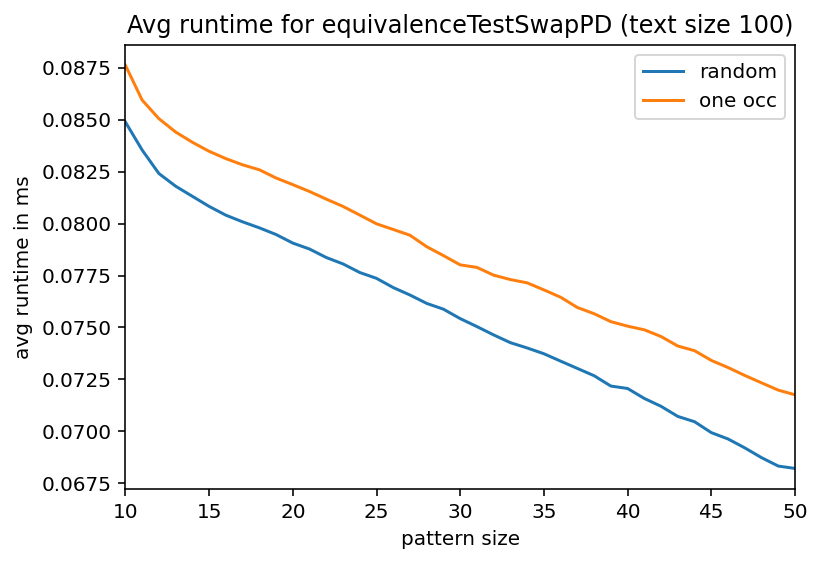}
  \end{minipage}
  \hfill
  \begin{minipage}[t]{0.47\textwidth}
  \vspace{0pt}
  \includegraphics[scale=0.5]{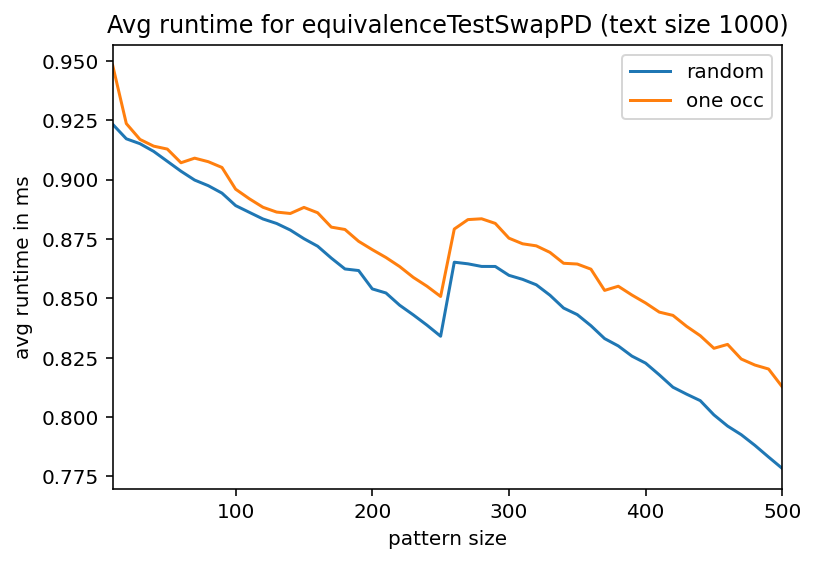}
  \end{minipage}
  \caption{Average runtime of the \algoName{MetaAlgorithm} using \algoName{equivalenceTestSwapPD}. 
  Uniform random permutations were generated $20\,000$ times for both the text and the pattern and a mean value was computed for each value of each curves.
  The bumps that occurs in the second graph is due to the fact that, when the values are above $256$, Python changes the type of the variables, with an additional cost. The blue curve represents fully random data while at least one occurrence of the pattern was guaranteed in the text for the orange curve. \label{fig:av_pd_runtime} }
\end{figure}

\begin{figure}[!ht]
  \begin{minipage}[t]{0.49\textwidth}
  \vspace{0pt}
  \includegraphics[scale=0.5]{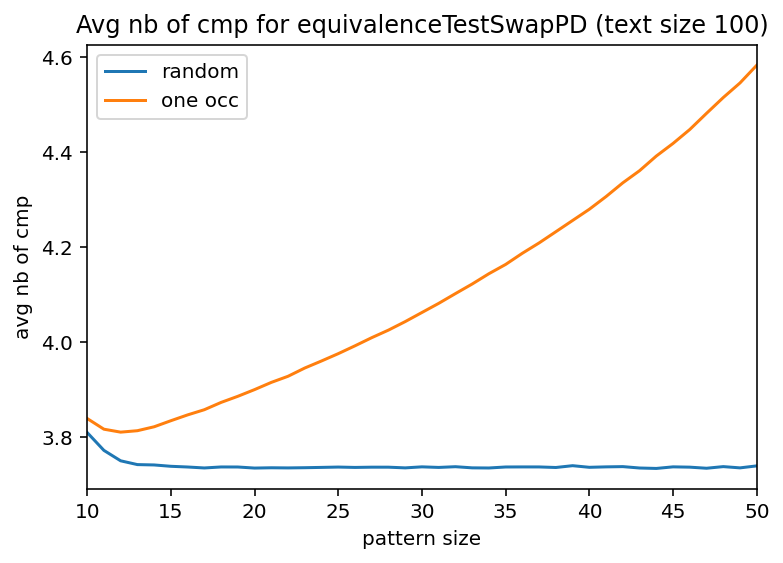}
  \end{minipage}
  \hfill
  \begin{minipage}[t]{0.49\textwidth}
  \vspace{0pt}
  \includegraphics[scale=0.5]{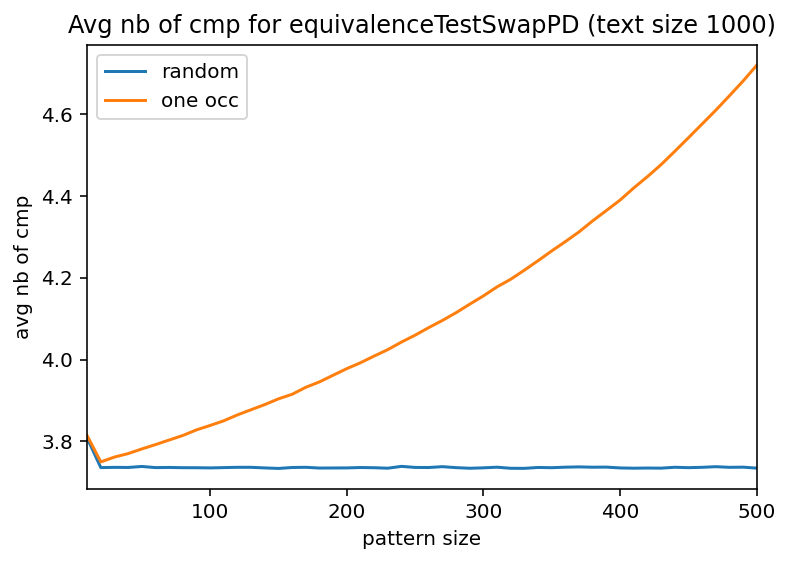}
  \end{minipage}
  \caption{\label{fig:av_pd_cmp} Average number of comparisons per window starting position performed by the \algoName{equivalenceTestSwapPD} algorithm. As one can see with the blue curve, the average number of comparisons tends to a constant when it is unlikely to find occurrences of the pattern. The slight bump at the beginning of the curve is due to smaller patterns being scarcely found, hence requiring more comparisons to verify the red and blue zones. As demonstrated by the orange curve, the more occurrences found (or the larger the pattern, in comparison to the text), the more expensive the algorithm becomes. Eventually, finding an occurrence of the pattern at every position of the text will lead to a quadratic cost.}
\end{figure}

\begin{figure}[!ht]
  \begin{minipage}[t]{0.47\textwidth}
  \vspace{0pt}
  \includegraphics[scale=0.5]{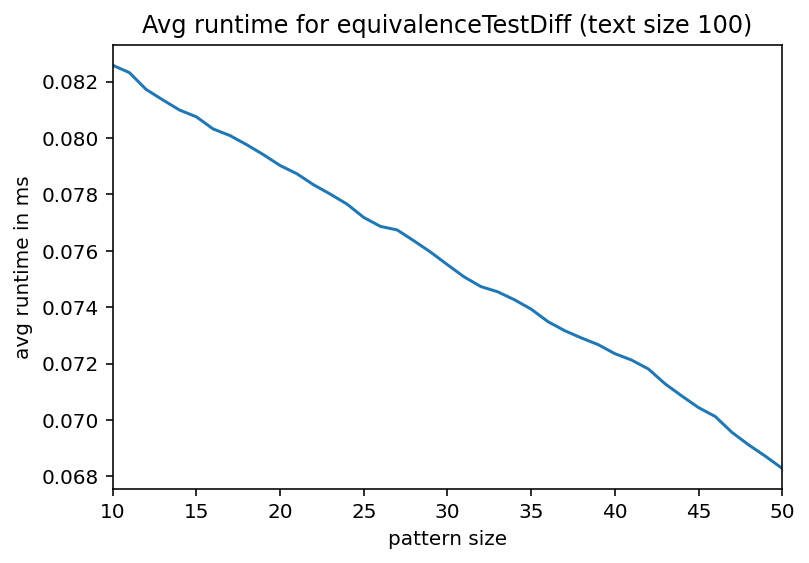}
  \end{minipage}
  \hfill
  \begin{minipage}[t]{0.47\textwidth}
  \vspace{0pt}
  \includegraphics[scale=0.5]{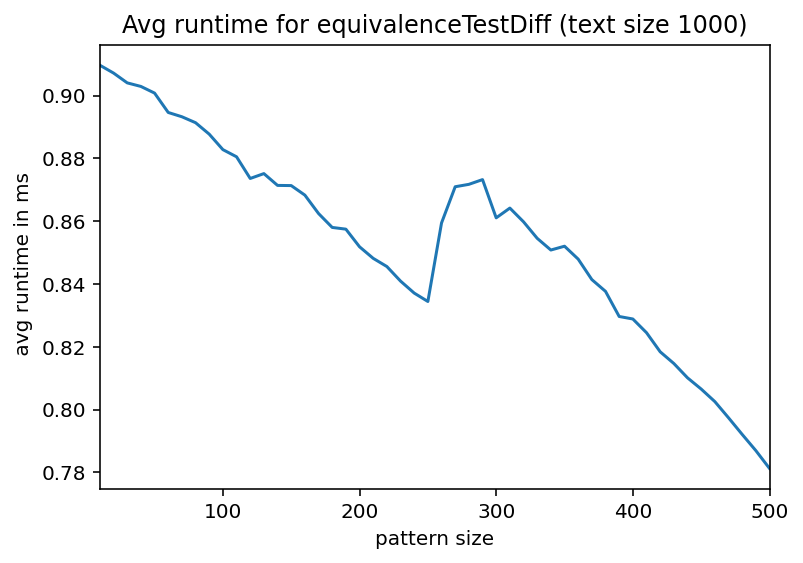}
  \end{minipage}
  \caption{Average run time of the \algoName{equivalenceTestDiff} algorithm. The above graphs showcase the results for approximate Cartesian tree matching with up to one mismatch, but insertion and deletion yield similar results.
  \label{fig:av_diff_runtime}}
\end{figure}

\begin{figure}[!ht]
  \begin{minipage}[t]{0.49\textwidth}
  \vspace{0pt}
  \includegraphics[scale=0.5]{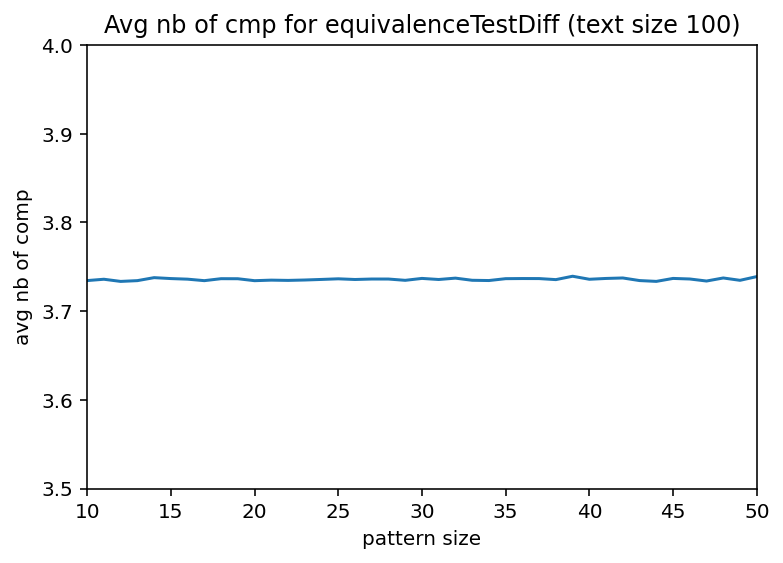}
  \end{minipage}
  \hfill
  \begin{minipage}[t]{0.49\textwidth}
  \vspace{0pt}
  \includegraphics[scale=0.5]{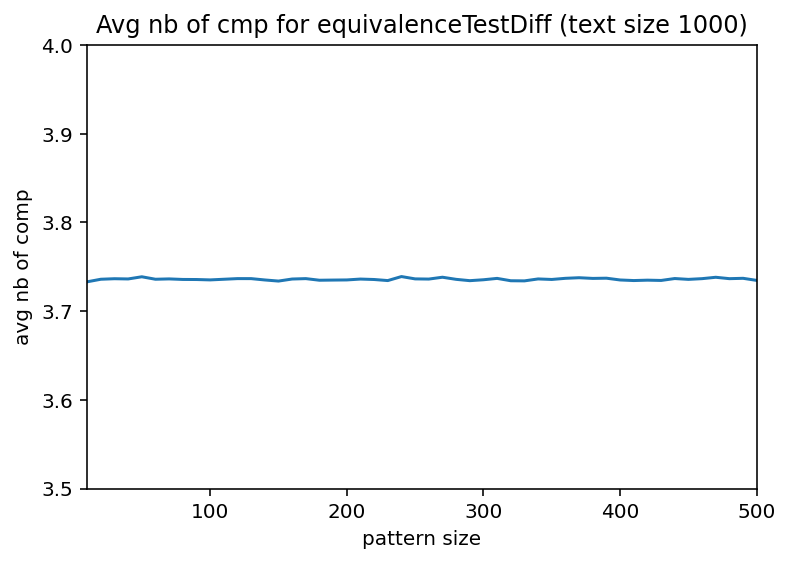}
  \end{minipage}
  \caption{Average number of comparisons per window starting position performed by the \algoName{equivalenceTestDiff} algorithm. The above graphs showcase the results for approximate Cartesian tree matching with up to one mismatch, but insertion and deletion yield similar results. Once again, one can see the average number of comparisons tends to a constant.
  \label{fig:av_diff_cmp}}
\end{figure}

\section{Perspectives \label{sec:persp}}
From the pattern matching point of view, the first step would be to generalize
our result to sequences with a partial order instead of a total one.

The question of whether the Aho-Corasick method could be adapted to errors 
like a mismatch, insertion or deletion is also still open.

Then, it could be interesting to obtain a general method, where the number
of swaps is given as a parameter. Though, we fear that if too many
swaps are applied, the result loses its interest, even though the complexity 
might grow rapidly.

\section*{Acknowledgements \label{sec:acknow}}
We would like to thank Julien Courtiel for his comments on the average analysis of the MetaAlgorithm.
We also thank Simone Faro for fruitful discussions on approximate Cartesian tree matching with one 
 mismatch, one insertion and one deletion.
B. Auvray, J. David, R. Groult and T. Lecroq were supported by the CNRS NormaSTIC federation.

\bibliographystyle{splncs04}
\bibliography{biblio}



\end{document}